\documentclass[11pt]{article}
\usepackage{latexsym, amscd, amsfonts, mathrsfs, amsmath, amssymb, amsthm, stmaryrd, tikz-cd, mathrsfs, bbm, url, esint, todonotes, algorithm, fullpage, thm-restate, comment}
\usepackage[noend]{algpseudocode}

\usepackage[hidelinks]{hyperref}
\def\cA{\mathcal{A}}
\def\cB{\mathcal{B}}
\def\bE{\mathbb{E}}

\def\bP{\mathbb{P}}
\def\bZ{\mathbb{Z}}

\DeclareMathOperator\BSC{\mathrm{BSC}}

\DeclareMathOperator\Var{\mathrm{Var}}

\DeclareMathOperator{\polylog}{polylog}
\DeclareMathOperator\Bern{\mathrm{Bern}}

\newcommand{\rom}[1]{\textup{\uppercase\expandafter{\romannumeral#1}}}

\pagestyle{plain}

\newtheorem{thm}{Theorem}
\newtheorem{lemma}[thm]{Lemma}

\newtheorem{coro}[thm]{Corollary}

\theoremstyle{definition}

\newtheorem{defn}[thm]{Definition}

\newtheorem{notn}[thm]{Notation}

\newtheorem{rmk}[thm]{Remark}

\newcommand{\NoisySorting}[1]{$\mathsf{NoisySorting}(#1)$}
\newcommand{\NoisyBinarySearch}[1]{$\mathsf{NoisyBinarySearch}(#1)$}
\newcommand{\GroupSorting}[2]{$\mathsf{GroupSorting}(#1, #2)$}
\newcommand{\OracleNoisyBinarySearch}[1]{$\mathsf{OracleNoisyBinarySearch}(#1)$}
\newcommand{\NoisyComparison}[2]{$\textsc{NoisyComparison}(#1, #2)$}

\begin{document}
\title{Optimal Bounds for Noisy Sorting}
\author{Yuzhou Gu \\ MIT \\ yuzhougu@mit.edu \and Yinzhan Xu \\ MIT \\ xyzhan@mit.edu}
\date{}

\setcounter{page}{0} \clearpage
\maketitle
\thispagestyle{empty}

\begin{abstract}
Sorting is a fundamental problem in computer science. In the classical setting, it is well-known that $(1\pm o(1)) n\log_2 n$ comparisons are both necessary and sufficient to sort a list of $n$ elements. In this paper, we study the Noisy Sorting problem, where each comparison result is flipped independently with probability $p$ for some fixed $p\in (0, \frac 12)$. As our main result, we show that
$$(1\pm o(1)) \left( \frac{1}{I(p)} + \frac{1}{(1-2p) \log_2 \left(\frac{1-p}p\right)} \right) n\log_2 n$$
noisy comparisons are both necessary and sufficient to sort $n$ elements with error probability $o(1)$ using noisy comparisons, where $I(p)=1 + p\log_2 p+(1-p)\log_2 (1-p)$ is capacity of BSC channel with crossover probability $p$. This simultaneously improves the previous best lower and upper bounds (Wang, Ghaddar and Wang, ISIT 2022) for this problem.

For the related Noisy Binary Search problem, we show that 
$$
    (1\pm o(1)) \left((1-\delta)\frac{\log_2(n)}{I(p)} + \frac{2 \log_2 \left(\frac 1\delta\right)}{(1-2p)\log_2\left(\frac {1-p}p\right)}\right)
$$
noisy comparisons are both necessary and sufficient to find the predecessor of an element among $n$ sorted elements with error probability $\delta$.
This extends the previous bounds of (Burnashev and Zigangirov, 1974), which are only tight for $\delta = 1/n^{o(1)}$.
\end{abstract}

\newpage

\section{Introduction}

Sorting is one of the most fundamental problems in computer science. Among the wide variety of  sorting algorithms, an important class is comparison-based sorting. It is well-known that sorting a list of $n$ numbers requires $(1-o(1))n\log n$ comparisons (see e.g., \cite{CLRS}), even if the input list is a random permutation.\footnote{All logarithms in the paper are of base $2$. } A good number of algorithms require only $(1+o(1))n\log n$ comparisons, such as binary insertion sort, merge sort (see e.g., \cite{knuth1997art}), and merge insertion sort \cite{ford1959tournament}. 

Sorting is run on many real-world applications with large data sets. %
Thus, it is important to consider faults that arise unavoidably in large systems. For comparison-based sorting, this  means that the result of a comparison could be incorrect. We consider one well-studied model for simulating such faults, studied in, e.g., \cite{horstein1963sequential,gal1978stochastic, pelc1989searching,berry1986discrete,feige1994computing,braverman2016parallel}. In this model, the result of each query (or noisy comparison in the context of sorting) is flipped independently with probability $p$ for some fixed $p\in (0, \frac 12)$ and repeated queries are allowed. Let us call it the noisy model. In this model, we define \NoisySorting{n} as the task to sort $n$ elements using noisy comparisons. 
As this model inherently has errors and no algorithm can always produce correct outputs,  we consider algorithms with $o(1)$ error probability. 

If we repeat each noisy comparison $\Theta(\log n)$ times, the majority of the returned results is the correct comparison result with high probability. Therefore, we can modify any classical comparison-based sorting algorithm with $O(n \log n)$ comparisons, by replacing each comparison with $\Theta(\log n)$ noisy comparisons, to get an $O(n \log^2 n)$-comparison algorithm for Noisy Sorting that succeeds with $1-o(1)$ probability. 

In fact, better algorithms were known. Feige, Raghavan, Peleg, and Upfal \cite{feige1994computing} gave an algorithm for \NoisySorting{n} with only $O(n \log n)$ noisy comparisons. Recently, Wang, Ghaddar, Wang \cite{wang2022noisy} analyzed the constant factor hidden in \cite{feige1994computing}'s algorithm,\footnote{The actual constant is quite complicated, see \cite{wang2022noisy} for details.} and gave an improved upper bound of $(1+o(1))  \frac{2}{\log\left(\frac{1}{\frac{1}{2}+\sqrt{p(1-p)}}\right)} n \log n$ noisy comparisons. 
They also showed that $(1-o(1))  \frac{1}{I(p)} n \log n$ comparisons are necessary for any algorithm with $o(1)$ error probability, where $I(p) = 1 - h(p)$ is the capacity of BSC channel with crossover probability $p$ (which is also the amount of information each noisy comparison gives) and $h(p) = -p \log p - (1-p) \log (1-p)$ is the binary entropy function. 

Despite these progresses, there is still a big gap between the upper and lower bounds. For instance, when $p = 0.1$, the constant in front of $n \log n$ for the lower bound is roughly $1.88322$, whereas the constant for the upper bound is roughly $6.21257$, more than three times as large. It is a fascinating question to narrow down this gap, or even close it like the case of classical sorting. 

As the main result of our paper, we show that it is indeed possible to close this gap by giving both an improved upper bound and an improved lower bound:
\begin{restatable}[Noisy Sorting Upper Bound]{thm}{SortUpperBound}
\label{thm:sort-upperbound}
There exists a  deterministic algorithm for \NoisySorting{n} with error probability $o(1)$ which always uses at most $$(1+o(1))\left(\frac{1}{I(p)} +\frac{1}{(1-2p) \log \frac {1-p}p}\right) n\log n$$ noisy comparisons.
\end{restatable}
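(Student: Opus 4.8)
The plan is to combine a noisy version of binary insertion sort, run only to a \emph{mild} per-element accuracy, with a single global verification pass that supplies the remaining confidence. Running insertion sort to full accuracy would cost $(1+o(1))\big(\frac1{I(p)}+\frac2{(1-2p)\log\frac{1-p}p}\big)n\log n$: the spurious factor $2$ on the second term is the price of confirming, for each inserted element, that it beats \emph{both} of its current neighbours, exactly as in the tight bound for \NoisyBinarySearch{n}. Deferring all confirmation to one pass over the $n-1$ pairs that are adjacent in the \emph{final} order removes this factor, because each final adjacency then gets confirmed exactly once, at cost $(1+o(1))\tfrac{\log n}{(1-2p)\log\frac{1-p}p}$ apiece.

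\emph{Phase 1 (approximate insertion sort).} Fix $\delta=\delta(n)\to 0$ with $\log(1/\delta)=o(\log n)$, e.g.\ $\delta=1/(\log n)^{10}$. Insert the elements one at a time; to place the $i$-th element, run standard noisy binary search on the current list of $i-1$ elements (keep a posterior over the $i$ gaps, query the posterior-weighted median, update by Bayes, stop once some gap has posterior mass $\ge 1-\delta$, and abort the insertion after a constant factor more than its expected number of queries). The insertion then uses $(1+o(1))\tfrac{\log i}{I(p)}+O\!\big(\tfrac{\log(1/\delta)}{(1-2p)\log\frac{1-p}p}\big)$ noisy comparisons and fails (wrong gap, or aborted) with probability $\le 2\delta$. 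Summing $\sum_{i\le n}\log i=(1+o(1))n\log n$ and $\sum_{i\le n}\log(1/\delta)=o(n\log n)$, Phase~1 uses $(1+o(1))\tfrac{n\log n}{I(p)}$ noisy comparisons. The number of failed insertions is stochastically dominated by $\mathrm{Bin}(n,2\delta)$, hence is $o(n)$ with probability $1-o(1)$; and since a failed insertion displaces its element by $k$ gaps with probability decaying at least geometrically in $k$, the number of \emph{misplaced} elements (those not in their correct position relative to the correctly handled ones) is likewise $o(n)$ with probability $1-o(1)$.

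\emph{Phase 2 (verification and repair).} Let $x_{[1]},\dots,x_{[n]}$ be the current list. For each $i$ run a sequential probability ratio test on $(x_{[i]},x_{[i+1]})$: make noisy comparisons, track the log-likelihood ratio, and stop on reaching $+\log(n\log n)$ (verdict ``in order'') or $-\log(n\log n)$ (verdict ``needs repair''). This walk has drift $\pm(1-2p)\log\frac{1-p}p$ according to the true relation of $x_{[i]},x_{[i+1]}$, so each test terminates within $(1+o(1))\tfrac{\log n}{(1-2p)\log\frac{1-p}p}$ comparisons and errs with probability $\le 1/(n\log n)$. If all $n-1$ verdicts read ``in order'', then every adjacent pair of the output is correctly ordered, so the list is sorted; output it. Otherwise each bad verdict locates a misplaced element: extract it, re-insert it by noisy binary search at accuracy $1/(n\log n)$ (cost $O(\log n)$), re-run the $O(1)$ tests on adjacencies that changed, and iterate. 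As only $o(n)$ elements are misplaced, the total repair cost is $o(n)\cdot O(\log n)=o(n\log n)$. All failure events --- the Phase~1 concentration bounds, each verification test, each re-insertion, and the budget overflow below --- together have probability $o(1)$ by a union bound. Finally, hard-capping the whole algorithm at $(1+o(1))\big(\frac1{I(p)}+\frac1{(1-2p)\log\frac{1-p}p}\big)n\log n$ comparisons (and outputting arbitrarily if the cap is hit) turns this into a deterministic worst-case count while changing the error probability by only $o(1)$; adding the three contributions gives the theorem.

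\emph{Main obstacle.} The crux is the accounting around $\delta$: it must be small enough that the number of misplaced elements after Phase~1 is $o(n)$ (so the $O(\log n)$-per-element repair phase is lower order) yet large enough that $\sum_i\log(1/\delta)=o(n\log n)$ (so the per-insertion confirmation term does not inflate the coefficient of the second term), which forces $\delta=n^{-o(1)}$ and requires genuine control on the tail of the displacement produced by a failed noisy binary search. Secondary technical points are: converting the expected query counts of noisy binary search into the required ``always uses at most'' bound via per-insertion truncation together with the global cap; checking via a Wald-type estimate that the Phase~2 SPRT truly \emph{detects} a wrong-way pair rather than mis-concluding; and verifying that the repair loop terminates and never breaks a previously verified adjacency, so that the $n-1$ ``in order'' verdicts indeed certify a sorted output.
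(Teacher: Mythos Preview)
Your high-level intuition --- run a cheap pass that costs $(1+o(1))\tfrac{n\log n}{I(p)}$ and leaves only $o(n)$ ``damage,'' then spend $(1+o(1))\tfrac{\log n}{(1-2p)\log\frac{1-p}{p}}$ per final adjacency to certify the order --- is the right one, and the paper follows the same philosophy. But your specific decomposition has a genuine gap, precisely at the step you label ``secondary.'' Both your Phase~1 insertions (after the first failure) and your Phase~2 re-insertions run noisy binary search into a list that is \emph{not sorted}. The $\le\delta$ error guarantee of posterior-based binary search presupposes a sorted array; once $L_{i-1}$ has inversions the posterior no longer tracks the true gap, so neither your $\mathrm{Bin}(n,2\delta)$ domination nor the geometric displacement tail is justified, and in the repair loop a re-insertion at claimed accuracy $1/(n\log n)$ has no such accuracy. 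Concretely: from $3,5,4,1,2,6$ the adjacent inversions are $(5,4)$ and $(4,1)$; extracting either endpoint of either bad pair still leaves an unsorted list, into which you then propose to binary-search. The termination and $o(n)$-repair claims do not follow from what you have written.

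The paper sidesteps both issues by bucketing rather than inserting. It samples $\approx n/\log n$ pivots, sorts them once (lower-order cost), and sends every other element to its bucket by a single binary search into this \emph{fixed, sorted} pivot set at accuracy $1/\log n$ --- so misplacements are genuinely independent $\Bern(1/\log n)$ events, with no cascading. Each bucket of $\polylog n$ elements is then sorted by a rough sort followed by an adjacent-swap pass (Lemma~\ref{lem:sort_inversion}) at accuracy $1/(n\log n)$; it is this adjacent-swap pass, not a separate SPRT-then-reinsert loop, that supplies the $(1+o(1))\tfrac{n\log n}{(1-2p)\log\frac{1-p}{p}}$ term, and unlike binary search it is correct on nearly-sorted input. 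The $O(n/\log n)$ misfiled elements are detected by comparison against bucket endpoints and re-inserted one by one into a list that is by then certified sorted. A smaller point you also elide: the known binary-search routines attaining the $(1+o(1))\log n/I(p)$ rate are randomized; the paper derandomizes by manufacturing unbiased bits from pairs of noisy comparisons at $O(n)$ extra query cost.
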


\begin{restatable}[Noisy Sorting Lower Bound]{thm}{SortLowerBound}
\label{thm:sort-lowerbound}
Any (deterministic or randomized) algorithm for \NoisySorting{n} with error probability $o(1)$ must make at least $$(1-o(1)) \left(\frac 1{I(p)} + \frac 1{(1-2p) \log \frac{1-p}p}\right) n\log n$$ noisy comparisons in expectation, even if the input is a uniformly random permutation.
\end{restatable}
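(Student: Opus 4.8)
The plan is to prove the two summands by separate arguments and then add them, checking that the comparisons each argument ``charges'' are disjoint. Fix an algorithm with error probability $o(1)$ on a uniformly random permutation $\Pi$; let $\tau$ be its (random) transcript, $T$ the number of comparisons, and $\rho_t$ the posterior law of $\Pi$ after $t$ comparisons. For a pair $e=\{a,b\}$ write $B_e=\mathbbm 1[a<_\Pi b]$ and let $T_e$ be the number of times $e$ is compared; call $e$ \emph{adjacent} if $a,b$ are consecutive in $\Pi$-order, so there are exactly $n-1$ adjacent pairs and $M:=\sum_{e\ \mathrm{adjacent}}T_e\le T$. For the $\frac1{I(p)}$ term I would use the standard information-theoretic bound: since the output equals $\Pi$ with probability $1-o(1)$, Fano together with $\log(n!)=(1-o(1))n\log n$ gives $I(\Pi;\tau)\ge(1-o(1))n\log n$; each answer is the bit $B_e$ through $\BSC(p)$, so by the chain rule (with the usual care for the adaptive stopping time) $I(\Pi;\tau)=\sum_e\iota_e$, where $\iota_e$ is the information the $e$-comparisons reveal about $\Pi$, and $\iota_e\le I(p)\,\bE[T_e]$ (capacity) while also $\iota_e\le I(B_e;\tau)\le H(B_e)\le 1$ (a single bit is learned at most once, by nonnegativity of mutual information). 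The crucial point is that the adjacent pairs can together contribute only $O(n)=o(n\log n)$ to $I(\Pi;\tau)$, so the non-adjacent comparisons must carry $(1-o(1))n\log n$ bits at rate $\le I(p)$ each, giving $\bE[T-M]\ge(1-o(1))\frac{n\log n}{I(p)}$.

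For the $\frac1{(1-2p)\log\frac{1-p}p}$ term (write $D=(1-2p)\log\frac{1-p}p$) I would run a log-likelihood-ratio martingale. For $j\in[n-1]$ let $\Pi^{(j)}$ be $\Pi$ with its rank-$j$ and rank-$(j+1)$ elements transposed and set $\phi^{(j)}_t=\log\frac{\rho_t(\Pi)}{\rho_t(\Pi^{(j)})}$, $\Phi_t=\sum_j\phi^{(j)}_t$. Then $\Phi_0=0$, and since $\Pi$ and $\Pi^{(j)}$ differ only on the $j$-th adjacent pair, $\Phi_t$ moves only when an adjacent pair is compared, and then by $\pm\log\frac{1-p}p$ with the ``$+$'' having probability $1-p$; hence $\Phi_t-D\,A_t$ is a martingale with increments bounded by $2\log\frac{1-p}p$, where $A_t$ is the number of adjacent comparisons among the first $t$. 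Taking the algorithm to output the MAP estimate, success forces $\rho_T(\Pi)\ge1-o(1)$ for a $1-o(1)$ fraction of $(\Pi,\tau)$; then $\sum_j2^{-\phi^{(j)}_T}=\sum_j\rho_T(\Pi^{(j)})/\rho_T(\Pi)=o(1)$, and convexity of $2^{-x}$ (Jensen) yields $\Phi_T\ge(n-1)\log\frac{n-1}{o(1)}\ge(1-o(1))n\log n$. Azuma's maximal inequality for the martingale then shows that, for every constant $\eta>0$, having $M=A_T<(1-\eta)\frac{n\log n}{D}$ is incompatible with $\Phi_T\ge(1-o(1))n\log n$ except with probability $o(1)$; hence $\bE[M]\ge(1-o(1))\frac{n\log n}{D}$.

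Adding the two bounds, $\bE[T]=\bE[T-M]+\bE[M]\ge(1-o(1))\big(\frac1{I(p)}+\frac1{(1-2p)\log\frac{1-p}p}\big)n\log n$, which is the claim. The step I expect to be the main obstacle — and what makes the bound add rather than only take a maximum — is the separation used in the first paragraph: one must make rigorous that the $n-1$ ``adjacent-pair'' bits carry only $o(n\log n)$ of the total information, \emph{despite} (i) the comparisons being adaptive, so the relevant conditional mutual informations involve a random interleaving of queries and a random stopping time, and (ii) ``adjacency'' being a property of the unknown $\Pi$, so that the partition of comparisons into adjacent and non-adjacent is itself random and correlated with the algorithm's behaviour (for a fixed pair $e$ one has $\bP[e\ \mathrm{adjacent}]=2/n$ and $\iota_e\le1$, so each pair contributes $\le 2/n$ to the ``adjacent part'' of $I(\Pi;\tau)$ and hence $\le n-1$ in total — but formalizing this requires carefully folding the adjacency indicators into the conditioning). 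Making the martingale concentration rigorous with a random number of active steps is a second, more routine, technical point; Fano, the chain-rule identities, Stirling, and optional stopping are standard.
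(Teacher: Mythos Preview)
Your second part---the log-likelihood-ratio/martingale argument for $\bE[M]$---is essentially the paper's Lemma~4.5 (the paper uses the set $S$ of within-group adjacent pairs and a simple Markov argument rather than Azuma, but the mechanism is the same). The difficulty is entirely in the first part, and the obstacle you flag is real and, as stated, fatal.

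The decomposition $I(\Pi;\tau)=\sum_e\iota_e$ with each $\iota_e$ a deterministic number does not let you ``sum over the adjacent $e$'': adjacency is a function of $\Pi$, so the set of adjacent pairs is random and correlated with the algorithm's behaviour. Your proposed fix---``$\bP[e\text{ adjacent}]=2/n$ and $\iota_e\le 1$, hence each pair contributes $\le 2/n$''---replaces a sum over a random set by the deterministic weighting $\sum_e\bP[e\text{ adjacent}]\iota_e$, but that quantity is not the information carried by adjacent comparisons; and on the complementary side, there is no reason $\sum_e(1-\bP[e\text{ adjacent}])\iota_e\le I(p)\,\bE[T-M]$, because $A_e$ and $T_e$ are correlated (you cannot pass from $\iota_e\le I(p)\,\bE[T_e]$ to a bound involving $\bE[(1-A_e)T_e]$). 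In short, you never obtain an inequality of the form $I(\Pi;\tau)\le I(p)\,\bE[T-M]+O(n)$, which is what the additive bound needs.

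The paper sidesteps this with a genuinely different idea: it introduces an auxiliary \emph{Group Sorting} problem with groups of size $k=\log n$ (by rank), in which a query to a same-group pair returns a special symbol $*$ and the goal is only to recover the ordered list of groups. Any Noisy Sorting algorithm $\cA$ is then simulated inside this problem by fixing an arbitrary within-group order $<_T$ and feeding $\cA$ fake answers $\BSC_p(\mathbbm 1[x<_T y])$ for same-group queries. This produces a Group Sorting algorithm making at most $U^{\ne}+(n-n/k)$ queries, where $U^{\ne}$ is the number of \emph{cross-group} comparisons of $\cA$. A direct Fano-plus-chain-rule argument on Group Sorting (where the $*$ answers are now visible to the analysis and occur at most $n-n/k$ times) gives $\bE[U^{\ne}]\ge(1-o(1))\frac{n\log n}{I(p)}$. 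Adding this to $\bE[U^{=}]\ge(1-o(1))\frac{n\log n}{D}$ (your second part, applied to within-group adjacent pairs) finishes. The reduction is the missing ingredient: it converts the random, $\Pi$-dependent partition into the structure of a well-posed problem on which the information bound can be run cleanly.
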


To compare  with previous works more quantitatively, our constant is roughly $2.27755$ for $p = 0.1$. When $p$ approaches $0$, i.e., when the noisy comparisons are almost errorless, our bounds approach $(1\pm o(1)) n \log n$, matching the bounds of classical sorting.

\paragraph{Noisy Binary Search. } We also study the  closely related Noisy Binary Search problem. In \NoisyBinarySearch{n}, we are given a sorted list of $n$ elements and an element $x$, and the goal is to find the predecessor of $x$ in the sorted list. Noisy Binary Search is well-studied \cite{horstein1963sequential, burnashev1974interval, feige1994computing, ben2008bayesian, pelc1989searching, dereniowski2021noisy} in the noisy model we consider, as well as in some other models~\cite{aslam1991searching, borgstrom1993comparison, dhagat1992playing, muthukrishnan1994optimal, rivest1980coping}.

Both previous algorithms \cite{feige1994computing,wang2022noisy} for \NoisySorting{n} use binary insertion sort at the high level. More specifically, the algorithms keep a sorted array of the first $i$ elements for $i$ from $1$ to $n$. When the $i$-th element needs to be inserted for $i > 1$, the algorithms use some implementations of \NoisyBinarySearch{i-1} to find the correct position to insert it, with error probability $o(1 / n)$ (it also suffices to have average error probability $o(1 / n)$). A simple union bound then implies that the error probability of the sorting algorithm is $o(1)$. 
The difference between the two algorithms is that~\cite{feige1994computing} uses an algorithm for Noisy Binary Search based on random walk, while \cite{wang2022noisy} builds on Burnashev and Zigangirov's algorithm \cite{burnashev1974interval}.

In fact, a variable-length version of Burnashev and Zigangirov's algorithm \cite{burnashev1974interval} achieves optimal query complexity for the Noisy Binary Search problem.\footnote{The original paper is in Russian. See \cite[Theorem 5]{wang2023noisy} for an English version.} They proved that \NoisyBinarySearch{n} with error probability $\delta$ can be solved using at most $\frac{\log n + \log (1/\delta) + \log((1-p)/p)}{ I(p) }$ noisy comparisons in expectation (we remark that their algorithm is randomized and assumes uniform prior). Information theoretical lower bound (e.g.,~\cite[Theorem B.1]{ben2008bayesian}) shows that $(1-\delta-o(1))\frac{\log n}{I(p)}$ noisy comparisons are necessary. These bounds essentially match when $1/n^{o(1)} \le \delta \le o(1)$. However, for the application of Noisy Sorting, we must have $\delta = o(1/ n)$ on average.
In this case, the lower and upper bounds do not match.

Dereniowski, {\L}ukasiewicz,  and Uzna{\'n}ski \cite{dereniowski2021noisy} designed alternative algorithms  Noisy Binary Search in various settings. They also considered a more general problem called  Graph Search.

We remark that Ben-Or and Hassidim \cite{ben2008bayesian}, likely unaware of \cite{burnashev1974interval}, claims an algorithm for solving \NoisyBinarySearch{n} with error probability $\delta$ using at most $(1-\delta-o(1))\frac{\log n + O(\log(1/\delta))}{ I(p) }$ noisy comparisons in expectation. However, as pointed out by \cite{dereniowski2021noisy}, there are potential issues in  \cite{ben2008bayesian}'s proof.

By using a similar technique as our lower bound for Noisy Sorting, we are also able to improve the lower bound for Noisy Binary Search for $\delta = 1/n^{\Omega(1)}$.
\begin{restatable}[Noisy Binary Search Lower Bound]{thm}{SearchLowerbound}
\label{thm:binarysearch-lowerbound}
Any (deterministic or randomized) algorithm for \NoisyBinarySearch{n} with error probability $\le \delta$ must make at least \begin{align*}
(1-o(1)) \left((1-\delta)\frac{\log n}{I(p)} + \frac{2 \log \frac 1\delta}{(1-2p)\log\frac {1-p}p}\right)
\end{align*}
noisy comparisons in expectation, even if the position of the element to search for is uniformly random.
\end{restatable}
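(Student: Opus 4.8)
The plan is to establish the two summands by separate arguments and then show that they \emph{add}: the first is an information-theoretic (Fano) bound, the second a change-of-measure (two-point) bound, and the non-trivial part is that the queries responsible for the two costs are essentially disjoint, so the bounds may be summed rather than merely maximised. Model the answer as a uniformly random index $V$ among the $\Theta(n)$ possible predecessors, $V=v$ meaning the target lies between list elements $v$ and $v+1$. The algorithm produces an adaptive transcript $T=(Q_1,A_1,Q_2,\dots)$, where $Q_t$ is a function of $T_{<t}$ and private randomness and $A_t$ equals the true comparison bit $b_t=[\,V\le Q_t-1\,]$ flipped independently with probability $p$; it halts after $Q$ queries and outputs $\hat V=\hat V(T)$ with $\bP[\hat V\ne v\mid V=v]\le\delta$. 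Write $\pi_t(\cdot)=\bP[V=\cdot\mid T_{\le t}]$ and $\beta_t=\pi_{t-1}(\{j:j\le Q_t-1\})$. Two facts drive everything: the exact identity $I(V;A_t\mid T_{<t},Q_t)=h\!\left(p+(1-2p)\beta_t\right)-h(p)$, which lies in $[0,I(p)]$ and is close to $0$ when $\beta_t$ is close to $0$ or $1$; and the observation that the unique query index whose noiseless answer distinguishes $V=v$ from $V=v+1$ is $v+1$ (and the unique one distinguishing $v$ from $v-1$ is $v$), since shifting the answer by one position flips exactly one comparison.

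\emph{First term.} Since $\hat V(T)$ recovers $V$ with error $\le\delta$, Fano's inequality gives $I(V;T)\ge(1-\delta)\log n-h(\delta)$; the identity above plus the chain rule give $I(V;T)\le I(p)\,\bE[Q]$; hence $\bE[Q]\ge(1-o(1))(1-\delta)\log n/I(p)$. \emph{Second term.} Fix the true value $v$ and let $P_v$ be the law of $T$. Because the querying policy is common to $P_v$ and $P_{v+1}$, all per-step KL factors from indices other than $v+1$ cancel, and the chain rule for KL divergence gives
\[
\KL(P_v\,\Vert\,P_{v+1})=\bE_{P_v}[N_{v+1}]\cdot\KL(\Bern(1-p)\,\Vert\,\Bern(p)),
\]
where $N_q$ counts queries at index $q$ and $\KL(\Bern(1-p)\,\Vert\,\Bern(p))=(1-2p)\log\frac{1-p}{p}$. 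Since $\bP_{P_v}[\hat V=v]\ge1-\delta$ while $\bP_{P_{v+1}}[\hat V=v]\le\delta$, the data-processing inequality forces $\KL(P_v\Vert P_{v+1})\ge\KL(\Bern(1-\delta)\Vert\Bern(\delta))=(1-o(1))\log(1/\delta)$, whence $\bE_{P_v}[N_{v+1}]\ge(1-o(1))\frac{\log(1/\delta)}{(1-2p)\log((1-p)/p)}$, and symmetrically for $\bE_{P_v}[N_v]$. Averaging over the uniform $V$, the expected number of queries at the two ``boundary'' indices $V$ and $V+1$ is at least $(2-o(1))\frac{\log(1/\delta)}{(1-2p)\log((1-p)/p)}$.

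\emph{Combining --- the crux.} A boundary query can a priori also help pin down the coarse location of $V$, so the two bounds overlap and cannot simply be added; one must separate the information-gathering queries from the confidence-boosting ones. The plan is to split the run at the stopping time $\tau=\min\{t:\max_q\pi_t(q)\ge1-\eta\}$, with $\eta=\eta(\delta)\to0$ chosen so that $\log(1/\eta)=o(\log(1/\delta))$ (e.g.\ $\eta=2^{-\sqrt{\log(1/\delta)}}$). After $\tau$, every query has $\beta_t\le\eta$ or $\beta_t\ge1-\eta$, so $I(V;A_t\mid T_{<t})\le h(p+(1-2p)\eta)-h(p)=o(1)$, whence $I(V;T)\le I(p)\,\bE[\tau]+o(1)\,\bE[Q-\tau]$. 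Conditioning the two-point argument on $T_{\le\tau}$ --- and using that at time $\tau$ the log-likelihood ratio of the true answer against either neighbour is still only $O(\log(1/\eta))=o(\log(1/\delta))$ --- shows moreover that $(2-o(1))\frac{\log(1/\delta)}{(1-2p)\log((1-p)/p)}$ of the \emph{post-$\tau$} queries must fall at indices $V,V+1$, i.e.\ $\bE[Q-\tau]$ is at least that much. Feeding these two estimates and Fano's $I(V;T)\ge(1-\delta)\log n-h(\delta)$ into $\bE[Q]=\bE[\tau]+\bE[Q-\tau]$ and solving for $\bE[Q]$ --- the $o(1)$ cross-terms are absorbed because the target is $\Omega(\log n)$ --- gives the claimed bound.

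The genuinely delicate points, where the real work lies, are: (i) showing the posterior stays concentrated for $t>\tau$ (a single query can drop a mode's mass only from $1-\eta$ to $1-O(\eta)$, and the $O(\delta/\eta)$-probability event on which concentration is lost contributes negligibly); (ii) localising the KL lower bound cleanly to the second phase (so that pre-$\tau$ boundary queries are not already credited with buying the confidence); and (iii) passing from worst-case to average error $\delta$ by discarding the $o(1)$ fraction of positions whose individual error exceeds $\delta^{1-o(1)}$. The remaining ingredients --- variable-length Fano (conditioning on the algorithm's internal randomness so the $Q_t$-terms in the chain rule vanish), the KL chain-rule decomposition, and tracking the $o(1)$ terms --- are routine.
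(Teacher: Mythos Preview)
Your two individual bounds (Fano for the first summand, two-point KL for the second) are correct and match the paper's. The divergence is in how you make them \emph{add}. The paper does not split by time; it splits the queries themselves into those at the two neighbouring indices $V,V{+}1$ (call their count $U^{\approx}$) and all others ($U^{\not\approx}$). These sets are literally disjoint, so $\bE[Q]=\bE[U^{\approx}]+\bE[U^{\not\approx}]$ and it suffices to lower-bound each piece separately. The two-point/posterior argument gives $\bE[U^{\approx}]\ge(2-o(1))\frac{\log(1/\delta)}{(1-2p)\log((1-p)/p)}$ directly. For $\bE[U^{\not\approx}]$ the paper introduces an \emph{oracle} variant of the search problem in which a query at index $V$ or $V{+}1$ returns a special symbol revealing the answer outright; any Noisy Binary Search algorithm yields an oracle algorithm using at most $\bE[U^{\not\approx}]+1$ queries (just halt on the first special symbol), and a Fano argument on the oracle problem---where the single special answer contributes only $O(1)$ information---gives $\bE[U^{\not\approx}]\ge(1-\delta-o(1))\frac{\log n}{I(p)}$. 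No timing assumption is needed.

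Your stopping-time split has a genuine gap at point (ii). The assertion that ``at time $\tau$ the log-likelihood ratio of the true answer against either neighbour is only $O(\log(1/\eta))$'' does not follow from $\max_q\pi_\tau(q)\ge1-\eta$: that threshold caps the \emph{total} residual mass but says nothing about how it is divided among $v{-}1$, $v{+}1$, and far-away indices. Here is a concrete algorithm with error $\le\delta$ for which your claimed inequality $\bE[Q-\tau]\ge(2-o(1))\frac{\log(1/\delta)}{(1-2p)\log((1-p)/p)}$ fails: sweep indices $j=1,2,\ldots$, querying each $K=\Theta\big(\frac{\log(n/\delta)}{(1-2p)\log((1-p)/p)}\big)$ times, and halt (outputting the mode) as soon as $\max_q\pi(q)\ge1-\delta$. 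For $V=v$ the posterior first concentrates (your $\tau$) during the round at index $v{+}1$, but by then the $K$ queries already made at index $v$ have pushed the LLR against $v{-}1$ to $\Theta(\log(n/\delta))\gg\log(1/\eta)$; the algorithm then halts after only $(1+o(1))\frac{\log(1/\delta)}{(1-2p)\log((1-p)/p)}$ further queries (all at $v{+}1$), so $\bE[Q-\tau]$ is only about half of your claimed lower bound. This algorithm is of course wildly suboptimal in $\bE[\tau]$, so the theorem is still true for it---but your additive combination via ``$\bE[\tau]\ge\cdots$'' plus ``$\bE[Q-\tau]\ge\cdots$'' breaks, because the second inequality is simply false. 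The paper's query-type split sidesteps this entirely: whenever a neighbour query occurs, early or late, it sits in $U^{\approx}$ and is counted there, with no double-counting and no need to control when it happens.
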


This lower bound indicates that, any algorithm for \NoisySorting{n} purely based on binary insertion sort, which requires $n-1$ executions of Noisy Binary Search with average error probability $o(1/n)$,
needs at least $(1-o(1)) \left(\frac{1}{I(p)}+\frac{2}{(1-2p)\log \frac{1-p}{p}}\right) n \log n$ noisy comparisons in expectation to achieve error probability $o(1)$.
In light of Theorem~\ref{thm:sort-upperbound}, we see that any algorithm for Noisy Sorting purely based on binary insertion sort  cannot be optimal.

We also show an algorithm for \NoisyBinarySearch{n} that matches with the lower bound in Theorem~\ref{thm:binarysearch-lowerbound}:

\begin{restatable}[Noisy Binary Search Upper Bound]{thm}{SearchUpperBound}
\label{thm:binarysearch-upperbound}
There exists a randomized algorithm for \NoisyBinarySearch{n} with error probability $\le \delta$ that makes at most
\begin{align*}
(1+o(1)) \left((1-\delta)\frac{\log n}{I(p)} + \frac{2 \log \frac 1\delta}{(1-2p)\log\frac {1-p}p}\right)
\end{align*}
noisy comparisons in expectation.
\end{restatable}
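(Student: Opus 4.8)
The plan is to reduce Theorem~\ref{thm:binarysearch-upperbound} to analyzing a ``core'' variable-length procedure $\mathcal A(\eta)$ with error probability $\le\eta$ and expected query complexity
$$\bE[\#\text{queries of }\mathcal A(\eta)]\ \le\ (1+o(1))\left(\frac{\log n}{I(p)}+\frac{2\log(1/\eta)}{C_1}\right),\qquad C_1:=(1-2p)\log\tfrac{1-p}{p}=\KL(\Bern(p)\,\|\,\Bern(1-p)),$$
valid uniformly over inputs as $\eta\to 0$, and then to recover the $(1-\delta)$ factor in the leading term by a cheap randomized early abort. Given $\mathcal A(\eta)$: if $\delta=o(1)$ we simply run $\mathcal A(\delta)$; if $\delta=\Omega(1)$ we set $\eta:=1/\log n$, flip an independent coin $Z\sim\Bern(p_g)$ with $p_g:=(\delta-\eta)/(1-\eta)$, output a fixed default position when $Z=1$, and otherwise run $\mathcal A(\eta)$. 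A one-line computation gives overall error exactly $p_g+(1-p_g)\eta=\delta$ and expected query count $(1-p_g)\cdot\bE[\#\text{queries of }\mathcal A(\eta)]=(1-\delta)(1+o(1))\,\bE[\#\text{queries of }\mathcal A(\eta)]$; substituting the displayed bound (with $\log(1/\eta)$ equal to $\log(1/\delta)$ in the first case and to $\log\log n=o(\log n)$ in the second, the latter absorbed into the $(1+o(1))$) yields precisely $(1+o(1))\big((1-\delta)\tfrac{\log n}{I(p)}+\tfrac{2\log(1/\delta)}{C_1}\big)$.

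For $\mathcal A(\eta)$ I would run two phases on a Bayesian posterior $\pi_t$ over the $n+1$ candidate predecessor positions, started from the uniform prior. \emph{Phase~1 (search):} repeatedly query the threshold that splits $\pi_t$ as evenly as possible and update $\pi_t$ by Bayes' rule, until the mode $\hat m$ satisfies $\pi_t(\hat m)\ge 2/3$. \emph{Phase~2 (confirmation):} with $\hat m$ the current mode, repeatedly do the following until $\pi_t(\hat m)\ge 1-\eta$: recompute $\hat m$; if $\pi_t(\{j<\hat m\})\ge\eta/4$ query ``is the answer $<\hat m$?'', otherwise query ``is the answer $\le\hat m$?''; then output $\hat m$. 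The essential point of Phase~2 is that it does \emph{not} keep splitting $\pi_t$ evenly — the even split keeps the two tails of $\pi_t$ on the left and right of $\hat m$ comparable, which only drives $1-\pi_t(\hat m)$ down at the slow rate $I(p)$ (this is why the Burnashev--Zigangirov analysis has a $\log(1/\delta)/I(p)$ term); instead one should collapse the left tail completely, then the right tail, each at the fast rate $C_1$ (and indeed $2I(p)<C_1$, so this is strictly better). The error bound is immediate: whenever the process stops it has stopped with $\pi_t(\hat m)\ge 1-\eta$, so conditioned on the transcript the output is wrong with probability $1-\pi_t(\hat m)\le\eta$; together with almost-sure termination (from the drift estimates below) this gives error $\le\eta$.

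For the query count, Phase~1 is handled by the standard Burnashev--Zigangirov estimate: for all but a lower-order number of steps the most-balanced split is a near-uniform binary question, so it decreases the expected posterior entropy by $(1-o(1))I(p)$; writing this as a one-step drift for $H(\pi_t)$ and applying an optional-stopping (Wald-type) identity, using $H(\pi_0)=(1+o(1))\log n$, $H=O(1)$ at the end of Phase~1, and $O(1)$ change per step, gives $\bE[\tau_1]\le(1+o(1))\tfrac{\log n}{I(p)}$. For Phase~2, once $\pi_t$ is concentrated about $\hat m$ the relevant quantities are $A_t:=\pi_t(\{j<\hat m\})$ and $B_t:=\pi_t(\{j>\hat m\})$: a query ``$<\hat m$?'' multiplies $A_t/(1-A_t)$ by $p/(1-p)$ with probability $1-p$ and by $(1-p)/p$ with probability $p$ whenever the true position is $\ge\hat m$, so $\log A_t$ has drift $-C_1$, and likewise $\log B_t$ has drift $-C_1$ under ``$\le\hat m$?''; moreover either query leaves the other tail essentially unchanged. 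Hence collapsing $A_t$ from $\Theta(1)$ down to $\le\eta/4$ costs $(1+o(1))\tfrac{\log(1/\eta)}{C_1}$ queries, and then collapsing $B_t$ costs the same, for a total Phase~2 cost of $(1+o(1))\tfrac{2\log(1/\eta)}{C_1}$; summing the two phases gives the displayed bound. The same drift estimates, read off the posterior mass of the \emph{true} position rather than of the mode, show each phase terminates in finite expected time for an arbitrary true position, so the bound is worst-case and the error guarantee is valid.

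\textbf{Main obstacle.} The delicate part is making the Phase~2 analysis fully rigorous. One must (i) prove that once $\pi_t(\hat m)\ge 2/3$ the posterior is supported, up to exponentially small mass, on an $O(1)$-length window around $\hat m$ — a concentration statement for the Bayesian posterior under adaptive noisy comparisons — so that $A_t,B_t$ really are the only tails and the updates reduce to the clean two-parameter recursion above, and one must carry this uniformly through the transition regime $\pi_t(\hat m)\approx 2/3$ where two comparable tails coexist; (ii) bound the residual drift of the idle tail and the movement of $\hat m$, and handle the $O(1)$-probability event that Phase~1 ends with a wrong $\hat m$, in which case hammering ``$<\hat m$?'' or ``$\le\hat m$?'' self-corrects — the mode migrates toward the true position at rate $C_1$ at $O(1/C_1)$ extra cost per step of migration and only $O(1)$ migrations occur whp — but this must be quantified to stay lower order; and (iii) convert the per-step drift statements into $(1+o(1))$ query bounds via optional stopping with controlled overshoot. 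Item (i), the posterior concentration, is the technical heart; the remaining pieces are careful but routine drift/martingale computations, and the early-abort reduction of the first paragraph is elementary arithmetic.
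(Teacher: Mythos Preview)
Your proposal is correct in outline and arrives at the same two-phase structure (coarse search followed by confirmation at the two neighbors) that the paper uses, but you take a considerably harder route to get there.

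The paper's proof is almost trivial by comparison. It runs an off-the-shelf noisy binary search (Theorem~\ref{thm:BHbinary}) with error $1/\log n$ to produce a candidate predecessor $l$, then calls \textsc{LessThan}$(l,x,\delta/4)$ and \textsc{LessThan}$(x,r,\delta/4)$ where $r$ is the element after $l$; if both tests pass it outputs $l$, otherwise it restarts the whole thing. The restart probability is at most $1/\log n + \delta/2 = o(1)$, so the expected number of attempts is $1+o(1)$, and each attempt costs exactly $(1+o(1))\bigl(\tfrac{\log n}{I(p)} + \tfrac{2\log(1/\delta)}{C_1}\bigr)$ since each \textsc{LessThan} call is already analyzed in Lemma~\ref{lem:repeated_query}. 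The case $\delta > 1/\log n$ is handled by your same early-abort trick.

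Your Phase~2 is really the same idea in disguise --- the query ``is the answer $<\hat m$?'' is a comparison with $a_{\hat m}$ and ``is the answer $\le \hat m$?'' is a comparison with $a_{\hat m+1}$, i.e.\ exactly the predecessor/successor checks. The difference is that you insist on carrying the full posterior $\pi_t$ throughout and recomputing the mode adaptively, which is precisely what generates the obstacles you list: posterior concentration around $\hat m$, mode migration when Phase~1 ends at the wrong mode, and the transition regime. The paper avoids all of this by \emph{freezing} the candidate after Phase~1 and handling a wrong guess via restart rather than in-place self-correction; because the wrong-guess probability is $o(1)$, the restart costs nothing to leading order. This modularity turns your several-page program into a two-paragraph argument.

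One small point: your error bound ``$\pi_t(\hat m)\ge 1-\eta$ at stopping implies error $\le\eta$'' is a Bayesian statement and as written only gives error $\le\eta$ averaged over a uniform prior on the true position. The theorem (and the paper's proof, via the worst-case guarantees of Theorem~\ref{thm:BHbinary} and Lemma~\ref{lem:repeated_query}) delivers a worst-case error bound. This is easy to patch --- a random cyclic shift of the array, as in \cite{dereniowski2021noisy}, makes average-case and worst-case coincide --- but you should say so explicitly.
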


\paragraph{Concurrent Work.} Concurrent work \cite{wang2023noisy} also improves over \cite{wang2022noisy} in many aspects. Our results are  stronger than their main results. Our algorithm in Theorem~\ref{thm:sort-upperbound} uses fewer noisy comparisons than their algorithms in expectation 
~\cite[Theorem 1 and 2]{wang2023noisy}. Our Theorem~\ref{thm:sort-lowerbound} is strictly stronger than \cite[Theorem 3]{wang2023noisy}. Our lower bound for insertion-based sorting algorithms (see discussion after Theorem~\ref{thm:binarysearch-lowerbound}) is strictly stronger than \cite[Theorem 4]{wang2023noisy}.

\paragraph{Other Related Works.}
People have also  considered Noisy Sorting in a slightly different noisy model, where every pair of elements is allowed to be compared at most once \cite{braverman2008noisy, klein2011tolerant,geissmann2017sorting, geissmann2020optimal, geissmann2019optimal}. Sorting is harder in this model than the model we consider, as it is  information-theoretically impossible to correctly sort all the elements with $1-o(1)$ probability for $p > 0$ \cite{geissmann2017sorting}. For $p<1/16$, Geissmann, Leucci, Liu, Penna~\cite{geissmann2019optimal} achieved an $O(n \log n)$ time algorithm that guarantees $O(\log n)$ maximum dislocation and $O(n)$ total dislocation with high probability, matching the lower bound given in \cite{geissmann2017sorting}. 

\paragraph{Acknowledgements.}
We thank Przemys{\l}aw Uzna{\'n}ski for telling us about the issue of \cite{ben2008bayesian}.

\subsection{Technical Overview}

Before we give the overview of our techniques, let us first give some intuitions about the constant, $\frac 1{I(p)} + \frac 1{(1-2p) \log \frac{1-p}p}$, in our bound. $I(p)$ is essentially the amount of information in bits each noisy query can give, and the ordering of $n$ elements requires $\log(n!) \approx n \log n$ bits to represent. Therefore, $\frac{n \log n}{I(p)}$ queries are necessary intuitively. On the other hand, $\frac {\log n}{(1-2p) \log \frac{1-p}p}$ is roughly the number of noisy comparisons required to compare two elements with error probability $\le \frac{1}{n}$. Therefore, even for the simpler task of checking whether a list of $n$ elements are sorted, roughly $\frac {n \log n}{(1-2p) \log \frac{1-p}p}$ queries seem necessary to compare all the adjacent elements with overall error probability $o(1)$. 
Therefore, the constants $\frac 1{I(p)}$ and $\frac 1{(1-2p) \log \frac{1-p}p}$ are natural. 

However, the above discussion only intuitively suggests a $\max\left\{\frac 1{I(p)}, \frac 1{(1-2p) \log \frac{1-p}p}\right\}$ lower bound on the constant, and a priori it is unclear how to strengthen it to their sum. To resolve this issue, we design an easier version of Noisy Sorting as an intermediate problem. In this version, suppose elements are split into continuous groups of sizes $\log n$ in the sorted order (and the groups are unknown to the algorithm). Any time the algorithm tries to compare two elements from the same group, the query result tells the algorithm that these two elements are in the same group, without noise; otherwise, the query result flips the result of the comparison with probability $p$, as usual. The algorithm is also not required to sort elements inside the same group correctly. We can show that in this version, each query still only gives roughly $I(p)$ bits of information, and the total amount of bits the output space can represent is $\log \frac{n!}{((\log n)!)^{n / \log n}} \approx n \log n$. Therefore, this simpler problem still requires approximately $\frac{n \log n}{I(p)}$ queries. By designing a reduction from this problem to Noisy Sorting, we can show that any algorithm for Noisy Sorting requires roughly $\frac{n \log n}{I(p)}$ queries comparing elements not from the same group. At the same time, inside each group, the algorithm needs to perform $\log n - 1$ comparisons with $\le \frac{1}{n}$ error probability, even for checking if all the groups are sorted. This part requires roughly $\frac {n \log n}{(1-2p) \log \frac{1-p}p}$ queries. 

Our algorithm that matches this lower bound can be  viewed as a remote variant of quicksort, with a large number of pivots and one level of recursion. First, we sample a random subset of elements $S$ of size $\frac{n}{\log n}$, which can be viewed as pivots. Sorting $S$ only contributes to lower order terms in the query complexity. These pivots separate the list to multiple sublists. 
Then, we use Noisy Binary Search   to find the correct sublist for the remaining elements, with error probability $\frac{1}{\log n}$. This step contributes to the $\frac{n \log n}{I(p)}$ part of the query complexity. Then we design an algorithm that correctly sorts $m$ elements with probability $\ge 1-o(\frac{1}{n})$ using only $O(m \log m) + (1+o(1)) \frac{m \log n}{(1-2p)\log \frac{1-p}{p}}$ queries. This algorithm is then used to sort all the sublists. Summing over all the sublists, the first $O(m \log m)$ term becomes $O(n \log \log n)$, as we expect each sublist to have size $\polylog(n)$. The second term sums up to $(1+o(1))\frac {n \log n}{(1-2p) \log \frac{1-p}p}$. The total error probability is $o(\frac{1}{n}) \cdot n = o(1)$ Finally, we have to correct the elements that are sent to the wrong sublists due to the error probability in the Noisy Binary Search algorithm, but we only expect to see at most $\frac{n}{\log n}$ of them, so the number of queries for handling them ends up being a lower order term. 

\section{Preliminaries}
In this section we introduce several basic notations, definitions, and lemmas used in the proof.

\subsection{Basic Notions}

\begin{notn}
We use the following notations throughout this paper.
\begin{itemize}
    \item For $k\in \bZ_{\ge 0}$, let $[k] := \{1,\ldots, k\}$.
    \item For two random variables $X, Y$, let $I(X;Y)$ denote the mutual information between $X$ and $Y$.
    \item For a distribution $P = (p_1,\ldots, p_k)$, define its entropy function %
    as $h(P) := -\sum_{i\in [k]} p_i \log p_i$.
    Specially, for $p\in [0, 1]$, define $h(p) := h(p,1-p)$. Define $I(p) := 1-h(p)$.
    \item For $p\in [0, 1]$, let $\Bern(p)$ denote the Bernoulli random variable, such that for $X\sim \Bern(p)$, we have $\bP[X=0]=1-p$, $\bP[X=1]=p$.
    \item For $p\in [0, 1]$, let $\BSC_p$ be the binary symmetric channel with crossover probability $p$, i.e., on input $x\in \{0,1\}$, the output $\BSC_p(x)$ follows distribution $\Bern(p+(1-2p)x)$.
    \item For two random variables $X, Y$, define $X\land Y := \min\{X, Y\}$.
    \item 
    Let $x$ be a variable and $f(x)$ and $g(x)$ be two functions which are defined for all large enough values of $x$.
    We say $f(x) = o_x(g(x))$ if $\lim_{x\to \infty} f(x)/g(x)=0$.
    Specially, if $x=n$, we omit the subscript and write $f(n) = o(g(n))$.
\end{itemize}
\end{notn}

We also assume some basic knowledge about information theory, such as chain rule of entropy and mutual information and Fano's inequality. See e.g.~\cite{polyanskiy2022information} for an overview. 

In Noisy Sorting and Noisy Binary Search, the algorithm is allowed to make  noisy comparison queries. 
\begin{defn}[Noisy Comparison Query]
Fix $p\in (0, \frac 12)$. Let $x, y$ be two comparable elements (i.e., either $x\le y$ or $y\le x$). We define the noisy comparison query
\NoisyComparison{x}{y} as returning $\BSC_p(\mathbbm{1}\{x< y\})$, where the randomness  is independent for every query.
\end{defn}
We will omit the crossover probability $p$ when it is clear from the context throughout the paper. 

With these definitions, we are ready to define the problems of study.
\begin{defn}[Noisy Sorting Problem]
Let \NoisySorting{n} be the following problem:
Given $n$ comparable elements $(a_i)_{i\in [n]}$, an algorithm is allowed to make query $\textsc{NoisyComparison}(a_i,a_j)$ for $i,j\in [n]$.
The goal is to output a permutation $(b_i)_{i\in [n]}$ of $(a_i)_{i\in [n]}$ such that
$b_i\le b_{i+1}$ for all $i\in [n-1]$.
\end{defn}

\begin{defn}[Noisy Binary Search Problem]
Let \NoisyBinarySearch{n} be the following problem:
Given $n$ elements $(a_i)_{i\in [n]}$ satisfying $a_i\le a_{i+1}$ for all $i\in [n-1]$ and an element $x$.
An algorithm is allowed to make query $\textsc{NoisyComparison}(s,t)$, where $\{s,t\}=\{a_i,x\}$ for some $i\in [n]$.
The goal is to output $\max (\{0\}\cup\{i\in [n]: a_i < x\})$ (i.e., the index of the largest element in $(a_i)_{i\in [n]}$ less than $x$).
\end{defn}

\begin{rmk}
By our definition of the Noisy Sorting Problem, we can WLOG assume that all elements $(a_i)_{i\in [n]}$ are distinct.
In fact, suppose we have an algorithm $\cA$ which works for the distinct elements case.
Then we can modify it such that if it is about to call $\textsc{NoisyComparison}(a_i,a_j)$, it instead calls $\textsc{NoisyComparison}(a_{\max\{i,j\}},a_{\min\{i,j\}})$ (but flip the returning bit if $i < j$).
Call the modified algorithm $\cA'$.
This way, to the view of the algorithm, the elements are ordered by $(a_i, i)$. 
It is easy to see that $\cA'$ has the same distribution of number of queries, and error probability is no larger than error probability of $\cA$. Thus, we get an algorithm which works for the case where elements are not necessarily distinct.

Similarly, by our definition of the Noisy Binary Search Problem, we can WLOG assume that all elements $(a_i)_{i\in [n]}$ and $x$ are distinct.
\end{rmk}

\subsection{Known and Folklore Algorithms}
\begin{thm}[{\cite[Corollary 1.6]{dereniowski2021noisy}}]
\label{thm:BHbinary}
There exists a randomized algorithm for \NoisyBinarySearch{n} with error probability at most $\delta$ using
$$ (1+o(1)) \left(\frac{\log n + O(\log (1/\delta))}{I(p)}\right)$$
noisy comparisons in expectation, and using $O(\log n)$ random bits always.
\end{thm}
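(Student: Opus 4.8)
The plan is to analyze a Bayesian search (equivalently, a multiplicative-weights update) run with the uniform prior over the $n+1$ possible answers. Maintain a weight vector $w = (w_0, \dots, w_n)$ with $\sum_i w_i = 1$, initialized uniformly, where $w_i$ is the current posterior probability that the predecessor index equals $i$. In each round, choose a pivot $j$ so that the mass of $\{0,\dots,j-1\}$ is as close to $\tfrac12$ as the discrete weights permit, issue the corresponding noisy comparison between $x$ and $a_j$, and update multiplicatively: scale the side consistent with the (noisy) answer by $1-p$, the other side by $p$, then renormalize. Run this bulk phase while $\max_i w_i = o(1)$; once $\max_i w_i$ first reaches $\tfrac12$, with $j^\star$ the corresponding index, switch to a confirmation phase that repeatedly queries the two boundary pivots of $j^\star$, still updating the posterior. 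Output $\argmax_i w_i$ as soon as $\max_i w_i \ge 1 - \delta/4$, or once a hard cap of $T_{\max} = O\big((\log n + \log(1/\delta))/I(p)\big)$ rounds is reached. The only randomness is a single uniform $r \in \{0,\dots,n\}$ drawn at the outset, used below to turn the average-case analysis into a worst-case guarantee.

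The main term comes from the bulk phase. Fix the true index $i^\star$ and track $L_t := \log\frac{w_{i^\star}}{1 - w_{i^\star}}$, which starts at $-\log n + O(1)$. While $\max_i w_i = o(1)$ we also have $w_{i^\star} = o(1)$, and greedy bisection yields a split into masses $a$ and $1-a$ with $a = \tfrac12 \pm o(1)$; a direct computation with the $\BSC_p$ transition probabilities then shows that one update moves $L_t$ by $1 + \log(1-p) + o(1)$ on a correct answer (probability $1-p$) and by $1 + \log p + o(1)$ on an incorrect answer (probability $p$), so $\bE[L_{t+1} - L_t \mid \mathcal F_t] \ge 1 - h(p) - o(1) = (1 - o(1)) I(p)$ --- exactly the fact that one use of $\BSC_p$ on a near-balanced input conveys $(1-o(1))I(p)$ bits. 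Since the increments lie in $[-\log\frac{1-p}{p}, \log\frac{1-p}{p}] + o(1)$, optional stopping for the submartingale $L_t - (1-o(1))I(p)\, t$, together with a Doob/Azuma concentration bound, shows that the bulk phase ends after $(1+o(1))\frac{\log n}{I(p)}$ rounds in expectation and that the cap is exceeded with probability $o(\delta)$, uniformly over $i^\star$. In the confirmation phase, the log-posterior-odds between $j^\star$ and any alternative on one of its sides drifts by $(1-2p)\log\frac{1-p}{p}$ per query at that side's boundary pivot, and since all alternatives on a side are updated identically they are driven down collectively; so $O(\log(1/\delta))$ queries per side push $w_{j^\star}$ above $1 - \delta/4$ when $j^\star = i^\star$, which costs $O(\log(1/\delta))$ rounds, within the $O(\log(1/\delta))/I(p)$ the theorem allows. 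Summing, the expected number of queries is $(1+o(1))\frac{\log n + O(\log(1/\delta))}{I(p)}$.

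For correctness, since the prior is uniform and the updates are the exact posterior updates, $w$ is always the true posterior given the transcript, so the error conditioned on stopping at the confidence threshold is at most $\delta/4$; adding the $o(\delta)$ probability of hitting the cap gives error at most $\delta/2$ for a uniformly random position. To pass to a worst-case bound over the position we use $r$: run the procedure not on the given instance but on the instance padded with $r$ virtual $-\infty$ elements on the left and $n - r$ virtual $+\infty$ elements on the right (whose comparisons with $x$ need no queries), so that the true index becomes the uniformly random shift $i^\star + r$; averaging the posterior error over the resulting window of $n+1$ positions inside the size-$(2n+1)$ instance shows the error at any fixed original position is at most $\delta$ (running with target $\delta/2$ on the padded instance). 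This uses $\lceil \log(n+1) \rceil = O(\log n)$ random bits and no more, and the expected query count changes only by the $(1+o(1))$ factor since the instance size merely doubles.

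I expect the main obstacle to be the bulk-phase drift lemma together with its interface to the confirmation phase: one must verify that $\bE[L_{t+1} - L_t \mid \mathcal F_t] = (1-o(1))I(p)$ \emph{uniformly} over all splits with $a = \tfrac12 \pm o(1)$ and all true positions, and control the increments tightly enough for optional stopping and concentration. The interface requires checking that, except with probability $o(\delta)$, the leader $j^\star$ whose weight first reaches $\tfrac12$ is the true index $i^\star$ (so that the confirmation phase confirms the right interval) --- which follows from the honest-posterior property together with a union bound over wrong indices applied to a symmetric supermartingale estimate for $\log\frac{w_j}{w_{i^\star}}$, in the spirit of Fano's inequality. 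The remaining pieces --- Wald's identity for the bulk phase, the drift/Chernoff estimate for the confirmation phase, and the shift reduction --- are routine.
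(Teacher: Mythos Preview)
First, note that the paper itself does not prove this statement: it is quoted as \cite[Corollary~1.6]{dereniowski2021noisy} in the ``Known and Folklore Algorithms'' subsection, with only a remark that the cited algorithm's sole source of randomness is a random initial shift. So there is no in-paper proof to compare against, and your high-level strategy---Bayesian/multiplicative-weights search plus a single random shift to convert an average-case guarantee to a worst-case one---is exactly the approach of the cited source.

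That said, your specific two-phase design has a genuine gap at precisely the point you flag as the main obstacle. You assert that when $\max_i w_i$ first reaches $\tfrac12$, the leader $j^\star$ equals the true index $i^\star$ except with probability $o(\delta)$, via a supermartingale estimate on $\log(w_j/w_{i^\star})$. This is false. Since the weights are the honest posterior, at the stopping time $\tau$ one has $\bP[j^\star=i^\star\mid \mathcal F_\tau]=w_{j^\star}(\tau)\approx \tfrac12$, hence $\bP[j^\star\ne i^\star]$ is $\Theta(1)$, not $o(\delta)$. The martingale route does not rescue this: $w_j/w_{i^\star}$ is a nonnegative martingale of mean $1$, and the event ``$w_j$ reaches $\tfrac12$ before $w_{i^\star}$'' only forces $w_j/w_{i^\star}\ge 1$, for which Doob's inequality is vacuous; likewise, summing $\bP[\sup_t w_j\ge \tfrac12]\le 2/(n+1)$ over the $n$ wrong indices gives nothing.

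This breaks the analysis as written. With probability $\Theta(1)$ you enter confirmation on a wrong $j^\star$; the boundary queries then drive $w_{j^\star}$ down and push mass to one side of $j^\star$, but the \emph{relative} weights within that side are unchanged by those queries, so there is no reason $\max_i w_i$ reaches $1-\delta/4$ before the cap. You then hit $T_{\max}$ with probability $\Theta(1)$, and your error bound (which relies on $\bP[\text{hit cap}]=o(\delta)$) collapses. The standard fix is either to drop the frozen confirmation phase entirely and continue greedy bisection throughout---once one cell dominates, the near-median pivot is automatically one of its boundaries, and the same drift carries $L_t$ from $0$ to $\log\frac{1-\delta}{\delta}$ in $O(\log(1/\delta))$ further steps---or to let confirmation abort back to the bulk phase whenever $w_{j^\star}$ falls below (say) $\tfrac14$, paying only a geometric number of restarts. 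Either modification preserves your $(1+o(1))\log n/I(p)$ bulk count and yields the claimed bound.
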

\begin{rmk}
    \cite{dereniowski2021noisy} stated their problem as finding an element in a sorted array, but their algorithm can also find the predecessor. The only place their algorithm uses randomness is to shift the initial array randomly, which requires $O(\log n)$ random bits (as we can WLOG assume $n$ is a power of two in \NoisyBinarySearch{n}, it always takes $O(\log n)$ random bits to sample a random shift, instead of $O(\log n)$ random bits in expectation).

    We could also use the noisy binary search algorithm in \cite{burnashev1974interval} (see also \cite[Theorem 5]{wang2023noisy}), which achieves a slightly better bound on the number of noisy comparisons. However, the number of random bits used in Burnashev-Zigangirov's algorithm is $O(\log n)$ in expectation, not always. This would make our control of number of random bits used slightly more complicated.
\end{rmk}

\begin{coro}
\label{cor:simple_sort}
There exists a randomized algorithm for \NoisySorting{n} with failure probability $\delta$ using 
$O(n \log (n/\delta))$ noisy comparisons in expectation, and using $O(n\log n)$ random bits always.
\end{coro}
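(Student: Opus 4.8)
The plan is to run a noisy version of \emph{binary insertion sort} on top of the algorithm of Theorem~\ref{thm:BHbinary}. Maintain a working list that is meant to hold $a_1,\dots,a_i$ in sorted order: initialize it as $(a_1)$, and for $i=2,\dots,n$ call the algorithm of Theorem~\ref{thm:BHbinary} on the current working list (regarded as a \NoisyBinarySearch{i-1} instance) together with the element $a_i$, using error parameter $\varepsilon := \delta/n$; insert $a_i$ at the returned index; finally output the working list. The only real design decision is the choice $\varepsilon=\Theta(\delta/n)$, which balances a union bound over the $n-1$ insertions against the (merely logarithmic) dependence of the subroutine's cost on $1/\varepsilon$.

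For correctness, say the $i$-th insertion \emph{errs} if it returns an index other than the correct insertion point of $a_i$ into $\{a_1,\dots,a_{i-1}\}$. Since each call uses fresh, independent query randomness, Theorem~\ref{thm:BHbinary} guarantees that, conditioned on the working list being correctly sorted just before the $i$-th call, the $i$-th insertion errs with probability at most $\varepsilon$. Considering the first insertion that errs, a union bound gives that with probability at least $1-(n-1)\varepsilon>1-\delta$ no insertion errs, in which case the working list is sorted after every step and the output is correct. For the query count, Theorem~\ref{thm:BHbinary} says that when the size-$(i-1)$ working list really is sorted the $i$-th call uses $\mu_i := (1+o(1))\frac{\log(i-1)+O(\log(1/\varepsilon))}{I(p)}=O(\log(n/\delta))$ noisy comparisons in expectation, using that $p$ (hence $I(p)$) is a fixed constant. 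Since $\sum_{i=2}^{n}\log(i-1)=\log((n-1)!)=O(n\log n)\le O(n\log(n/\delta))$, summing yields $\sum_i\mu_i=O(n\log(n/\delta))$. For the random bits, the subroutine uses $O(\log(i-1))=O(\log n)$ random bits per call, always, so the total is $O(n\log n)$ always.

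The one point that needs care — and the main obstacle — is that once some insertion errs, the working list ceases to be sorted, so the hypothesis of Theorem~\ref{thm:BHbinary} is no longer met by later calls and $\mu_i$ need not bound them. Writing $X_i$ for the number of queries made by the $i$-th call and $\mathcal{S}_i$ for the event that the working list is sorted just before it, the same first-failure union bound gives $\Pr[\neg\mathcal{S}_i]\le(i-1)\varepsilon<\delta$, while $\bE[X_i;\mathcal{S}_i]\le\mu_i$ as above; so it suffices to check that $\bE[X_i;\neg\mathcal{S}_i]$ is a lower-order contribution. For this one needs that the subroutine's expected query count stays $O\!\left(\frac{\log(i-1)+\log(1/\varepsilon)}{I(p)}\right)=O(\log(n/\delta))$ uniformly over \emph{all} size-$(i-1)$ inputs, not only sorted ones — plausible since it makes only $O(\log n)$ random decisions and the lists that actually arise here differ from a sorted list only in the positions of the few elements displaced by earlier errors, which its noise-tolerance absorbs like ordinary noisy answers. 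Granting this, $\bE[X_i]\le\mu_i+O(\log(n/\delta))\cdot\Pr[\neg\mathcal{S}_i]\le\mu_i+O(\delta\log(n/\delta))$, and summing over $i$ still gives $O(n\log(n/\delta))$ queries in expectation. Apart from pinning down this robustness of the subroutine, the proof is just the union bound and the sum $\sum_i\mu_i$ computed above.
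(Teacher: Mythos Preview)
Your approach—binary insertion sort on top of Theorem~\ref{thm:BHbinary} with per-call error $\delta/n$, a union bound for correctness, and summing the per-call costs for the query count—is exactly the paper's, and in fact the paper is terser than you: it simply writes the expected query count as $n\cdot(1+o(1))\frac{\log n+\log(n/\delta)}{I(p)}$ without ever mentioning the unsorted-working-list issue you flag.

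That concern is legitimate but not a real obstacle. The subroutine of Theorem~\ref{thm:BHbinary} interacts with the working list only through noisy evaluations of the bits $b_j=\mathbbm{1}\{L_j<a_i\}$, and the multiplicative-weights/potential stopping-time analysis behind it bounds the expected number of queries for an arbitrary fixed bit vector $b\in\{0,1\}^{i-1}$, not only monotone ones; monotonicity of $b$ is used only for the correctness of the returned index, not for termination. Hence $\bE[X_i\mid L_{i-1}]=O(\log(n/\delta))$ regardless of whether $L_{i-1}$ is sorted, and your decomposition $\bE[X_i]\le\mu_i+O(\delta\log(n/\delta))$ goes through without further work.
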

\begin{proof}
The algorithm keeps a sorted array of the first $i$ elements for $i$ from $1$ to $n$. Each time a new element needs to be inserted, the algorithm uses Theorem~\ref{thm:BHbinary} to find its correct position, with error probability $\delta / n$. By union bound, the overall error probability is bounded by $\delta / n \cdot n = \delta$, the expected number of queries is $n \cdot (1+o(1)) \left(\frac{\log n + \log(n/\delta)}{I(p)}\right) = O(\frac{n \log (n/\delta)}{I(p)})$, and the number of random bits used is $O(n\log n)$. 
\end{proof}

We show how to use repeated queries to boost the correctness probability of pairwise comparisons. This is a folklore result, but we present a full proof here for completeness.
\begin{lemma} \label{lem:repeated_query}
There exists a deterministic algorithm which compares two (unequal) elements using noisy comparisons with error probability $\le \delta$ using
\begin{align*}
\frac 1{1-2p} \left \lceil \frac{\log \frac{1-\delta}{\delta}}{\log \frac{1-p}p} \right\rceil = (1+o_{1/\delta}(1)) \frac{\log (1/\delta)}{(1-2p) \log((1-p)/p)}
\end{align*}
noisy queries in expectation. 
\end{lemma}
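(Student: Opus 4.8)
The plan is to use a sequential probability ratio test (SPRT) on repeated noisy comparisons of the same pair. Suppose the true comparison bit is $c = \mathbbm{1}\{x < y\}$, and we repeatedly query $\NoisyComparison{x}{y}$, obtaining i.i.d. outputs $Z_1, Z_2, \ldots$ each equal to $\BSC_p(c)$. I would track the log-likelihood ratio between the two hypotheses $c = 1$ and $c = 0$: after $t$ queries, the accumulated log-likelihood ratio is $(k_1 - k_0)\log\frac{1-p}{p}$, where $k_1$ is the number of queries agreeing with the hypothesis $c=1$ and $k_0 = t - k_1$ the number disagreeing. Equivalently, let $W_t = k_1 - k_0 \in \bZ$ be a lazy-ish random walk that increments by $+1$ with probability $1-p$ (when the true bit is $1$) and $-1$ with probability $p$; it starts at $0$ and we stop as soon as $|W_t| \ge L$ for the threshold $L := \left\lceil \frac{\log\frac{1-\delta}{\delta}}{\log\frac{1-p}{p}} \right\rceil$, outputting $\sgn(W_t)$ (i.e., declaring $c=1$ if $W_t \ge L$ and $c=0$ if $W_t \le -L$).

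The key steps, in order: (1) \textbf{Correctness.} Condition on $c = 1$ (the other case is symmetric). The walk $W_t$ has positive drift $1-2p > 0$, so it hits $+L$ or $-L$ almost surely, and I must bound $\bP[\text{hit } -L \text{ first}]$. This is a standard gambler's-ruin / optional-stopping computation: the process $\left(\frac{p}{1-p}\right)^{W_t}$ is a martingale (since $\bE\left[\left(\frac{p}{1-p}\right)^{\pm 1}\right]$ weighted by $1-p, p$ equals $1$), so by optional stopping at the hitting time $\tau$ of $\{-L, L\}$,
\begin{align*}
1 = \bE\left[\left(\tfrac{p}{1-p}\right)^{W_\tau}\right] = \bP[W_\tau = L]\left(\tfrac{p}{1-p}\right)^{L} + \bP[W_\tau = -L]\left(\tfrac{p}{1-p}\right)^{-L}.
\end{align*}
Solving gives $\bP[W_\tau = -L] = \frac{(p/(1-p))^L - 1}{(1-p)/p)^L - (p/(1-p))^L} \cdot \big(\tfrac{p}{1-p}\big)^{?}$ — more cleanly, $\bP[\text{error}] = \bP[W_\tau = -L] \le \left(\frac{p}{1-p}\right)^{L}/\left(1 + \left(\frac{p}{1-p}\right)^{L}\right) \le \left(\frac{p}{1-p}\right)^L$, and by the choice of $L$ we have $\left(\frac{p}{1-p}\right)^L \le \left(\frac{p}{1-p}\right)^{\log\frac{1-\delta}{\delta}/\log\frac{1-p}{p}} = \frac{\delta}{1-\delta} \le \delta$ for $\delta \le 1/2$ (and $\delta < 1/2$ is WLOG since larger $\delta$ is trivial). (2) \textbf{Query count.} Again condition on $c=1$. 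Apply optional stopping to the martingale $W_t - (1-2p)t$: this gives $(1-2p)\,\bE[\tau] = \bE[W_\tau]$. Since $W_\tau \in \{-L, L\}$ and the error probability is tiny, $\bE[W_\tau] = L\bP[W_\tau = L] - L\bP[W_\tau=-L] \le L$, hence $\bE[\tau] \le \frac{L}{1-2p}$. A slight care: $\bE[W_\tau]$ could in principle be less than $L$, which only helps; and I should verify the optional-stopping hypotheses, e.g. via $\bE[\tau] < \infty$ (drift is bounded away from $0$, increments bounded) so the uniform-integrability / bounded-increments version of optional stopping applies. (3) Finally substitute the definition of $L$ and note $\left\lceil \frac{\log\frac{1-\delta}{\delta}}{\log\frac{1-p}{p}}\right\rceil = (1+o_{1/\delta}(1))\frac{\log(1/\delta)}{\log\frac{1-p}{p}}$ since $\log\frac{1-\delta}{\delta} = \log(1/\delta) + \log(1-\delta) = (1+o_{1/\delta}(1))\log(1/\delta)$ and the ceiling adds at most $1$.

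The main obstacle — though it is really just bookkeeping rather than a deep difficulty — is getting the error-probability bound to come out as exactly $\le \delta$ with the stated clean threshold, rather than $\le \delta$ up to a constant factor. The gambler's-ruin identity naturally produces $\frac{\delta}{1-\delta}$ (or similar), so I need the observation that this is $\le \delta$ and that the whole lemma is trivial when $\delta \ge 1/2$ (output anything, $0$ queries, or just note the formula is $\ge \frac{1}{1-2p}\lceil 0 \rceil$ which still needs a query — but for $\delta$ close to $1/2$ a single query already suffices, matching $\left\lceil \frac{\log\frac{1-\delta}{\delta}}{\log\frac{1-p}{p}}\right\rceil \ge 1$). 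A secondary point to handle carefully is that the walk increments are not symmetric $\pm 1$ with probability $1/2$ but biased, so I must use the correct biased-random-walk martingales $\left(\frac{p}{1-p}\right)^{W_t}$ and $W_t - (1-2p)t$ rather than the symmetric ones; this is exactly where the factor $\frac{1}{1-2p}$ and the base $\frac{1-p}{p}$ in the logarithm enter.
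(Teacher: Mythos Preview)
Your approach is essentially identical to the paper's: the paper's Algorithm~\ref{algo:less_than} tracks the posterior log-odds $X_t = \log\frac{a_t}{1-a_t}$, which equals exactly $W_t\log\frac{1-p}{p}$ in your notation, and its stopping rule $a\ge 1-\delta$ or $a\le\delta$ is precisely $|W_t|\ge L$. The paper bounds the error probability via the Bayesian interpretation (under a $\Bern(1/2)$ prior the posterior is exact, and by symmetry the error does not depend on the truth), while you use the gambler's-ruin martingale $(p/(1-p))^{W_t}$; both routes are fine, and the expected-time argument via the drift martingale $W_t-(1-2p)t$ is the same in both.

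There is one genuine slip in your correctness step. You correctly derive $\bP[W_\tau=-L]=\frac{r^L}{1+r^L}$ with $r=\frac{p}{1-p}$, but then you discard the denominator to get the bound $r^L\le\frac{\delta}{1-\delta}$ and assert ``$\frac{\delta}{1-\delta}\le\delta$ for $\delta\le 1/2$''. That inequality is false (e.g.\ $\delta=1/4$ gives $1/3>1/4$). The fix is simply not to throw away the denominator: since $r^L\le\frac{\delta}{1-\delta}$ by the choice of $L$ and $x\mapsto\frac{x}{1+x}$ is increasing,
\[
\bP[\text{error}]=\frac{r^L}{1+r^L}\le\frac{\delta/(1-\delta)}{1+\delta/(1-\delta)}=\delta,
\]
exactly as required. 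With this correction your plan goes through and matches the paper's bound.
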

\begin{proof}

Consider Algorithm~\ref{algo:less_than} which maintains the posterior probability that $x < y$.

\begin{algorithm}[h]
\caption{}
\label{algo:less_than}
\begin{algorithmic}[1]
\Procedure{LessThan}{$x,y,\delta$}
\State $a\gets \frac 12$
\While{true}
\If{$\textsc{NoisyComparison}(x,y)=1$}
\State $a\gets \frac{(1-p)a}{(1-p)a+p(1-a)}$
\Else
\State $a\gets \frac{p a}{p a+(1-p)(1-a)}$
\EndIf
\If{$a\ge 1-\delta$}
\Return true
\Comment{$x<y$}
\EndIf
\If{$a\le \delta$}
\Return false
\Comment{$x>y$}
\EndIf
\EndWhile
\EndProcedure
\end{algorithmic}
\end{algorithm}

Because of the return condition, if prior distribution of $\mathbbm{1}\{x<y\}$ is $\Bern(\frac 12)$, then probability of error is $\le \delta$.
However, by symmetry, probability of error of the algorithm does not depend on the ground truth, so the probability of error is $\le \delta$ regardless of the ground truth.

In the following, let us analyze the number of queries used.

Let $a_t$ be the value of $a$ after the $t$-th query.
If the ground truth is $x<y$, then
\begin{align*}
\log \frac{a_{t+1}}{1-a_{t+1}} = \log \frac{a_t}{1-a_t} + \left\{\begin{array}{ll}
 + \log \frac{1-p}p & \text{w.p. $1-p$,}\\
 - \log \frac{1-p}p & \text{w.p. $p$.}
\end{array}\right.
\end{align*}

Thus,
\begin{align*}
\bE\left[\log \frac{a_{t+1}}{1-a_{t+1}}\right] = \log \frac{a_t}{1-a_t} + (1-2p) \log \frac{1-p}p,
\end{align*}
where the expectation is over the randomness of the $(t+1)$-th query.

Define two sequences $(X_t)_{t\ge 0}$, $(Y_t)_{t\ge 0}$ as $X_t = \log \frac{a_t}{1-a_t}$ and $Y_t = X_t - t(1-2p) \log \frac{1-p}p$.
Let $\tau$ be the stopping time $\tau := \min\{t: a_t \le \delta \text{ or } a_t \ge 1-\delta\}$.

By the above discussion, $(Y_t)_{t\ge 0}$ is a martingale.
Using Optional Stopping Theorem, we have
\begin{align*}
    \bE[Y_{\tau\land n}] = \bE[Y_0] = 0
\end{align*}
for every $n \ge 0$, 
so
\begin{align*}
    \bE[X_{\tau\land n}] = \bE[\tau \land n] (1-2p) \log\frac{1-p}p.
\end{align*}
By the bounded convergence theorem, $\bE[X_{\tau\land n}]$ goes to $\bE[X_\tau]$ as $n\to \infty$.
By the monotone convergence theorem, $\bE[\tau \land n]$ goes to $\bE[\tau]$ as $n\to \infty$.
Therefore,
\begin{align*}
    \bE[X_\tau] = \bE[\tau] (1-2p) \log\frac{1-p}p,
\end{align*}
which leads to 
\begin{align*}
    \bE[\tau] = \frac{\bE[X_\tau]}{(1-2p) \log \frac {1-p}p} \le \frac 1{1-2p} \left \lceil \frac{\log \frac{1-\delta}{\delta}}{\log \frac{1-p}p} \right\rceil
\end{align*}
where we used the fact that $X_t$ is always an integer multiple of $\log \frac{1-p}{p}$ and hence $X_\tau \le \left \lceil \frac{\log \frac{1-\delta}{\delta}}{\log \frac{1-p}p} \right\rceil \cdot \log \frac{1-p}{p}$.

Thus, the algorithm stops within $\frac 1{1-2p} \left \lceil \frac{\log \frac{1-\delta}{\delta}}{\log \frac{1-p}p} \right\rceil$ queries in expectation.

Similarly, if the ground truth is $x>y$, the algorithm stops within the same expected number of queries.
This finishes the proof.
\end{proof}

For simplicity, we use $f_p(\delta)$ to denote $\frac 1{1-2p} \left \lceil \frac{\log \frac{1-\delta}{\delta}}{\log \frac{1-p}p} \right\rceil$ for $0<p<\frac 12$ and $\delta > 0$, which, by Lemma~\ref{lem:repeated_query}, upper bounds the expected number of comparisons needed by Algorithm~\ref{algo:less_than} to compare two elements with error probability $\le \delta$. When clear from the context, we drop the subscript $p$.

\subsection{Pairwise Independent Hashing}
During our algorithm, we sometimes run into the following situation: we need to run a certain algorithm $n$ times, where each instance uses $m = n^{c}$ random bits for some $0<c<1$. We can only afford $o(n\log n)$ fresh random bits in total (for derandomization purpose), so the instances need to share randomness in some way. On the other hand, we want randomness between any two instances to be independent of each other, in order for concentration bounds to hold. Therefore we need the following standard result on pairwise independent hashing.
\begin{lemma} [{\cite{carter1977universal}}] \label{lem:pairwise-indep-hash}
There exists a pairwise independent hash family of size $2^k-1$ using $k$ fresh random bits.
\end{lemma}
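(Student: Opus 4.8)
The statement to prove is Lemma~\ref{lem:pairwise-indep-hash}, the standard existence of a pairwise independent hash family of size $2^k-1$ using $k$ fresh random bits. This is a classical construction from Carter–Wegman, so my plan is to give the standard linear-algebra-over-$\mathbb{F}_2$ construction.

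\textbf{Plan.} The plan is to construct the family explicitly. Fix $k$ and work over the field $\mathbb{F}_{2^k}$, identifying it with $\{0,1\}^k$ as an $\mathbb{F}_2$-vector space. I would index the functions in the family by the nonzero elements $a \in \mathbb{F}_{2^k}\setminus\{0\}$ — there are exactly $2^k-1$ of these, matching the claimed size. The shared fresh randomness is a single uniformly random element $b \in \mathbb{F}_{2^k}$, which costs exactly $k$ random bits. For each domain point $a$ one defines the hash value $h_b(a) := a \cdot b$ where multiplication is field multiplication in $\mathbb{F}_{2^k}$; more precisely, to get whatever output length is needed one composes with a fixed $\mathbb{F}_2$-linear surjection onto the target, but for the clean statement it suffices to take the output to live in $\{0,1\}^k$ as well. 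The key steps are then: (i) for any fixed nonzero $a$, since $b\mapsto ab$ is a bijection of $\mathbb{F}_{2^k}$ and $b$ is uniform, $h_b(a)$ is uniform on $\mathbb{F}_{2^k}$; (ii) for any two distinct nonzero $a_1 \ne a_2$, the map $b \mapsto (a_1 b, a_2 b)$ is $\mathbb{F}_2$-linear and injective (if $a_1 b = a_1 b'$ then $b=b'$ since $a_1\ne 0$), hence a bijection onto a $2^k$-element... wait, onto the diagonal-free structure — actually $b\mapsto(a_1b,a_2b)$ maps $\mathbb{F}_{2^k}$ injectively into $\mathbb{F}_{2^k}^2$, and since the domain and a $2^k$-subset of the range have the same size, for any target pair $(y_1,y_2)$ there is at most one $b$, and the number of valid $b$'s is either $0$ or $1$; one then checks it is exactly $1$ for each pair by a counting/rank argument. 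This gives $\Pr[h_b(a_1)=y_1 \wedge h_b(a_2)=y_2] = 2^{-k} \cdot 2^{-k}$...

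\textbf{Correction on the count.} Let me be careful: the standard pairwise-independent family is the \emph{affine} family $h_{a,b}(x) = ax+b$, which uses $2k$ bits. To get down to $k$ bits one uses the multiplicative-only family but indexed by the domain point, i.e. the Toeplitz-type or the construction where the $i$-th function's value is $\langle a_i, b\rangle$ in a way that distinct indices give independent coordinates. The cleanest route: let $b$ be uniform in $\mathbb{F}_{2^k}$; for the $i$-th member ($1\le i \le 2^k-1$) pick distinct nonzero field elements $\alpha_i$ and set the hash to be (a fixed linear projection of) $\alpha_i \cdot b$. Pairwise independence of a \emph{family of size $2^k-1$} here means: for $i\ne j$, the random variables ($i$-th function output, $j$-th function output) are independent and uniform. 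This follows because $(\alpha_i b, \alpha_j b)$ ranges over a $2$-dimensional-in-an-appropriate-sense set... hmm, actually $(\alpha_i b,\alpha_j b)$ for varying $b$ traces out a $1$-dimensional $\mathbb{F}_2$-subspace-worth of pairs, not all $2^{2k}$ pairs, so they are NOT independent as $k$-bit values. So the correct reading of the lemma must be that the output length is $1$ bit (or small), and "pairwise independent hash family of size $2^k-1$" means a set of $2^k-1$ \emph{single-bit} functions whose outputs are pairwise independent — equivalently $2^k-1$ pairwise independent bits generated from $k$ seed bits. Then with $b$ uniform in $\mathbb{F}_{2^k}$ and $\alpha_i$ distinct nonzero, setting $X_i := \langle \alpha_i, b\rangle_{\mathbb{F}_2}$ (the $\mathbb{F}_2$-inner product), each $X_i$ is an unbiased bit and for $i\ne j$, $(X_i,X_j) = (\langle\alpha_i,b\rangle,\langle\alpha_j,b\rangle)$ is uniform on $\{0,1\}^2$ precisely because $\alpha_i,\alpha_j$ are $\mathbb{F}_2$-linearly independent (distinct and nonzero over $\mathbb{F}_2$), so the map $b\mapsto(X_i,X_j)$ is a surjective linear map with all fibers of equal size $2^{k-2}$. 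That is the proof.

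\textbf{Main obstacle and write-up.} The main obstacle is purely one of \emph{interpretation}: pinning down what "hash family of size $2^k-1$ using $k$ fresh random bits" means so that the statement is both true and matches how the paper uses it later (splitting shared randomness among $n$ instances so that any two instances see independent randomness). I expect the intended meaning is exactly: from $k$ uniform bits one can extract $2^k-1$ pairwise independent uniform bits (and by grouping, pairwise independent strings, at the cost of needing $k$ bits per output bit — which is why in the application each instance uses $m=n^c$ bits and one takes $k = \Theta(m) = \Theta(n^c) = o(n\log n)$ blocks... no, one takes $2^k-1 \ge n$ so $k = \lceil\log(n+1)\rceil$, and for $m$-bit outputs one runs $m$ independent copies of the construction, total $mk = O(n^c\log n)$ fresh bits, which is $o(n\log n)$). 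Once the statement is correctly read, the proof is the two linear-algebra facts above: a single uniform seed $b\in\mathbb{F}_2^k$, outputs $X_i=\langle\alpha_i,b\rangle$ for $2^k-1$ distinct nonzero $\alpha_i\in\mathbb{F}_2^k$, with (i) each $X_i$ uniform since $\alpha_i\ne 0$, and (ii) $(X_i,X_j)$ uniform on $\{0,1\}^2$ for $i\ne j$ since distinct nonzero vectors over $\mathbb{F}_2$ are linearly independent, making $b\mapsto(X_i,X_j)$ a surjective $\mathbb{F}_2$-linear map with uniform fibers. No step here is technically hard; I would just state the construction, verify (i) and (ii), and remark how it is invoked with per-instance output length handled by taking independent copies of the seed.
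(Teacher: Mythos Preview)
The paper does not actually prove this lemma; it states it with a citation to Carter--Wegman and moves on. So there is nothing to compare against on the paper's side. Your final construction---sample $b$ uniformly from $\mathbb{F}_2^k$ and set the $i$-th output to $\langle \alpha_i, b\rangle$ for the $2^k-1$ nonzero $\alpha_i\in\mathbb{F}_2^k$---is correct and standard, and your reading of the statement (a family producing $2^k-1$ pairwise independent uniform bits from a $k$-bit seed, then run in $m$ independent copies to supply $m$-bit strings to each of up to $2^k-1$ instances) matches exactly how the paper uses it in Algorithm~\ref{algo:noisysort} and in the remark following the lemma.

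Two minor comments on the write-up. First, the exposition wanders through two incorrect attempts (the affine family with $2k$ bits, and the field-multiplication family with $k$-bit outputs) before landing on the right construction; in a clean proof you should just state the inner-product construction directly. Second, the key fact ``distinct nonzero vectors over $\mathbb{F}_2$ are linearly independent'' is correct but deserves one line of justification (the only nonzero scalar is $1$, so linear dependence forces equality), since over any larger field it would be false.
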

By using $m$ fully independent copies of pairwise independent hash families, we can achieve pairwise independence between instances using $O(m\log n)$ fresh random bits.

\section{Noisy Sorting Algorithm}
In this section, we will present our algorithm for noisy sorting and prove Theorem~\ref{thm:sort-upperbound}. We will first present the following randomized version of our algorithm, and then convert it to a deterministic one in Section~\ref{sec:noisy-sort-derandomize}.

\begin{thm}
\label{thm:sort-upperbound-randomized}
There exists a  randomized algorithm for \NoisySorting{n} with error probability $o(1)$ which always uses at most $$(1+o(1))\left(\frac{1}{I(p)} +\frac{1}{(1-2p) \log \frac {1-p}p}\right) n\log n$$ noisy comparisons, and uses $O(n)$ random bits in expectation.
\end{thm}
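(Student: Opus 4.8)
The plan is to realize the "remote quicksort with one level of recursion" strategy sketched in the technical overview, carefully tracking both the expected number of noisy comparisons and the expected number of random bits. Concretely, I would proceed in five stages.

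\textbf{Stage 1: Sampling pivots and sorting them.} First, sample a uniformly random subset $S \subseteq [n]$ of size $s := \lceil n/\log n\rceil$ (this costs $O(n)$ random bits, or fewer if done carefully). Sort $S$ using the algorithm of Corollary~\ref{cor:simple_sort} with failure probability $\delta_S := 1/n$; this uses $O(s \log(s/\delta_S)) = O\left(\frac{n}{\log n}\cdot \log n\right) = O(n)$ noisy comparisons in expectation and $O(s\log s) = O(n)$ random bits, both lower-order. The sorted pivots partition the remaining $n - s$ elements into $s+1$ ``buckets'' (sublists) $B_0,\dots,B_s$ determined by consecutive pivots. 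With high probability every bucket has size $O(\log^2 n)$; I would record this concentration fact (a standard balls-in-bins / order-statistics estimate) since it controls the lower-order terms later. For the random-bit count I would invoke Lemma~\ref{lem:pairwise-indep-hash} so that the $n-s$ instances of binary search below share randomness pairwise-independently using only $O(\operatorname{polylog}(n)\cdot \log n) = o(n\log n)$ fresh bits --- but in fact Theorem~\ref{thm:BHbinary} only needs $O(\log n)$ bits per call, so the bottleneck is designing the shared-randomness scheme so the total is $O(n)$ in expectation; I would use $m = O(\log n)$ independent pairwise-independent families over the $s$ sorted-pivot calls.

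\textbf{Stage 2: Routing the remaining elements.} For each of the $n-s$ non-pivot elements, run \OracleNoisyBinarySearch{} (via Theorem~\ref{thm:BHbinary}) against the sorted pivot array to find its bucket, with error probability $\delta_{\mathrm{route}} := 1/\log n$. Each call costs $(1+o(1))\frac{\log s + O(\log\log n)}{I(p)} = (1+o(1))\frac{\log n}{I(p)}$ comparisons in expectation, so summing over $n - s$ elements gives $(1+o(1))\frac{n\log n}{I(p)}$ --- this is the first term of the target bound.

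\textbf{Stage 3: Sorting each bucket.} This is where the second term $\frac{1}{(1-2p)\log\frac{1-p}{p}}$ comes from, and I expect it to be \textbf{the main obstacle}. I need an auxiliary algorithm that sorts $m$ elements \emph{known to lie between two fixed pivots}, correctly with probability $\ge 1 - o(1/n)$, using only $O(m\log m) + (1+o(1))\frac{m\log n}{(1-2p)\log\frac{1-p}{p}}$ noisy comparisons. The idea: first sort the $m$ elements ``cheaply'' to get a candidate order using $O(m\log m)$ comparisons at constant per-comparison error (e.g., merge-insertion with $O(1)$-repetition queries, or Corollary~\ref{cor:simple_sort} with constant $\delta$ --- but that gives $O(m\log(m/\delta))$, fine since $\delta$ constant), then \emph{verify and repair}: for each of the $m-1$ adjacent pairs in the candidate order, run the repeated-query comparison of Lemma~\ref{lem:repeated_query} with error probability $\delta' := 1/(n\cdot\operatorname{poly}\log n)$ (roughly $1/(n m^2)$), costing $f_p(\delta') = (1+o(1))\frac{\log(nm^2)}{(1-2p)\log\frac{1-p}{p}} = (1+o(1))\frac{\log n}{(1-2p)\log\frac{1-p}{p}}$ per pair since $m = \operatorname{polylog} n$; summing over $m-1$ pairs gives the $(1+o(1))\frac{m\log n}{(1-2p)\log\frac{1-p}{p}}$ term, and over all buckets $(1+o(1))\frac{n\log n}{(1-2p)\log\frac{1-p}{p}}$. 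If a verification reveals an inversion, I must re-sort and re-verify; the delicate point is arguing this happens rarely enough that the expected extra cost is lower-order --- one clean way is to note the cheap sort already has $o(1)$ dislocation with good probability, so at most a few repairs are needed, and to bound the geometric-type recursion in expectation.

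\textbf{Stage 4: Fixing mis-routed elements.} The routing in Stage 2 sends each element to the wrong bucket with probability $\le 1/\log n$, so in expectation $O(n/\log n)$ elements are misplaced (and with high probability $O(n/\log n \cdot \operatorname{polylog} n)$ bound suffices, but expectation is what I need). For each such element, re-insert it into the correct place in the global sorted order using Theorem~\ref{thm:BHbinary} with error probability $1/n^2$, costing $O(\log n)$ comparisons each; total $O(n/\log n \cdot \log n) = O(n)$ in expectation, lower-order. Detecting which elements are misplaced is itself done via cheap comparisons against the neighboring pivots at error $1/n^2$, again $O(n)$ total. \textbf{Finally}, I would assemble the bounds: the two main terms add to $(1+o(1))\left(\frac{1}{I(p)}+\frac{1}{(1-2p)\log\frac{1-p}{p}}\right)n\log n$, all other stages contribute $O(n) + O(n\log\log n) = o(n\log n)$, the total random bits are $O(n)$ in expectation, and a union bound over the $O(n)$ failure events (each $o(1/n)$ or $1/n^2$) gives overall error $o(1)$.

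Note that the ``always uses at most'' in the theorem statement versus ``in expectation'' in the analysis is handled by a standard truncation: run the algorithm with a hard cap of, say, $2\times$ the target comparison count, which by Markov is exceeded with probability $o(1)$, and declare failure (absorbed into the $o(1)$ error) if the cap is hit --- this converts the expected-comparison guarantee into a worst-case one at the cost of a negligible additive error, and similarly for random bits if needed, though the cleanest route is to keep random bits ``in expectation'' as the theorem allows.
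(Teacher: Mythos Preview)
Your overall five-stage architecture matches the paper's Algorithm~\ref{algo:noisysort} closely, so the high-level plan is sound. However, there are two places where your proposal has real gaps that the paper handles differently and with more care.

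\textbf{Stage 3 (bucket sorting).} Your verify-and-restart scheme is not quite the paper's approach, and as written it loses the leading constant. You propose a cheap sort with \emph{constant} failure probability $\delta$, followed by verification of adjacent pairs; if verification fails you restart. But then the expected number of rounds is $\frac{1}{1-\delta}$, a constant strictly bigger than $1$, and the expensive verification cost $(1+o(1))\frac{m\log n}{(1-2p)\log\frac{1-p}{p}}$ gets multiplied by that constant, destroying the $(1+o(1))$ factor you need. This is fixable (take $\delta = o(1)$ in the cheap sort, which still costs $O(m\log m)$ since $m=\operatorname{polylog} n$), but you should say so explicitly. The paper avoids the issue entirely: its \textsc{WeakSort} (Lemma~\ref{lem:weak_sort}) first sorts the bucket with error $m^{-2}$ so that the expected number of remaining inversions is at most $1$, and then runs a single pass of insertion sort (Lemma~\ref{lem:sort_inversion}) with high-confidence comparisons, which on a list with $t$ inversions makes only $m-1+t$ such comparisons. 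No restart loop, no multiplicative constant.

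\textbf{The truncation step.} Your final paragraph contains the more serious error. You claim that capping at $2\times$ the target and invoking Markov gives an $o(1)$ overshoot probability. It does not: if $\bE[\tau]=(1+o(1))T$, Markov only yields $\bP[\tau>2T]\le \tfrac{1+o(1)}{2}$, and capping at $(1+\epsilon)T$ gives a bound close to $1$. To get ``always uses at most $(1+o(1))T$'' with $o(1)$ failure you need genuine concentration, not Markov. The paper devotes substantial machinery to this: Lemma~\ref{lem:variance_bound} builds ``safe'' versions of each subroutine with controlled second moment (by restarting any run that exceeds $m\log m$ queries), Corollary~\ref{coro:safe_algo} records the resulting variance bounds, pairwise-independent randomness (Lemma~\ref{lem:pairwise-indep-hash}) keeps the per-element calls pairwise independent, and then Chebyshev on the sum of the $\Theta(n)$ calls gives the needed $o(1)$ tail (Lemma~\ref{lem:noisy_sort_time}). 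Your proposal does not supply any of this, and without it the conversion from expected to worst-case query count fails.
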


We start with a subroutine that sorts a list with a small number of inversion pairs.

\begin{lemma}
\label{lem:sort_inversion}
Fix any parameter $\sigma \in (0, 1)$ which can depend on $n$. Given a list of $n$ elements with $t$ inversion pairs, there exists a deterministic algorithm using noisy comparisons that sorts these $n$ elements with error probability $\le (n-1+t)\sigma$ using $(n-1+t)f(\sigma) + (n-1+t)\sigma n^2 f(\sigma)$ noisy queries in expectation. The algorithm does not have to know $t$ in advance.
\end{lemma}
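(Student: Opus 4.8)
The plan is to use a variant of insertion sort built on the pairwise comparison routine of Lemma~\ref{lem:repeated_query} (i.e.\ \textsc{LessThan}), combined with a re-insertion step to correct errors. Concretely, I would maintain a working list that is initially the given list, and process the elements left to right; to insert the $i$-th element into the already-processed prefix, I would scan it backwards (or do a linear search), using \textsc{LessThan}$(\cdot,\cdot,\sigma)$ to compare the new element against its current neighbor, moving it past the neighbor whenever the comparison says so, and stopping once a comparison says it is correctly placed. The key observation is that if all comparisons made during the insertion of element $i$ return the correct answer, then the number of comparisons used to place it equals $1$ plus the number of elements it must hop over, i.e.\ $1$ plus the number of inversions it forms with the current prefix; summed over all $i$ this telescopes to $n-1+t$ comparisons (the $t$ inversions of the original list, plus one ``stop'' comparison per element). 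Since \textsc{LessThan} uses $f(\sigma)$ queries in expectation per comparison, the clean run costs $(n-1+t)f(\sigma)$ queries in expectation.

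Next I would handle the event that some comparison errs. Each invocation of \textsc{LessThan} errs with probability $\le \sigma$, and on a clean run there are exactly $n-1+t$ invocations, so by a union bound the probability that the whole procedure makes any error is $\le (n-1+t)\sigma$; this gives the claimed error bound. For the query count I would argue as follows: condition on the first bad comparison. Before that point at most $n-1+t$ comparisons have occurred, costing $(n-1+t)f(\sigma)$ in expectation. After a bad comparison the list may be far from sorted, but it still has at most $\binom n2 \le n^2$ inversion pairs, so I can afford to ``restart'' the insertion-sort pass from scratch on the current list: a fresh pass costs at most $(n-1+n^2)f(\sigma) \le 2n^2 f(\sigma)$ queries in expectation if it runs cleanly, and itself errs with probability $\le 2n^2\sigma$, so by a geometric-series / recursive argument the expected extra cost contributed by runs following an error is bounded by $(n-1+t)\sigma \cdot O(n^2 f(\sigma))$, which is absorbed into the stated $(n-1+t)\sigma n^2 f(\sigma)$ term (adjusting constants as needed). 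The point is that an error occurs with probability $\le (n-1+t)\sigma$, and each such occurrence only forces $O(n^2 f(\sigma))$ additional expected work, and the error events for successive passes are independent enough (each pass's comparisons use fresh query randomness) that the total is a convergent series.

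Two technical points I would be careful about. First, the algorithm must terminate and output a genuinely sorted list even when errors occur, so the final pass should be a verification-and-repair loop: repeatedly run an insertion-sort pass, and only halt when a pass makes no moves at all (meaning every adjacent pair was compared and found to be in order on that pass). This guarantees the output is sorted precisely when the last pass's $n-1$ comparisons were all correct, which again happens except with small probability. Second, I need that the algorithm ``does not have to know $t$ in advance'': this is automatic because the insertion procedure discovers the stopping point adaptively from the comparison results, and the analysis only uses $t$ to bound the expected number of comparisons, not to drive the algorithm.

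The main obstacle I expect is the second (query-count) part: carefully bounding the expected number of queries after the first error without the bound blowing up. The subtlety is that a single erroneous comparison can leave the working list arbitrarily scrambled, so the naive bound on the ``cost after an error'' is the worst-case $\Theta(n^2 f(\sigma))$, and one must verify that this is multiplied only by the small probability $O((n-1+t)\sigma)$ of reaching that situation — and that subsequent errors compound only geometrically. Making this rigorous requires either a clean recursive estimate (``let $Q$ be the expected query count starting from a list with at most $n^2$ inversions; then $Q \le 2n^2 f(\sigma) + 2n^2\sigma\cdot Q$, so $Q \le O(n^2 f(\sigma))$ for $\sigma$ small'') or an explicit geometric summation over the number of error-triggered restarts; I would go with the recursive version since it is cleanest, and then assemble the $(n-1+t)f(\sigma) + (n-1+t)\sigma\, n^2 f(\sigma)$ bound by splitting on whether the first clean pass succeeds.
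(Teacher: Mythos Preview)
Your first paragraph is exactly the paper's algorithm and exactly the paper's analysis of the clean run. Everything after that is unnecessary and partly wrong.

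The key observation you are missing is that \emph{a single pass of insertion sort always halts after at most $\binom n2$ comparisons, regardless of whether those comparisons are correct}. In Algorithm~\ref{algo:sort_inversion}, the inner loop at step $i$ iterates at most $i-1$ times (since $j$ decreases), so the total number of \textsc{LessThan} calls is at most $\sum_{i=2}^n (i-1)=\binom n2$ unconditionally. Hence the paper simply runs one pass and splits the expectation: with probability $\ge 1-(n-1+t)\sigma$ the first $n-1+t$ calls are all correct and the pass halts after exactly those calls, contributing $(n-1+t)f(\sigma)$; on the complementary event of probability $\le (n-1+t)\sigma$ the pass still halts within $\binom n2\le n^2$ calls, contributing at most $(n-1+t)\sigma\cdot n^2 f(\sigma)$. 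No restarts, no verification loop, no recursion.

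Your restart/verification machinery is not only superfluous but actually problematic. First, your recursive estimate $Q\le 2n^2 f(\sigma)+2n^2\sigma\, Q$ requires $2n^2\sigma<1$ to converge, yet the lemma allows any $\sigma\in(0,1)$; for $\sigma$ not tiny your bound is vacuous while the paper's still holds. Second, your ``first technical point'' is a misreading of the statement: the lemma asks only for error probability $\le (n-1+t)\sigma$, not that the output be genuinely sorted whenever errors occur. Adding a verification-and-repair loop changes both the error probability (it is no longer governed by the first $n-1+t$ comparisons) and the expected query count in ways that do not obviously recover the stated bound.
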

\begin{proof}
Consider Algorithm \ref{algo:sort_inversion}, which is essentially insertion sort.
\begin{algorithm}[h]
\caption{} \label{algo:sort_inversion}
\begin{algorithmic}[1]
\Procedure{SortInversion}{$n, (a_i)_{i\in [n]}, p,\sigma$}
\For{$i=1\to n$}
\For{$j=i\to 2$}
\If{\textsc{LessThan}($a_j,a_{j-1},\sigma$)}
\State Swap $a_j$ and $a_{j-1}$
\Else
\State \textbf{break}
\EndIf
\EndFor
\EndFor
\State \Return $(a_i)_{i\in [n]}$
\EndProcedure
\end{algorithmic}
\end{algorithm}

By union bound, with probability $\ge 1-(n-1+t)\sigma$, the first $(n-1+t)$ calls to \textsc{LessThan} all return correctly. Conditioned on this happening, the algorithm acts exactly like a normal insertion sort, which halts after the first $(n-1+t)$ calls to \textsc{LessThan}, which takes $(n-1+t)f(\sigma)$ noisy comparisons in expectation. 

With probability $\le (n-1+t)\sigma$, the algorithm does not run correctly, but the algorithm still halts in at most $\binom n2f(\sigma) \le n^2 f(\sigma)$ queries in expectation.
\end{proof}

Next, we use Lemma~\ref{lem:sort_inversion} to construct the following subroutine. It is ultimately used to sort some small sets of elements in the main algorithm.

\begin{lemma}
\label{lem:weak_sort}
Fix any parameter $\delta\in (0, 1)$ which can depend on $n$. There is a randomized algorithm for \NoisySorting{n}  with error probability at most $\delta$ using  $O(n\log n) +  n f(\delta/n) + n^2 \delta f(\delta/n)$ noisy comparisons in expectation, and using $O(n \log n)$ random bits always.
\end{lemma}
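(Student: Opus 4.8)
The plan is to combine Corollary~\ref{cor:simple_sort} (a fast but somewhat imprecise sorting algorithm) with Lemma~\ref{lem:sort_inversion} (a cleanup phase that is cheap when the input is nearly sorted). First I would run the algorithm of Corollary~\ref{cor:simple_sort} with failure probability parameter set to a constant, say $1/2$, which costs $O(n\log n)$ noisy comparisons in expectation and $O(n\log n)$ random bits, and produces a candidate ordering $(b_i)_{i\in[n]}$. With probability $\ge 1/2$ this ordering is fully sorted; in general it has some random number $T$ of inversion pairs relative to the true order. Then I would feed $(b_i)_{i\in[n]}$ into \textsc{SortInversion} from Lemma~\ref{lem:sort_inversion} with parameter $\sigma = \delta/n^{?}$ chosen so that the error bound $(n-1+T)\sigma$ is at most $\delta$; since \textsc{SortInversion} does not need to know $T$ in advance, this composition is legitimate.

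The key quantitative point is to control $T$. Conditioned on the first phase succeeding (probability $\ge 1/2$) we have $T=0$, so \textsc{SortInversion} costs $(n-1)f(\sigma)$ in expectation on that event. On the complementary event $T$ can be as large as $\binom n2$, but that event has probability $\le 1/2$, and more importantly we can drive its contribution down: rather than a single run of Corollary~\ref{cor:simple_sort}, I would repeat it $k = O(\log n)$ times independently (or equivalently run it once with failure probability $1/\poly(n)$), so that $\bP[T>0] \le \delta'$ for a tiny $\delta'$; this still costs only $O(n\log n)$ comparisons and $O(n\log n)$ random bits. Then the cleanup phase costs at most $(n-1)f(\sigma) + \bP[T>0]\cdot n^2 f(\sigma) \cdot (\text{something})$ in expectation — here I need to be slightly careful, because Lemma~\ref{lem:sort_inversion}'s stated bound $(n-1+t)f(\sigma)+(n-1+t)\sigma n^2 f(\sigma)$ is in terms of a fixed $t$, so I would invoke it conditionally on the value of $T$ and take expectations, using $\bE[(n-1+T)f(\sigma)] = (n-1+\bE[T])f(\sigma)$ and $\bE[(n-1+T)] \le n + \bP[T>0]\binom n2$. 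Choosing the failure probability of the first phase to be, e.g., $\delta/n^3$ makes $\bP[T>0]\binom n2 = o(1)$, and also lets us take $\sigma = \delta/(2n)$ so that the total error is $\le \delta/n^3 + (n-1+T)\cdot \delta/(2n) \le \delta$ on the relevant event; the $O(n\log n)$ first-phase error can be absorbed.

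Putting the pieces together: the first phase gives $O(n\log n)$ comparisons and $O(n\log n)$ random bits; the cleanup phase gives $(n-1+\bE[T])f(\delta/(2n)) + (n-1+\bE[T])\cdot(\delta/(2n))\cdot n^2 f(\delta/(2n))$ comparisons in expectation with $\bE[T] \le n-1$, which is $O(n)f(\delta/n) + O(n^2\delta)f(\delta/n)$, and uses no random bits. Since $f(\delta/(2n)) = \Theta(f(\delta/n))$ (the function $f$ changes only by a constant factor under rescaling $\delta$ by a constant, as is visible from its definition), this matches the claimed bound $O(n\log n) + nf(\delta/n) + n^2\delta f(\delta/n)$. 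The main obstacle is the bookkeeping around the low-probability bad event: I must make sure that both the error contribution and the expected-query contribution of the event $\{T>0\}$ are genuinely negligible, which forces the first-phase failure probability down to $1/\poly(n)$, and I must check that this does not blow up the random-bit count (it does not, since $O(\log n)$ repetitions of an $O(n\log n)$-bit algorithm still use $O(n\log n)$ bits). Everything else is a routine union bound and expectation computation.
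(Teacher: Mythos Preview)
Your two-phase approach (run Corollary~\ref{cor:simple_sort} to get a nearly-sorted list, then clean up with \textsc{SortInversion}) is exactly what the paper does. The difference is in the parameters, and your choices do not quite deliver the stated bound.

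The paper sets the first-phase failure probability to $n^{-2}$, \emph{independent of $\delta$}, and takes $\sigma=\delta/n$. The point you are missing is that the first-phase error never enters the final error bound directly: whatever permutation phase one outputs, \textsc{SortInversion} still has error $\le (n-1+T)\sigma$ for that $T$, so the overall error is simply $\bE[(n-1+T)\sigma]$. Phase one only needs to make $\bE[T]$ small, and $P_e=n^{-2}$ already gives $\bE[T]\le P_e\cdot n^2\le 1$. With $\sigma=\delta/n$ one gets error $\le n\sigma=\delta$ and expected queries $\le n f(\delta/n)+n^2\delta f(\delta/n)$ in phase two, with phase one costing exactly $O(n\log n)$.

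Your choice of first-phase error $\delta/n^3$ is harmless for the error analysis but costs $O(n\log(n^4/\delta))=O(n\log n)+O(n\log(1/\delta))$ queries in phase one. Since $n\log(1/\delta)=\Theta_p(n f(\delta/n))$, this adds a second copy of the $n f(\delta/n)$ term with a $p$-dependent constant, so you end up with $C\cdot n f(\delta/n)$ for some $C>1$ rather than exactly $n f(\delta/n)$. The coefficient here is not cosmetic: it feeds into the $(1+o(1))$ constant in Corollary~\ref{coro:safe_algo} Part~3 and ultimately into the leading constant of Theorem~\ref{thm:sort-upperbound}. (Your earlier suggestion of ``$1/\poly(n)$'' independent of $\delta$ is the right instinct; just commit to it.)

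Two smaller slips: $O(\log n)$ independent repetitions of an $O(n\log n)$-random-bit algorithm use $O(n\log^2 n)$ bits, not $O(n\log n)$, and combining several sorted outputs into one is not free; the single-run option is the one that works. Also, once $\bE[T]\le 1$ you can take $\sigma=\delta/n$ rather than $\delta/(2n)$; the factor of $2$ is unnecessary.
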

\begin{proof}
We first use Corollary~\ref{cor:simple_sort} with error probability $P_e \le n^{-2}$, then use Lemma~\ref{lem:sort_inversion} with error probability $\sigma=\delta/n$.

After the first step, the expected number of inversion pairs is
\begin{align*}
\bE[t] \le P_e n^2 \le 1,
\end{align*}
so the second step uses at most $n f(\delta/n) + n^2 \delta f(\delta/n)$ queries in expectation.

The overall error probability is at most $\bE[(n-1+t)\sigma] \le n \sigma = \delta$.
\end{proof}

The following lemma allows us to construct algorithms with guaranteed tail behavior for the number of queries. This will be helpful for concentration bounds.
\begin{lemma} \label{lem:variance_bound}
Suppose we have an algorithm $\cA$ (solving a certain task) that has error probability $\le \delta$ and number of queries $\tau$ with $\bE[\tau] \le m$, and uses $\le r$ random bits always.
Then we can construct an algorithm $\cB$ solving the same task satisfying the following properties:
\begin{itemize}
    \item $\cB$ has error probability $(1+o_m(1))\delta$;
    \item Let $\rho$ be the number of queries algorithm $\cB$ uses. Then 
    \begin{align*}
        \bE[\rho] &\le (1+o_m(1))m,\\
        \bE[\rho^2] & = O(m^3).
    \end{align*}
    \item Let $r'$ be the number of random bits algorithm $\cB$ uses. Then
    \begin{align*}
        \bE[r'] &\le (1+o_m(1)) r, \\
        \bE[(r')^2] &= O(r^2).
    \end{align*}
\end{itemize}
\end{lemma}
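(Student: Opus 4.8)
The plan is to obtain $\cB$ from $\cA$ by a \emph{truncate-and-restart} wrapper. Fix a threshold $T := m\lceil \sqrt m\rceil$ (any $T$ with $T/m\to\infty$ and $T = O(m^{3/2})$ would do). Algorithm $\cB$ proceeds in rounds: in each round it runs a fresh independent copy of $\cA$ with new random bits; if that copy halts within $T$ noisy queries, $\cB$ returns its output and stops, and otherwise $\cB$ aborts the copy after exactly $T$ queries and begins a new round. Since $\cA$ always uses at most $r$ random bits and each round uses at most $T$ queries, $\cB$ is well defined; and since $\bE[\tau]\le m<\infty$, it halts almost surely.

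The first step is to control the number of rounds $N$. By Markov's inequality, a single run of $\cA$ exceeds $T$ queries with probability $q := \bP[\tau>T]\le m/T = 1/\lceil\sqrt m\rceil = o_m(1)$. Consequently $N$ is geometric with success probability $1-q$, so $\bE[N] = \frac{1}{1-q} = 1+o_m(1)$ and $\bE[N^2] = \frac{1+q}{(1-q)^2} = O(1)$. Moreover, for each $i$ the event $\{N\ge i\}$ is determined by the first $i-1$ rounds (it says rounds $1,\dots,i-1$ were all aborted), hence is independent of everything happening in round $i$; in particular $\bP[N\ge i] = q^{i-1}$.

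The three bullets now follow by comparing each quantity to $N$ times a per-round quantity and summing a geometric series, using the independence just noted. For the output: $\cB$'s answer is the output of the first non-aborted run, so $\bP[\cB\text{ errs}] = \sum_{i\ge 1}\bP[N\ge i]\cdot\bP[\tau\le T\text{ and }\cA\text{ errs}] \le \frac{\delta}{1-q} = (1+o_m(1))\delta$. For the queries, writing $\rho = \sum_{i\ge 1}\mathbbm 1\{N\ge i\}\min(\tau_i,T)$ gives $\bE[\rho] = \sum_{i\ge 1}q^{i-1}\bE[\min(\tau,T)]\le \frac{m}{1-q} = (1+o_m(1))m$, while the crude bound $\rho\le NT$ gives $\bE[\rho^2]\le T^2\bE[N^2] = O(m^3)$. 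For the random bits, $r' = \sum_{i=1}^N r_i\le Nr$ where $r_i\le r$ is the number of bits used in round $i$, so $\bE[r']\le \frac{r}{1-q} = (1+o_m(1))r$ and $\bE[(r')^2]\le r^2\bE[N^2] = O(r^2)$.

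The only point needing care is that restarting must not inflate the error beyond the $(1+o_m(1))$ factor: dropping the conditioning on $\{\tau\le T\}$ in the error computation is a valid upper bound since $\bP[\tau\le T\text{ and }\cA\text{ errs}]\le\bP[\cA\text{ errs}]\le\delta$, and the geometric factor $\sum_{i\ge1}q^{i-1} = \frac{1}{1-q}$ is $1+o_m(1)$ precisely because $q = o_m(1)$ — which is where the choice $T = \omega(m)$ enters, while $T = O(m^{3/2})$ is used only for the second moment of $\rho$. Beyond verifying these routine estimates there is no substantial obstacle.
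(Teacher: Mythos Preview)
Your proposal is correct and follows essentially the same truncate-and-restart strategy as the paper's proof. The only cosmetic differences are the choice of threshold (you take $T\approx m^{3/2}$ whereas the paper takes $k=m\log m$, both satisfying $T/m\to\infty$) and the style of analysis: you bound everything via the geometric round count $N$ and the crude inequalities $\rho\le NT$, $r'\le Nr$, while the paper sets up and solves recursive inequalities for $\bE[\rho]$, $\bE[\rho^2]$, $\bE[r']$, $\bE[(r')^2]$ directly; both routes yield the stated bounds.
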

\begin{proof}
Let $k$ be a parameter to be chosen later.
Consider the following algorithm $\cB$:
\begin{enumerate}
    \item Run algorithm $\cA$ until it halts, or it is about to use the $(k+1)$-th query.
    \item If $\cA$ halts, then we return the return value of $\cA$; otherwise, we restart the whole algorithm.
\end{enumerate}

Let us compute the probability of a restart.
By Markov's inequality,
\begin{align*}
\bP[\text{restart}] = \bE[\tau > k] \le \frac mk.
\end{align*}

Algorithm $\cB$'s error probability $P_e(\cB)$ is
\begin{align*}
P_e(\cB) = \bP[\text{$\cA$ errs} \mid \tau \le k]
\le \frac{\bP[\text{$\cA$ errs}]}{\bP[\tau \le k]}  \le \frac{\delta}{1-\frac mk}.
\end{align*}

The expected number of queries $\rho$ used by Algorithm $B$ satisfies
\begin{align*}
\bE[\rho] &= \bE[\tau \land k] + \bP[\text{restart}] \bE[\rho] \\
& \le m + \frac mk \bE[\rho].
\end{align*}
Solving this, we get
\begin{align*}
\bE[\rho] \le \frac{m}{1-\frac mk}.
\end{align*}

The second moment satisfies
\begin{align*}
& \bE[\rho^2] \le \bE[(\tau\land k)^2] + \bP[\text{restart}] \bE[(k+\rho)^2] \\
& \le k^2 + \frac mk \cdot (\bE[\rho^2] + 2k \bE[\rho] + k^2) \\
& \le k^2 + mk + \frac{2m^2}{1-\frac mk} + \frac mk \cdot \bE[\rho^2].
\end{align*}
Solving this, we get
\begin{align*}
\bE[\rho^2] \le \left(1-\frac mk\right)^{-1} \left( k^2 + mk + \frac{2m^2}{1-\frac mk}\right).
\end{align*}

Similar to $\rho$, we have
\begin{align*}
    \bE[r'] &\le r + \bP[\text{restart}] \bE[r'] \\
    &\le r + \frac mk \cdot \bE[r'].
\end{align*}
Solving this we get
\begin{align*}
    \bE[r'] \le \frac {r}{1-\frac mk}.
\end{align*}
Also,
\begin{align*}
    \bE[(r')^2] &\le r^2 + \bP[\text{restart}] \bE[(r+r')^2] \\
    & \le r^2 + \frac mk (r^2 + 2r\bE[r'] + \bE[(r')^2]) \\
    & \le r^2 + \frac mk \cdot r^2 + \frac mk \cdot \frac{2r^2}{1-\frac mk} + \frac mk \cdot \bE[(r')^2].
\end{align*}
Solving this we get
\begin{align*}
    \bE[(r')^2] \le (1-\frac mk)^{-1} (r^2 + \frac mk \cdot r^2 + \frac mk \cdot \frac{2r^2}{1-\frac mk}).
\end{align*}

Choosing $k = m\log m$, we have $1/ (1-m/k) = 1+o_m(1)$, so 
we get
\begin{align*}
P_e(\cB) & \le (1+o_m(1))\delta,\\
\bE[\rho] & \le (1+o_m(1)) m,\\
\bE[\rho^2] & \le O(m^2\log^2 m) = O(m^3),\\
\bE[r'] & \le (1+o_m(1)) r,\\
\bE[(r')^2] &\le O(r^2).
\end{align*}
\end{proof}

Using Lemma~\ref{lem:variance_bound}, we are able to construct ``safe'' (i.e., with guaranteed tail behavior) versions of algorithms introduced before.
\begin{coro}[Safe Algorithms]
\label{coro:safe_algo}\
\begin{enumerate}
    \item Given $\delta\in (0, 1)$, there exists a randomized algorithm \textsc{SafeBinarySearch} for \NoisyBinarySearch{n}, with error probability $\le \delta$,  with the number of queries $\tau$ satisfying
    \begin{align*}
    \bE[\tau] = (1+o(1))\left(\frac{\log n + O(\log \frac 1\delta)}{I(p)}\right), \quad \Var[\tau] = O(\log^3 n + \log^3 \frac 1\delta).
    \end{align*}
    and with the number of used random bits $r$ satisfying
    \begin{align*}
    \bE[r] = O(\log n), \quad \Var[r] = O(\log^2 n).
    \end{align*}
    \item Given $\delta\in (0, 1)$, there exists a deterministic algorithm \textsc{SafeLessThan} which compares two elements with error probability $\le \delta$, with the number of queries $\tau$ satisfying
    \begin{align*}
    \bE[\tau] = (1+o_{1/\delta}(1)) \frac{\log \frac 1\delta}{(1-2p)\log \frac{1-p}p}, \quad
    \Var[\tau] = O(\log^3 \frac 1\delta).
    \end{align*}
    \item Given $\delta = O(\frac 1{n^3})$, there exists a randomized algorithm \textsc{SafeWeakSort} for \NoisySorting{n}, with error probability $\le \delta$,  with the number of queries $\tau$ satisfying 
    \begin{align*}
    \bE[\tau] = O(n\log n) + (1+o(1)) \frac{n \log \frac 1{\delta}}{(1-2p)\log \frac{1-p}p}, \quad
    \Var[\tau] = O(n^3 \log^3 \frac 1\delta).
    \end{align*}
    and with the number of used random bits $r$ satisfying
    \begin{align*}
    \bE[r] = O(n \log n), \quad
    \Var[r] = O(n^2\log^2 n).
    \end{align*}
\end{enumerate}
\end{coro}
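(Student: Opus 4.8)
The plan is to obtain all three parts as direct applications of Lemma~\ref{lem:variance_bound} to algorithms that have already been constructed. In each case the recipe is the same: pick a base algorithm that solves the relevant task with error probability at most $\delta'$, with expected number of queries $m$, and using at most $r$ random bits always; feed it into Lemma~\ref{lem:variance_bound}, which returns an algorithm whose number of queries $\tau$ and number of random bits $r'$ satisfy $\bE[\tau]\le(1+o_m(1))m$, $\bE[\tau^2]=O(m^3)$, $\bE[r']\le(1+o_m(1))r$, $\bE[(r')^2]=O(r^2)$, while the error probability grows only to $(1+o_m(1))\delta'$. I would run the base algorithm with target error $\delta'=\delta/2$; since in each of the three settings $m\to\infty$ in the relevant limit, $(1+o_m(1))\delta/2\le\delta$, so the final error is at most $\delta$. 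The variance bounds then follow from $\Var[\tau]\le\bE[\tau^2]$ and $\Var[r]\le\bE[(r')^2]$ once $m$ and $r$ are expressed in terms of $n$ and $\delta$.

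For Part~1 I would take the base algorithm to be the one from Theorem~\ref{thm:BHbinary}, which solves \NoisyBinarySearch{n} with error $\le\delta'$, expected queries $m=(1+o(1))\frac{\log n+O(\log(1/\delta'))}{I(p)}$, and $r=O(\log n)$ random bits always; note $m\ge\frac{(1-o(1))\log n}{I(p)}\to\infty$, and replacing $\delta$ by $\delta/2$ changes $O(\log(1/\delta'))$ only by an additive constant. Using $(a+b)^3=O(a^3+b^3)$ turns $O(m^3)$ into $O(\log^3 n+\log^3\frac1\delta)$ and $O(r^2)$ into $O(\log^2 n)$, as claimed. For Part~2 the base algorithm is Algorithm~\ref{algo:less_than} (analysed in Lemma~\ref{lem:repeated_query}), which is deterministic ($r=0$) and uses $m=f_p(\delta')=(1+o_{1/\delta'}(1))\frac{\log(1/\delta')}{(1-2p)\log((1-p)/p)}$ expected queries. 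Here I would note that Lemma~\ref{lem:variance_bound} still applies to such a ``deterministic'' algorithm, because a restart re-runs the procedure against \emph{fresh} noisy-query randomness, so the restarts are genuinely independent; and since $r=0$ the output algorithm is still deterministic. (When $\delta$ is bounded below by a constant, $m$ is bounded and the claim is instead immediate from the exponential tails of the biased-random-walk stopping time inside \textsc{LessThan}; the interesting regime is $\delta\to0$, where $f_p(\delta)\to\infty$.)

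For Part~3 the base algorithm is the one from Lemma~\ref{lem:weak_sort}, which solves \NoisySorting{n} with error $\le\delta'$, expected queries $O(n\log n)+n\,f(\delta'/n)+n^2\delta' f(\delta'/n)$, and $O(n\log n)$ random bits always. This is where the hypothesis $\delta=O(n^{-3})$ is used. It gives $f(\delta'/n)=(1+o(1))\frac{\log n+\log(1/\delta')}{(1-2p)\log((1-p)/p)}$ with $\log n=O(\log(1/\delta'))$, so the per-element $\log n$ contribution in $n\,f(\delta'/n)$ folds into the $O(n\log n)$ term, leaving $n\,f(\delta'/n)=O(n\log n)+(1+o(1))\frac{n\log(1/\delta')}{(1-2p)\log((1-p)/p)}$; and $n^2\delta'=O(1/n)$ makes the last correction $n^2\delta' f(\delta'/n)$ negligible. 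Hence the base has $m=O(n\log n)+(1+o(1))\frac{n\log(1/\delta)}{(1-2p)\log((1-p)/p)}=O(n\log(1/\delta))$ and $r=O(n\log n)$, so Lemma~\ref{lem:variance_bound} gives $\bE[\tau^2]=O(m^3)=O(n^3\log^3\frac1\delta)$ and $\bE[(r')^2]=O(r^2)=O(n^2\log^2 n)$, which are the stated variance bounds.

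The only part that requires care is the bookkeeping in Part~3: deciding which of the $\log n$ and $\log(1/\delta)$ contributions dominate, and using $\delta=O(n^{-3})$ both to discard the $n^2\delta' f(\delta'/n)$ term and to certify $\log n=O(\log(1/\delta))$, so that $m=O(n\log(1/\delta))$ and thus $\bE[\tau^2]=O(n^3\log^3\frac1\delta)$. Everything else is a mechanical substitution of the parameters of Theorem~\ref{thm:BHbinary}, Lemma~\ref{lem:repeated_query}, and Lemma~\ref{lem:weak_sort} into Lemma~\ref{lem:variance_bound}.
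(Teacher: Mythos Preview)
Your proposal is correct and follows exactly the paper's approach: the paper's entire proof is the one-line ``Apply Lemma~\ref{lem:variance_bound} to Theorem~\ref{thm:BHbinary}, Lemma~\ref{lem:repeated_query}, and Lemma~\ref{lem:weak_sort} respectively,'' and you have simply spelled out the parameter substitutions (including the $\delta'=\delta/2$ slack to absorb the $(1+o_m(1))$ factor and the use of $\delta=O(n^{-3})$ to control the terms in Part~3) that the paper leaves implicit.
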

\begin{proof}
Apply Lemma~\ref{lem:variance_bound} to Theorem~\ref{thm:BHbinary}, Lemma~\ref{lem:repeated_query}, and Lemma~\ref{lem:weak_sort} respectively.
\end{proof}

We introduce our last subroutine below, which is used to sort a subset of $\Theta(\frac{n}{\log n})$ elements in the main algorithm. 

\begin{lemma}\label{lem:safe_simple_sort}
There exists a randomized algorithm \textsc{SafeSimpleSort} for \NoisySorting{n} with error probability $o(1)$ which always uses $O(n\log n)$ queries and $O(n \log n)$ random bits always.
\end{lemma}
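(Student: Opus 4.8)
The plan is to implement \textsc{SafeSimpleSort} as an insertion sort that calls \textsc{SafeBinarySearch} (Corollary~\ref{coro:safe_algo}(1)) for each insertion, run under hard global budgets on the number of queries and the number of random bits. Concretely: process the elements $a_1,\dots,a_N$ one at a time, and to insert $a_i$ into the sorted prefix, call \textsc{SafeBinarySearch} on the current array of $i-1$ elements with target $a_i$ and error parameter $\delta_{\mathrm{bs}} := 1/N^2$, using a block of fresh random bits disjoint from all other insertions. Maintain a global query counter and a global random-bit counter; if at any point the query counter would exceed $B := C_1 N\log N$ or the random-bit counter would exceed $R := C_2 N\log N$ (for suitable absolute constants $C_1,C_2$), immediately halt and output the array in its current order. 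Since $B$ and $R$ are hard caps, the algorithm always uses $O(N\log N)$ queries and $O(N\log N)$ random bits, so only the error probability requires attention.

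Correctness conditioned on no abort follows from the usual insertion-sort union bound: if every one of the $N$ calls to \textsc{SafeBinarySearch} returns the correct position, the output is sorted; and whenever calls $1,\dots,i-1$ were correct, call $i$ is invoked on a correctly sorted array, so \textsc{SafeBinarySearch}'s guarantee applies and a union bound gives that some call errs with probability at most $N\delta_{\mathrm{bs}} = 1/N$. Hence the total error is at most $\bP[\text{abort}] + 1/N$, and it remains to show $\bP[\text{abort}] = o(1)$, i.e.\ that the budgets $B,R$ are rarely exceeded.

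To control the budgets I would couple the actual execution with an \emph{idealized} one. In the idealized execution the $i$-th call to \textsc{SafeBinarySearch} is always fed the array obtained by sorting $a_1,\dots,a_{i-1}$ together with target $a_i$ (a deterministic input, given the list), but uses the same block of fresh random bits and sees the same noise realizations as in the actual execution. The idealized per-call query counts $\tau_1^*,\dots,\tau_N^*$ and random-bit counts $r_1^*,\dots,r_N^*$ are then mutually independent, since each depends only on that call's private randomness and its own noise bits; and by Corollary~\ref{coro:safe_algo}(1) applied with $n = i-1 \le N$ and $\delta = 1/N^2$ we have $\bE[\tau_i^*] = O(\log N)$, $\Var[\tau_i^*] = O(\log^3 N)$, $\bE[r_i^*] = O(\log N)$, $\Var[r_i^*] = O(\log^2 N)$, uniformly over the input to the search. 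On the event $E$ that all calls of the actual execution are correct — which has probability $\ge 1-1/N$ by the argument above — the actual and idealized executions feed identical inputs to every call, so the actual resource usage equals the idealized one; therefore $\bP[\text{actual queries} > B] \le \bP[E^c] + \bP[\sum_i \tau_i^* > B]$, and similarly for random bits. Choosing $C_1, C_2$ large enough that $B \ge 2\sum_i \bE[\tau_i^*]$ and $R \ge 2\sum_i \bE[r_i^*]$, Chebyshev's inequality applied to the independent sums gives $\bP[\sum_i \tau_i^* > B] \le O(N\log^3 N)/(N\log N)^2 = o(1)$ and $\bP[\sum_i r_i^* > R] \le O(N\log^2 N)/(N\log N)^2 = o(1)$; hence $\bP[\text{abort}] = o(1)$ and the total error is $o(1)$.

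The main obstacle is that \textsc{SafeBinarySearch}, being built through the restart construction of Lemma~\ref{lem:variance_bound}, only has bounded mean and variance — not an almost-sure bound — on its query and random-bit counts, so a single call can in principle run arbitrarily long; the hard global budgets handle this, but then one must argue the budgets suffice, which is where the independence across insertions (obtained via the idealized coupling, since the true inputs to later insertions depend on the outcomes of earlier ones) and the Chebyshev estimate come in. A minor point to be careful about is that the mean/variance bounds of Corollary~\ref{coro:safe_algo}(1) must be invoked uniformly over all possible inputs (array contents and target position) to the binary search, which is legitimate because that algorithm solves \NoisyBinarySearch{n} in the worst case.
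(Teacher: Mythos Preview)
Your proposal is correct and follows essentially the same approach as the paper: insertion sort via \textsc{SafeBinarySearch}, run under hard caps on queries and random bits, with a union bound for correctness and Chebyshev (using the variance bounds from Corollary~\ref{coro:safe_algo}(1)) to show the caps are rarely hit. The only cosmetic differences are your choice of $\delta_{\mathrm{bs}}=1/N^2$ versus the paper's $1/(n\log n)$, and your explicit ``idealized coupling'' to obtain independence across calls, where the paper instead conditions directly on the event $E$ that all calls are correct and notes that, on $E$, the inputs are deterministic and the calls use disjoint randomness; both routes yield the same conclusion.
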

\begin{proof}
Consider the following algorithm $\cA$:
\begin{itemize}
\item 
Keep an array of the first $i$ elements for $i$ from $1$ to $n$. Each time a new element needs to be inserted, the algorithm uses \textsc{SafeBinarySearch} to find the correct position, with error probability $\frac 1{n\log n}$. %
\item Output the resulting array.
\end{itemize}

We have $(n-1)$ calls to \textsc{SafeBinarySearch}.
Let $E_i$ be the event that the $i$-th call to \textsc{SafeBinarySearch} returns the correct value, and let $E$ be the event that all $E_i$ happens. By union bound, probability of error $\bP[\neg E]$ is at most $(n-1)\cdot \frac 1{n\log n} = o(1)$.

Let $\tau_i$ be the number of queries used in the $i$-th call to \textsc{SafeBinarySearch}.
By Corollary~\ref{coro:safe_algo} Part 1, we have
\begin{align*}
\bE[\tau_i \mid E] = \bE\left[\tau_i \mid \bigwedge_{1 \le j \le i} E_j\right] \le \frac{\bE\left[\tau_i \mid \bigwedge_{1 \le j < i} E_j\right]}{\bP\left[E_i \mid \bigwedge_{1 \le j < i} E_j\right]}&= O(\log n),\\
\end{align*}
and
\begin{align*}
\Var[\tau_i \mid E] &= \Var\left[\tau_i \mid \bigwedge_{1 \le j \le i} E_j\right] \le \bE\left[\tau_i^2 \mid \bigwedge_{1 \le j \le i} E_j\right] \le \frac{\bE\left[\tau_i^2 \mid \bigwedge_{1 \le j < i} E_j\right]}{\bP\left[E_i \mid \bigwedge_{1 \le j < i} E_j\right]}\\
& \le O\left(\Var\left[\tau_i \mid \bigwedge_{1 \le j < i} E_j\right] + \bE\left[\tau_i \mid \bigwedge_{1 \le j < i} E_j\right]^2\right) = O(\log^3 n). 
\end{align*}
Conditioned on $E$, $\tau_i$'s are independent because they use disjoint source of randomness.
Thus, by Chebyshev's inequality,
\begin{align*}
\bP\left[\sum_{i\in [n-1]} \tau_i \ge \sum_{i\in [n-1]} \bE[\tau_i] + n^{2/3} \mid E \right] \le O\left(\frac{n\log^3 n}{\left(n^{2/3}\right)^2}\right) = o(1).
\end{align*}

Let us define $m = \sum_{i\in [n-1]} \bE[\tau_i \mid E] + n^{2/3} = O(n\log n)$. 

If the number of random bits used in the $i$-th call to \textsc{SafeBinarySearch} is $r_i$, then we can similarly show 

\begin{align*}
\bP\left[\sum_{i\in [n-1]} r_i \ge \sum_{i\in [n-1]} \bE[r_i] + n^{2/3} \mid E \right] = o(1).
\end{align*}
Let $R = \sum_{i\in [n-1]} \bE[r_i \mid E] + n^{2/3} = O(n\log n)$.

Consider the following algorithm (which is our \textsc{SafeSimpleSort}):
\begin{itemize}
    \item Run $\cA$ until it finishes, or it is about the make the $(m+1)$-th query, or it is about the use the $(R+1)$-th random bit. 
    \item If $\cA$ finishes, then return the output of $\cA$; otherwise, output an arbitrary permutation.
\end{itemize}
Then \textsc{SafeSimpleSort} always makes at most $m=O(n\log n)$ queries, uses at most $R = O(n\log n)$ random bits, and has error probability $\bP[\cA \text{ errs}] + o(1) = o(1)$.
\end{proof}

Finally, we give our main algorithm for noisy sorting. 
See Algorithm~\ref{algo:noisysort} for its description. 

We first analyze its error probability.

\begin{algorithm}[hbt!]
\caption{}
\label{algo:noisysort}
\begin{algorithmic}[1]
\Procedure{NoisySort}{$A=(x_1,\ldots,x_n), p$} 
\State $S \gets \{-\infty, \infty\}$.
\For{$a \in A$}
\State Add $a$ to $S$ with probability $\frac{1}{\log n}$. \Comment{Fully independent random bits.}
\label{line:sampleS}
\EndFor
\State Sort $S$ using \textsc{SafeSimpleSort}. \Comment{Fully independent random bits.}
\label{line:sortS}
\For{$a \in A \setminus S$}
\State Use \textsc{SafeBinarySearch} with $\delta = \frac{1}{\log n}$ to search the predecessor of $a$ in $S$. \Comment{We use $n^{2/3}$ pairwise independent hash families to feed the first $n^{2/3}$ random bits of each \textsc{SafeBinarySearch} call, and errs if some call needs more than $n^{2/3}$ random bits. }
\label{line:BH_for_a}
\State Denote the returned answer as $\hat{l}_a$. 
\EndFor
\State $X \gets \emptyset$.
\State $\mathcal{A} \gets S \setminus \{-\infty, \infty\}$.
\For{$l \in S \setminus \{\infty\}$}
\State $r \gets $ next element in $S$.
\State $B_l \gets \{a \in A \setminus S: \hat{l}_a = l\}$.
\If{$|B_l| > 6 \log^2 n$}
\State $X \gets X \cup B_l$
\label{line:addBtoX}
\Else
\State Sort $B_l$ using \textsc{SafeWeakSort} with error probability $\frac{1}{n \log n}$. \Comment{We use $n^{2/3}$ pairwise independent hash families to feed the first $n^{2/3}$ random bits of each \textsc{SafeWeakSort} call, and errs if some call needs more than $n^{2/3}$ random bits. }
\label{line:sortB}
\While{$|B_l| > 0$}
\State $x \gets $ first element in $B_l$.
\If{\textsc{SafeLessThan}($x$, $l$, $\frac{1}{n \log n}$)}
\label{line:lessthan1}
\State $B \gets B \setminus \{x\}$.
\State $X \gets X \cup \{x\}$.
\label{line:addxtoX1}
\Else
\State \textbf{break}
\EndIf
\EndWhile
\While{$|B_l| > 0$}
\State $x \gets $ last element in $B_l$.
\If{\textsc{SafeLessThan}($r$, $x$, $\frac{1}{n \log n}$)}
\label{line:lessthan2}
\State $B_l \gets B_l \setminus \{x\}$.
\State $X \gets X \cup \{x\}$.
\label{line:addxtoX2}
\Else
\State \textbf{break}
\EndIf
\EndWhile
\State Add $B_l$ to $\mathcal{A}$ between $l$ and $r$. 
\label{line:add_B_l}
\EndIf
\EndFor
\For{$x \in X$}
\label{line:for_x_in_X}
\State Insert $x$ to $\mathcal{A}$ using \textsc{SafeBinarySearch} with $\delta = \frac{1}{n \log n}$. \Comment{We use $n^{2/3}$ pairwise independent hash families to feed the first $n^{2/3}$ random bits of each \textsc{SafeBinarySearch} call, and errs if some call needs more than $n^{2/3}$ random bits. }
\label{line:insertX}
\EndFor
\State \Return $\mathcal{A}$. 
\EndProcedure
\end{algorithmic}
\end{algorithm}
\begin{lemma}
\label{lem:noisy_sort_error}
Algorithm~\ref{algo:noisysort} has error probability $o(1)$.
\end{lemma}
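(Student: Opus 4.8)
The plan is to bound the error probability by a union bound over the failure events of all the subroutine calls (and of their pairwise-hashing derandomizations), and then to check deterministically that on the complement of all of them Algorithm~\ref{algo:noisysort} returns a correctly sorted permutation of $A$.

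Concretely I would declare a failure whenever: (i) the \textsc{SafeSimpleSort} call on line~\ref{line:sortS} does not sort $S$; (ii) some call to \textsc{SafeBinarySearch} on line~\ref{line:BH_for_a}, or to \textsc{SafeWeakSort} on line~\ref{line:sortB}, or to \textsc{SafeBinarySearch} on line~\ref{line:insertX}, needs more than $n^{2/3}$ random bits; (iii) some \textsc{SafeWeakSort} call on line~\ref{line:sortB} returns an incorrectly sorted list; (iv) some \textsc{SafeLessThan} call on line~\ref{line:lessthan1} or~\ref{line:lessthan2} returns a wrong comparison; (v) some \textsc{SafeBinarySearch} call on line~\ref{line:insertX}, when run on the ``intended array'' for that iteration (defined below), returns a wrong index. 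Since we union-bound, in estimating (ii)--(v) I may assume (i) fails, so that $S$ is correctly sorted and every later call gets a valid input. Event (i) has probability $o(1)$ by Lemma~\ref{lem:safe_simple_sort}. For (ii), by Corollary~\ref{coro:safe_algo} each such call uses $O(\polylog n)$ random bits in expectation with variance $O(\polylog n)$, so Chebyshev gives overflow probability $o(1/n)$ per call, and there are $O(n)$ calls. For (iii)--(v), each call runs on a valid input (a Noisy Sorting instance in (iii), a comparison of two genuine distinct elements in (iv), a correctly sorted array in (v) by construction), its noisy comparisons are fresh and independent, and the block of hash bits it consumes is uniform; hence it errs with probability at most its nominal $1/(n\log n)$, and there are $O(n)$ calls in each category. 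Summing, each of (i)--(v) contributes $o(1)$. Along the way I would verify the quantitative hypotheses of Corollary~\ref{coro:safe_algo} for the chosen parameters, e.g.\ that $1/(n\log n)=O(1/|B_l|^3)$ whenever a bucket of size $\le 6\log^2 n$ is fed to \textsc{SafeWeakSort}.

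For the deterministic part, suppose none of (i)--(v) occurs. Then $S$ is correctly sorted. Fix consecutive $l<r$ in $S$. Each $a\in A\setminus S$ lies in exactly one bucket $B_{\hat{l}_a}$. If $|B_l|>6\log^2 n$ the whole bucket is sent to $X$; otherwise \textsc{SafeWeakSort} sorts $B_l$ correctly, and since in a sorted list the elements below $l$ form a prefix and those above $r$ form a suffix, the first \texttt{while}-loop strips to $X$ exactly the elements below $l$ and the second strips to $X$ exactly the elements above $r$ (each \textsc{SafeLessThan} being correct), leaving in $B_l$ precisely the sorted list of elements strictly between $l$ and $r$, which is inserted between $l$ and $r$ on line~\ref{line:add_B_l}. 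As all elements are distinct and $l,r\notin B_l$, an element with true $S$-predecessor $l$ ends up placed correctly if $\hat{l}_a=l$ and is sent to $X$ otherwise (it gets peeled off the appropriate end of its assigned bucket), and nothing is ever placed incorrectly. So just before the loop on line~\ref{line:for_x_in_X}, $\mathcal{A}$ is correctly sorted and equals $A\setminus X$, with $X\subseteq A\setminus S$. Now let $\tilde M_j$ be the array $\mathcal{A}$ as it stands just before that loop, with the first $j-1$ elements of $X$ (in processing order) correctly inserted; $\tilde M_j$ depends only on randomness used before the loop, hence is independent of the $j$-th call's randomness, which is what makes event (v) well-posed. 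A one-line induction using the complements of (ii) and (v) then shows the actual array at iteration $j$ equals $\tilde M_j$ for every $j$, so after the loop $\mathcal{A}$ equals $A$ in sorted order.

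The step I expect to be the crux is the re-insertion loop on line~\ref{line:insertX}: a naive analysis would need the working array to be correctly sorted at the start of each of the up to $|X|$ iterations, which is an event chaining all earlier iterations, whereas the pairwise-independent hash families feeding those calls do not supply enough independence to condition on such a chain. Comparing each re-insertion instead against the \emph{deterministic} intended array $\tilde M_j$---which is independent of that call's own randomness---sidesteps the conditioning, at the price of the short induction identifying intended and actual arrays on the good event. The remaining obstacles are routine bookkeeping: checking that the bucket case analysis is exhaustive (every element of $A$ lands in exactly one place, directly in $\mathcal{A}$ or via $X$), that the peeling loops really remove a prefix/suffix because the bucket was correctly sorted, and that the expectation/variance bounds and the admissible ranges of $\delta$ in Corollary~\ref{coro:safe_algo} hold for the parameters used in Algorithm~\ref{algo:noisysort}.
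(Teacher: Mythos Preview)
Your proposal is correct and follows essentially the same union-bound decomposition as the paper's proof: the paper's events $E_0,E_1,E_2,E_3,E_4$ correspond exactly to your (ii), (i), (iii), (iv), (v), and the deterministic verification that correctness follows on the good event is the same. Your treatment of the re-insertion loop via the intended arrays $\tilde M_j$ is in fact more careful than the paper's, which simply union-bounds $\Pr[\neg E_4]\le n\cdot\frac1{n\log n}$ without explicitly addressing the pairwise-independence subtlety you raise.
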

\begin{proof}
Let $E_0$ be the event that the algorithm does not err because of lack of random bits at Line~\ref{line:BH_for_a}, Line~\ref{line:sortB}, or Line~\ref{line:insertX}. 
Let $E_1$ be the event that \textsc{SafeSimpleSort} successfully sorts $S$ at Line~\ref{line:sortS}; $E_2$ be the event that all calls of \textsc{SafeWeakSort} at Line~\ref{line:sortB} are correct; $E_3$ be the event that all calls of \textsc{SafeLessThan} at Lines~\ref{line:lessthan1} and Lines~\ref{line:lessthan2} are correct; $E_4$ be the event that all calls of \textsc{SafeBinarySearch} at Line~\ref{line:insertX} are correct.

We show that if $E_0, E_1, E_2, E_3, E_4$ all happen, then Algorithm~\ref{algo:noisysort} is correct. First of all, under $E_1$, $S$ is correctly sorted at Line~\ref{line:sortS}. Secondly, under $E_2$ and $E_0$, for each $l$, $B_l$ is correctly sorted at Line~\ref{line:sortB}. Under $E_3$ and $E_0$, at Line~\ref{line:add_B_l}, all elements remaining in $B_l$ are indeed greater than $l$ and smaller than $r$, so adding $B_l$ to $\mathcal{A}$ between $l$ and $r$ keeps all elements in $\mathcal{A}$ sorted. Therefore, before the for loop at Line~\ref{line:for_x_in_X}, all elements in $\mathcal{A}$  are correctly sorted, and clearly these elements are exactly $A \setminus X$. Finally, under $E_4$ and $E_0$, the insertions made at Line~\ref{line:insertX} are all correct, so the final result is correct. 

For each call of \textsc{SafeBinarySearch} at Line~\ref{line:BH_for_a}, the probability that it requires more than $n^{2/3}$ random bits is $O(\frac{\polylog n}{n^{4/3}})$ by Corollary~\ref{coro:safe_algo} Part 1 and Chebyshev's inequality. By union bound, with probability $1-o(1)$, none of the calls of \textsc{SafeBinarySearch} at Line~\ref{line:BH_for_a} causes the algorithm to err. We can similarly bound the probability the algorithm errs at Line~\ref{line:sortB} or Line~\ref{line:insertX}. 
Overall, $\Pr[\neg E_0] \le o(1)$. 
Clearly, $\Pr[\neg E_1] \le o(1)$, $\Pr[\neg E_2] \le \frac{1}{n \log n} \cdot (|S|-1)\le \frac{1}{\log n}$, $\Pr[\neg E_3] \le \frac{1}{n \log n} \cdot (2n) = \frac{2}{\log n}$ and $\Pr[\neg E_4] \le \frac{1}{n \log n} \cdot n = \frac{1}{\log n}$. Thus, by union bound, the overall success probability of Algorithm~\ref{algo:noisysort} is at least $1-\frac{4}{\log n} -o(1) = 1 - o(1)$. 
\end{proof}

Let a bucket be the set of elements that are between two adjacent elements in the sorted order of $S$. Then we have the following simple lemma.
\begin{lemma} \label{lem:bucket_size}
With probability $\ge 1-\frac{1}{n}$, all buckets have size at most $3 \log^2 n$.
\end{lemma}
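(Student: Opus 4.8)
The plan is a standard balls-into-bins argument. Each of the $n$ elements of $A$ is independently added to $S$ with probability $\frac{1}{\log n}$, so in the sorted order the sampled elements partition the remaining elements into buckets. Fix any particular element $a \in A$; I want to bound the probability that the bucket containing $a$ is large. The bucket containing $a$ (say $a$ is not itself in $S$, which only helps) extends to the left until the first sampled element and to the right until the first sampled element. So $|B| \le 3\log^2 n$ fails only if, among the $3\log^2 n$ elements immediately preceding (or following) $a$ in sorted order, none is added to $S$ — actually it is cleaner to say: if the bucket containing position $i$ has size $> 3\log^2 n$, then there is a window of $3\log^2 n$ consecutive positions (containing $i$) all of which avoid $S$.

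First I would union bound over all $\le n$ windows of $3\log^2 n$ consecutive positions in the sorted order. The probability that a fixed window of $3\log^2 n$ elements contains no sampled element is $\left(1 - \frac{1}{\log n}\right)^{3\log^2 n} \le e^{-3\log n} = n^{-3}$. Hence the probability that some window of this length is entirely unsampled is at most $n \cdot n^{-3} = n^{-2} \le \frac{1}{n}$. Since every bucket of size $> 3\log^2 n$ gives rise to such an unsampled window, with probability $\ge 1 - \frac{1}{n}$ all buckets have size at most $3\log^2 n$. One should also handle the two boundary buckets adjacent to $\pm\infty$: the leftmost bucket is large only if the first $3\log^2 n$ elements are all unsampled, which is covered by the same window bound (and $\pm\infty$ are always in $S$ by construction), so nothing extra is needed.

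There is no real obstacle here; the only thing to be slightly careful about is the off-by-one in translating ``bucket of size $> 3\log^2 n$'' into ``an unsampled window of length $3\log^2 n$,'' and making sure the window count is at most $n$ rather than something larger — but since windows are indexed by their starting position among $n$ elements, $n$ windows suffices. The inequality $\left(1-\frac{1}{\log n}\right)^{\log n} \le e^{-1}$ (valid for $\log n \ge 1$, i.e.\ all $n \ge 2$) is all the analysis needs.
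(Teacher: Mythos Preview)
Your proof is correct and follows essentially the same union-bound-over-windows argument as the paper. One small remark: since the paper uses $\log = \log_2$, the step $e^{-3\log n} = n^{-3}$ is not literally right; in fact $e^{-3\log_2 n} = n^{-3/\ln 2} < n^{-4}$, so the bound is even better than you wrote and the conclusion is unaffected. The paper handles this by choosing the window length $L = 2\ln n \cdot \log_2 n \le 3\log^2 n$ so that $\exp(-L/\log n) = n^{-2}$ exactly, but your direct choice works just as well.
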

\begin{proof}
For any continuous segment of length $L$ in the sorted order of $A$, the probability that none of these element is in $S$ is
$(1-1/\log n)^L \le \exp(-L/\log n)$.
By union bound, the probability that there exists $L$ continuous elements not in $S$ is at most $n\exp(-L/\log n)$.
Taking $L = 2\log_e n \cdot \log n \le 3\log^2 n$, we see that with probability $\ge 1-\frac 1n$, all continuous segments of length $L$ have at least one element in $S$, which implies all  buckets have size $\le 3\log^2 n$.
\end{proof}

\begin{lemma}
\label{lem:noisy_sort_time}
With probability $1-o(1)$, Algorithm~\ref{algo:noisysort} uses at most $(1+o(1)) \left(\frac{1}{I(p)} + \frac{1}{(1-2p) \log \frac {1-p}p}\right)n\log n$ queries.
\end{lemma}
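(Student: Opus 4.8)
The plan is to bound the number of queries made by each phase of Algorithm~\ref{algo:noisysort} separately and then sum them up, showing that the dominant contributions are exactly $\frac{n\log n}{I(p)}$ (from the \textsc{SafeBinarySearch} calls at Line~\ref{line:BH_for_a}) and $\frac{n\log n}{(1-2p)\log\frac{1-p}p}$ (from the \textsc{SafeWeakSort} calls at Line~\ref{line:sortB}), with everything else being lower order. First I would condition on the good events: the event of Lemma~\ref{lem:bucket_size} that all true buckets have size $\le 3\log^2 n$, and (via Chebyshev, using the variance bounds in Corollary~\ref{coro:safe_algo}) the events that none of the subroutine calls runs out of its $n^{2/3}$ allotted pairwise-independent random bits. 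I would also want the event that $|S| = \Theta(n/\log n)$, which holds with probability $1-o(1)$ by a Chernoff/Chebyshev bound since $\bE[|S|] = n/\log n + 2$.

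Next I would go phase by phase. Line~\ref{line:sortS} (\textsc{SafeSimpleSort} on $S$): by Lemma~\ref{lem:safe_simple_sort} this always uses $O(|S|\log|S|) = O(n)$ queries, negligible. Line~\ref{line:BH_for_a}: there are $n - |S| \le n$ calls to \textsc{SafeBinarySearch} on a list of size $|S| = O(n/\log n) = n^{1-o(1)}$ with $\delta = 1/\log n$, each with expected cost $(1+o(1))\frac{\log|S| + O(\log\log n)}{I(p)} = (1+o(1))\frac{\log n}{I(p)}$; using pairwise independence of the randomness between instances and Chebyshev on the sum (with the $\Var = O(\log^3 n)$ bound, so the sum of variances is $O(n\log^3 n) = o((n\log n)^2)$ — wait, I need to be careful here: pairwise independence only gives additivity of variances, and the sum of the variances is $n\cdot O(\log^3 n)$, whose square root $O(\sqrt n \log^{3/2} n)$ is indeed $o(n\log n)$), the total for this phase is $(1+o(1))\frac{n\log n}{I(p)}$ with probability $1-o(1)$. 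For Line~\ref{line:sortB}: conditioned on the good bucket-size event and the fact that each $B_l$ with $|B_l| \le 6\log^2 n$ gets \textsc{SafeWeakSort} with $\delta = 1/(n\log n)$, each such call costs (by Corollary~\ref{coro:safe_algo} Part 3) $O(|B_l|\log|B_l|) + (1+o(1))\frac{|B_l|\log(n\log n)}{(1-2p)\log\frac{1-p}p} = O(|B_l|\log\log n) + (1+o(1))\frac{|B_l|\log n}{(1-2p)\log\frac{1-p}p}$ in expectation; summing $|B_l|$ over all buckets gives $\le n$, so the total is $O(n\log\log n) + (1+o(1))\frac{n\log n}{(1-2p)\log\frac{1-p}p}$ in expectation, and again a Chebyshev/pairwise-independence argument turns this into a high-probability bound. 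The \textsc{SafeLessThan} calls at Lines~\ref{line:lessthan1} and~\ref{line:lessthan2}: at least one call per bucket is "wasted" (the one that triggers the \textbf{break}), so that is $O(|S|) \cdot f(1/(n\log n)) = O(n/\log n)\cdot O(\log n) = O(n)$; the remaining "successful" \textsc{SafeLessThan} calls each move an element into $X$, and there are at most $|X|$ of them, which I bound next.

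Finally I would bound $|X|$. An element $a$ ends up in $X$ either because its bucket $B_l$ had $|B_l| > 6\log^2 n$, or because it was shaved off by a successful \textsc{SafeLessThan}. Conditioned on the good bucket-size event, every \emph{true} bucket has $\le 3\log^2 n$ elements, so a reported bucket $B_l$ can only exceed $6\log^2 n$ if at least $3\log^2 n$ of the \textsc{SafeBinarySearch} calls at Line~\ref{line:BH_for_a} misrouted elements into it; since each such call errs independently with probability $\le 1/\log n$, the expected total number of misrouted elements is $\le n/\log n$, and by Markov (or Chernoff) with probability $1-o(1)$ it is $O(n/\log n \cdot \polylog) $ — more carefully, I'd argue that the number of elements in oversized reported buckets plus the number shaved off is $O(n/\log n)\cdot\polylog(n) = o(n\log n / \log n)$... let me just say: with probability $1-o(1)$, $|X| = O(n/\log n \cdot \log^2 n) = O(n\log n)$ is too weak; instead I want $|X| = o(n)$ ideally, or at least $|X|\log n = o(n\log n)$. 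The right bound: the number of misrouted elements is $O(n/\log n)$ whp, and each oversized reported bucket absorbing $> 6\log^2 n$ elements must contain $> 3\log^2 n$ misrouted ones, so there are $O(n/\log^3 n)$ oversized buckets, contributing $O(n/\log^3 n)\cdot O(\log^2 n) = O(n/\log n)$ correctly-routed elements plus $O(n/\log n)$ misrouted ones, so $|X| = O(n/\log n)$ whp; the shaved-off elements are also among the misrouted ones, so this is absorbed. Then the for loop at Line~\ref{line:for_x_in_X} makes $|X| = O(n/\log n)$ calls to \textsc{SafeBinarySearch} on $\mathcal A$ (size $\le n$) with $\delta = 1/(n\log n)$, each costing $O(\log n)$ in expectation, for a total of $O(n)$, negligible. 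Summing all phases yields $(1+o(1))\left(\frac{1}{I(p)} + \frac{1}{(1-2p)\log\frac{1-p}p}\right)n\log n$ with probability $1-o(1)$, as claimed. The main obstacle I anticipate is the bookkeeping for $|X|$: carefully showing that misrouting by the $\delta = 1/\log n$ Noisy Binary Search cannot blow up the number of "leftover" elements beyond $o(n)$, so that re-inserting them is a lower-order cost; this requires combining the bucket-size concentration of Lemma~\ref{lem:bucket_size} with a concentration bound on the number of \textsc{SafeBinarySearch} errors at Line~\ref{line:BH_for_a}, and being slightly careful that these errors are not independent of the bucket structure but can still be bounded in expectation and via Markov.
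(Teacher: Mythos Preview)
Your proposal is essentially the same decomposition as the paper's proof: condition on a handful of high-probability good events (the paper calls them $E_1,\ldots,E_5$: $|S|$ concentrated, all true buckets small, few misrouted elements, all \textsc{SafeLessThan} calls correct, and consequently $|X|=O(n/\log n)$), then bound each query-making line by Chebyshev using the variance bounds of Corollary~\ref{coro:safe_algo}. Your bound on $|X|$ via ``oversized buckets must contain many misrouted elements'' is exactly the paper's argument (the paper phrases it as ``at least half of each oversized $B_l$ is misrouted, so the total is $\le 2\times(\text{misrouted})$'').

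One small gap: your claim that ``the shaved-off elements are also among the misrouted ones'' requires that every \textsc{SafeLessThan} call at Lines~\ref{line:lessthan1}--\ref{line:lessthan2} returned correctly; otherwise a correctly-routed element could be erroneously shaved into $X$. You need to add this as one of your good events (it holds with probability $1-O(1/\log n)$ by a union bound over $\le 2n$ calls with error $1/(n\log n)$), exactly as the paper does with its event $E_4$. With that event added, your count of \textsc{SafeLessThan} calls as $O(|S|)+O(|X|)=O(n/\log n)$ and your final $|X|=O(n/\log n)$ both go through cleanly.
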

\begin{proof}
Because erring only makes the algorithm exit earlier, in the analysis we ignore the effect of erring caused by insufficient random bits.
We define the following events:
\begin{itemize}
    \item Let $E_1$ be the event that $|S| \le \frac n{\log n} + n^{2/3} = (1+o(1)) \frac n{\log n}$. By Chernoff bound, $\bP[E_1] = 1-o(1)$.
    \item Let $E_2$ be the event that all buckets have size at most $3\log^2 n$. By Lemma~\ref{lem:bucket_size}, $\bP[E_2] = 1-o(1)$.
    \item Let $E_3$ be the event that at most $\frac n{\log n} + n^{2/3} = (1+o(1)) \frac n{\log n}$ elements $a$ have the wrong predecessor $\hat{l}_a$ at Line~\ref{line:BH_for_a}. By Chebshev's inequality, $\bP[E_3] = 1-o(1)$.
    \item Let $E_4$ be the event that all calls to \textsc{SafeLessThan} on Line~\ref{line:lessthan1}, \ref{line:lessthan2} return the correct values. Because there are at most $2n$ calls, by union bound, $\bP[E_4] = 1-o(1)$.
    \item Let $E_5$ be the event that $|X| \le O(\frac n{\log n})$.
    If $E_2$, $E_3$, and $E_4$ all happen, then $E_5$ happens, for the following reasons. First, at Line~\ref{line:addBtoX}, $B_l$ has size greater than $6 \log^2 n$, while the sizes of all buckets are at most $3 \log^2 n$ conditioned on $E_2$. Thus, at least half of the elements in $B_l$ have the wrong predecessor $\hat{l}_a$ at Line~\ref{line:BH_for_a}. Thus, the amount of elements added to $X$ at Line~\ref{line:addBtoX} is at most twice the number of elements $a$ with the wrong predecessor $\hat{l}_a$, which is bounded by $O(\frac n{\log n})$ conditioned on $E_3$. Also, if an element $x$ is added to $X$ at Line~\ref{line:addxtoX1} or \ref{line:addxtoX2} and $E_4$ happens, $x$ must also have the wrong predecessor. Thus, overall, $|X| \le O(\frac n{\log n})$ if $E_2$, $E_3$, and $E_4$ all happen. 
\end{itemize}

We make queries in the following lines: Line~\ref{line:sortS}, \ref{line:BH_for_a}, \ref{line:sortB}, \ref{line:lessthan1}, \ref{line:lessthan2}, \ref{line:insertX}. Let us consider them separately.

    \paragraph{Line~\ref{line:sortS}.} By Lemma~\ref{lem:safe_simple_sort}, conditioned on $E_1$, with probability $1-o(1)$, Line~\ref{line:sortS} uses $O(n)$ queries. 
    \paragraph{Line~\ref{line:BH_for_a}.} We have at most $n$ calls to \textsc{SafeBinarySearch} whose number of queries are independent.
    By Corollary~\ref{coro:safe_algo} Part 1, $$\bE[\text{number of queries}] = (1+o(1)) \frac{n\log n}{I(p)}.$$
    By Corollary~\ref{coro:safe_algo} Part 1 and Chebyshev's inequality, $$\bP\left[\text{number of queries} > \bE[\text{number of queries}\right]+ n^{2/3}] = o(1).$$
    \paragraph{Line~\ref{line:sortB}.} We have a few calls to \textsc{SafeWeakSort} where every input length is at most $6\log^2 n$, and total input length is at most $n$.
    By Corollary~\ref{coro:safe_algo} Part 3, $$\bE[\text{number of queries}] = (1+o(1)) \frac{n\log n}{(1-2p)\log \frac{1-p}p}.$$
    By Corollary~\ref{coro:safe_algo} Part 3 and Chebyshev's inequality, $$\bP[\text{number of queries} > \bE[\text{number of queries}]+ n^{2/3}] = o(1).$$
    \paragraph{Line~\ref{line:lessthan1}, \ref{line:lessthan2}.}
    Conditioned on $E_1$, $E_3$, $E_4$, the number of calls to \textsc{SafeLessThan} is 
    \begin{align*}
    O(|S|) + O(\text{number of elements in the wrong bucket}) = O(\frac n{\log n}).
    \end{align*}
    By Corollary~\ref{coro:safe_algo} Part 2, $$\bE[\text{number of queries}] = O(n).$$
    By Corollary~\ref{coro:safe_algo} Part 2 and Chebyshev's inequality, $$\bP[\text{number of queries} > \bE[\text{number of queries}]+ n^{2/3}] = o(1).$$
    \paragraph{Line~\ref{line:insertX}.} Conditioned on $E_5$, the number of calls to \textsc{SafeBinarySearch} is $O(\frac n{\log n})$.
    By Corollary~\ref{coro:safe_algo} Part 1, $$\bE[\text{number of queries}] = O(n)$$
    By Corollary~\ref{coro:safe_algo} Part 1 and Chebyshev's inequality, $$\bP[\text{number of queries} > \bE[\text{number of queries}]+ n^{2/3}] = o(1).$$

Summarizing the above, excluding events with $o(1)$ probability in total, the total number of queries made is
\begin{align*}
& O(n)+ (1+o(1)) \frac{n\log n}{I(p)} + (1+o(1)) \frac{n\log n}{(1-2p)\log \frac{1-p}p} + O(n) + O(n) + O(n^{2/3}) \\
& = (1+o(1))\left(\frac{n\log n}{I(p)} +\frac{n\log n}{(1-2p) \log \frac {1-p}p}\right).
\end{align*}
\end{proof}

Finally, we analyze the expected number of random bits used by the algorithm.

\begin{lemma} \label{lem:noisy_sort_random_bit}
Algorithm~\ref{algo:noisysort} uses $O(n)$ random bits in expectation. 
\end{lemma}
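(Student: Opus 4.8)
The plan is to account for every source of randomness in Algorithm~\ref{algo:noisysort} and show each contributes $O(n)$ in expectation. There are four places randomness is consumed: (i) the sampling of $S$ on Line~\ref{line:sampleS}; (ii) the call to \textsc{SafeSimpleSort} on Line~\ref{line:sortS}; (iii) the $n^{2/3}$ pairwise independent hash families used to feed the first $n^{2/3}$ random bits of each \textsc{SafeBinarySearch} call on Line~\ref{line:BH_for_a} and each \textsc{SafeWeakSort} call on Line~\ref{line:sortB} and each \textsc{SafeBinarySearch} call on Line~\ref{line:insertX}, plus whatever random bits beyond the first $n^{2/3}$ those calls require; and (iv) there is no separate source beyond these. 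I would handle them one at a time.

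First, Line~\ref{line:sampleS} uses $n$ random bits (one coin flip per element), which is $O(n)$ deterministically; actually sampling a $\Bern(1/\log n)$ bit costs more than one fair coin, but only $O(1)$ fair coins in expectation each since $1/\log n$ has a terminating-enough binary expansion for rejection sampling in $O(1)$ expected bits, giving $O(n)$ total. Second, by Lemma~\ref{lem:safe_simple_sort}, \textsc{SafeSimpleSort} on input $S$ uses $O(|S|\log|S|)$ random bits always; since $\bE[|S|] = O(n/\log n)$ and $|S| \le n+2$ always, $\bE[|S|\log|S|] = O((n/\log n)\cdot \log n) = O(n)$ (one can also condition on the high-probability event $|S| = O(n/\log n)$ and bound the low-probability complement by the deterministic worst case $O(n\log n)\cdot o(1)$, which is still $o(n)$). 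Third, for the pairwise independent hashing: by Lemma~\ref{lem:pairwise-indep-hash}, $n^{2/3}$ fully independent copies of a pairwise independent hash family, each mapping into the index set of the (at most $n$) calls and outputting one bit, cost $O(n^{2/3} \cdot \polylog n)$ fresh random bits total, which is $o(n)$; this covers the first $n^{2/3}$ random bits of every \textsc{SafeBinarySearch} / \textsc{SafeWeakSort} call on Lines~\ref{line:BH_for_a}, \ref{line:sortB}, \ref{line:insertX}. It remains to bound the random bits consumed \emph{beyond} the first $n^{2/3}$ by these calls. By Corollary~\ref{coro:safe_algo} Parts 1 and 3, each such call has $\bE[r] = O(\polylog n)$ and $\Var[r] = O(\polylog n)$ for the number of random bits $r$, so $\bE[(r - n^{2/3})_+] \le \bE[r \cdot \mathbbm{1}\{r > n^{2/3}\}]$, which by Cauchy--Schwarz is at most $\sqrt{\bE[r^2]\cdot \bP[r > n^{2/3}]} = \sqrt{O(\polylog n)\cdot O(\polylog n / n^{4/3})} = O(\polylog n / n^{2/3})$; summed over $O(n)$ calls this is $O(\polylog n / n^{-1/3}) = o(n)$, in fact $o(1)$.

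Summing the four contributions, the expected number of random bits is $O(n) + O(n) + o(n) + o(n) = O(n)$. The main obstacle is bookkeeping rather than any deep step: one must be careful that the ``first $n^{2/3}$ random bits are fed by pairwise independent hashes'' convention means the per-call cost splits cleanly into a shared $O(n^{2/3}\polylog n)$ pool plus a per-call overflow, and that the overflow, although it happens with only $o(1)$ probability per call, must be bounded in expectation rather than with high probability (hence the Cauchy--Schwarz / second-moment bound using $\Var[r]$ from Corollary~\ref{coro:safe_algo}, rather than a naive union bound). One should also double-check that the calls on Lines~\ref{line:BH_for_a}, \ref{line:sortB}, \ref{line:insertX}, which use \emph{disjoint} (non-hashed) randomness for bits beyond the first $n^{2/3}$, do not accidentally reuse the shared pool, so that the per-call overflow analysis is valid.
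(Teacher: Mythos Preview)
Your proposal is correct and follows essentially the same decomposition as the paper's proof: account separately for the sampling of $S$, the call to \textsc{SafeSimpleSort}, and the shared hash-family pool feeding Lines~\ref{line:BH_for_a}, \ref{line:sortB}, \ref{line:insertX}. The one divergence is your ``overflow'' analysis. In Algorithm~\ref{algo:noisysort} as written (see the comments on those lines), if any call to \textsc{SafeBinarySearch} or \textsc{SafeWeakSort} would request more than $n^{2/3}$ random bits, the algorithm simply \emph{errs} rather than drawing additional fresh (non-hashed) bits; so there are no random bits beyond the hashed pool to account for, and your Cauchy--Schwarz/second-moment step, while not wrong, is bounding a quantity that is identically zero. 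With that step removed, your argument is the paper's: $O(n)$ for sampling, $O(n)$ for \textsc{SafeSimpleSort} (your bound $\bE[|S|\log|S|]\le \bE[|S|]\cdot O(\log n)=O(n)$ is in fact cleaner than the paper's Chernoff-plus-worst-case split), and $O(n^{2/3}\log n)$ for the $n^{2/3}$ pairwise independent hash families.
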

\begin{proof}
We consider all the places Algorithm~\ref{algo:noisysort} uses randomness. 

\paragraph{Line~\ref{line:sampleS}.}
Any biased random bits can be simulated by $O(1)$ random bits in expectation \cite{KY76}, so Line~\ref{line:sampleS} uses $O(n)$ random bits in expectation. 
\paragraph{Line~\ref{line:sortS}.} By Chernoff bound, $|S| \le 100 n / \log n$ with probability at least $1-n^{-2}$. In this case, \textsc{SafeSimpleSort} uses $O(|S| \log |S|) = O(n)$ random bits. If $S >100 n / \log n$, \textsc{SafeSimpleSort} uses $O(n \log n)$ random bits, which contribute at most  $O(n \log n \cdot n^{-2})$ to the overall expectation. Thus, Line~\ref{line:sortS} uses $O(n)$ random bits in expectation. 
\paragraph{Line~\ref{line:BH_for_a}, Line~\ref{line:sortB}, Line~\ref{line:insertX}. }
For each of the first $n^{2/3}$ outputted random bits, we use Lemma~\ref{lem:pairwise-indep-hash} to construct a pairwise independent hash family of size $n$ using $O(\log n)$ fresh random bits.
The total number of fresh random bits needed is $O(n^{2/3} \log n)$.
\end{proof}

Now we are ready to prove Theorem~\ref{thm:sort-upperbound-randomized}.
\begin{proof}[Proof of Theorem~\ref{thm:sort-upperbound-randomized}]
Consider the following algorithm, which we call \textsc{SafeNoisySort}:
\begin{itemize}
    \item Run \textsc{NoisySort}. Stop immediately if we are about to use $m$ queries, for some $m$ to be chosen later.
    \item If we successfully completed \textsc{NoisySort}, then output the sorted array; otherwise, output any permutation.
\end{itemize}

By Lemma~\ref{lem:noisy_sort_time}, we can choose $m=(1+o(1)) \left(\frac{1}{I(p)} + \frac{1}{(1-2p) \log \frac {1-p}p}\right)n\log n$ so that with probability $1-o(1)$, the \textsc{NoisySort} call completes successfully. By Lemma~\ref{lem:noisy_sort_error}, with probability $1-o(1)$, the \textsc{NoisySort} call is correct. 
By union bound, \textsc{SafeNoisySort} has error probability $o(1)$ and it always takes at most $m$ queries.
By Lemma~\ref{lem:noisy_sort_random_bit}, number of random bits used is $O(n)$ in expectation.
\end{proof}

\subsection{Trading Queries for Random Bits} \label{sec:noisy-sort-derandomize}
In this section, we show how to use the randomized algorithm to construct a deterministic one.

\SortUpperBound*
\begin{proof}
In our model, it is possible to generate unbiased random bits using noisy comparisons. Take two arbitrary elements $x, y$, and call $\textsc{NoisyComparison}(x,y)$ twice. If the two returned values are different, we use the result of the first returned value as our random bit; otherwise, we repeat. Clearly, this procedure generates an unbiased random bit. The probability that this procedure successfully return a random bit after each two calls of \textsc{NoisyComparison} is $2p(1-p)$, so the expected number of noisy comparisons needed to generate each random bit is $O(\frac{1}{2p(1-p)}) = O(1)$. Note that this procedure is deterministic.  

Algorithm~\ref{algo:noisysort} uses $O(n)$ random bits in expectation, so we can use the above procedure to generate random bits, and the expected number of queries needed is $O(n)$. We can halt the algorithm if the number of queries used this way exceeds $O(n \log \log n)$, which only incurs an additional $o(1)$ error probability.  
\end{proof}

\section{Noisy Sorting Lower Bound}
In this section, we show the lower bound of noisy sorting.
\SortLowerBound*

The following problem serves as an intermediate step towards the lower bound. 

\begin{defn}[Group Sorting Problem]
For some integers $k\mid n$, we define problem \GroupSorting{n}{k} as follows:
Given a list $L$ of $n$ elements, divided into $n/k$ groups $A_1,\ldots, A_{n/k}$ of $k$ elements each, satisfying the property that for all $1\le i<j\le n/k$, $x\in A_i, y\in A_j$, we have $x<y$.
An algorithm needs to output an ordered list of sets $A_1,\ldots, A_{n/k}$ by asking the following type of queries $\textsc{GroupQuery}(x,y)$:
\begin{itemize}
    \item If the two elements are in the same group, then the query result is $*$.
    \item Otherwise, suppose $x\in A_i$, $y\in A_j$ with $i\ne j$. Then the query result is \NoisyComparison{i}{j}.
\end{itemize}
\end{defn}

The following lemma relates \textsc{GroupSorting} and \textsc{NoisySorting}.
\begin{lemma} \label{lemma:noisysort-to-groupsort}
Fix any algorithm $\cA$ for \NoisySorting{n} with error probability $o(1)$.
Let $k\mid n$.
Let $L$ be the input list of \NoisySorting{n}. Suppose we partition $L$ into $n/k$ groups $A_1,\ldots,A_{n/k}$ where for all $1\le i<j \le n/k$, for all $x\in A_i$, $y\in A_j$, we have $x<y$. Let $U^{\ne}$ denote the number of queries $\cA$ makes to elements in different groups.

Then there exists an algorithm for \GroupSorting{n}{k} with error probability $o(1)$ which makes at most $\bE[U^{\ne}] + (n-n/k)$ queries in expectation.
\end{lemma}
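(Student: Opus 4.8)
The plan is to simulate the hypothetical \NoisySorting{n} algorithm $\cA$ on an instance whose underlying linear order is precisely the one implicitly specified by a \GroupSorting{n}{k} instance, where the order within each group is chosen arbitrarily (say by index). First I would describe the simulation: given a \GroupSorting{n}{k} input $L$ with groups $A_1,\dots,A_{n/k}$, we run $\cA$ on the same $n$ elements, treating the hidden total order as the one in which elements of $A_i$ precede elements of $A_j$ for $i<j$ and within a group elements are ordered by their label. Whenever $\cA$ issues $\textsc{NoisyComparison}(x,y)$, the simulator issues $\textsc{GroupQuery}(x,y)$. If the answer is $*$, the two elements lie in the same group; the simulator then \emph{deterministically} (or with a fresh private fair coin, consistent with the within-group order it fixed) supplies to $\cA$ the answer $\mathbbm{1}\{x<y\}$ without noise (it knows the within-group order). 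If the answer is a noisy bit $\BSC_p(\mathbbm{1}\{i<j\})$, the simulator passes it to $\cA$ verbatim; since $\mathbbm{1}\{i<j\}=\mathbbm{1}\{x<y\}$ in this case, this is exactly the distribution $\cA$ expects from a genuine noisy comparison on that pair. Hence the joint distribution of $\cA$'s view in the simulation is identical to its view on a real \NoisySorting{n} instance with that linear order, so $\cA$ outputs a correctly sorted list of all $n$ elements with probability $1-o(1)$.

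Next I would extract the \textsc{GroupSorting} answer from $\cA$'s output: given the sorted list $(b_1,\dots,b_n)$ that $\cA$ returns, read off the groups by taking $\hat A_i = \{b_{(i-1)k+1},\dots,b_{ik}\}$ for $i=1,\dots,n/k$, and output $\hat A_1,\dots,\hat A_{n/k}$. If $\cA$'s output is a correct sorting of the true order, then each $\hat A_i$ equals $A_i$ as a set (the first $k$ positions are exactly $A_1$, the next $k$ exactly $A_2$, etc.), so the \textsc{GroupSorting} output is correct. Therefore this algorithm has error probability $o(1)$, matching $\cA$'s.

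For the query count: the only \textsc{GroupQuery} calls made are those corresponding to $\cA$'s \textsc{NoisyComparison} calls. Among these, the ones that return $*$ cost a query to the \textsc{GroupSorting} oracle but correspond to same-group comparisons of $\cA$; the ones that return a noisy bit are exactly the different-group comparisons, numbering $U^{\ne}$ in expectation $\bE[U^{\ne}]$. So naively this gives $\bE[U^{\textrm{total}}]$, not $\bE[U^{\ne}]+(n-n/k)$. The key observation that saves the extra same-group queries is that the within-group order is fixed and known to the simulator, so same-group comparisons need never be forwarded to the oracle at all — the simulator answers them itself for free. But to count same-group queries we must still \emph{detect} when a pair is in the same group; a single $\textsc{GroupQuery}$ per \emph{unordered pair of elements} suffices to learn the group-equality relation, and once the simulator knows, for a given pair, that they are in the same group, all future comparisons of that pair are answered internally. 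The number of distinct same-group pairs that $\cA$ can ever probe is bounded by $\binom{k}{2}\cdot \frac nk \le \frac{n(k-1)}{2}$; a sharper accounting via a union-find / connectivity argument shows that once we have made $n/k \cdot (k-1) = n - n/k$ successful same-group identifications we have fully recovered every group, after which no same-group query is ever needed — in fact each new same-group query either connects two previously-separate elements of a group (at most $n-n/k$ such events total, since that many edges suffice to connect all groups) or is redundant and can be skipped. Hence the total number of \textsc{GroupQuery} calls is at most $\bE[U^{\ne}] + (n - n/k)$, as claimed.

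The main obstacle is precisely this last bookkeeping: making sure that the simulator can always decide, without extra oracle cost, whether a pending comparison is a same-group pair already "discovered," and bounding the number of genuinely new same-group queries by $n-n/k$ rather than by the number of same-group comparisons $\cA$ makes (which could be much larger). I would handle this with an explicit union-find structure over the $n$ elements: maintain the partition into known-same-group clusters; forward a $\textsc{GroupQuery}(x,y)$ to the oracle only when $x$ and $y$ lie in different clusters; if the answer is $*$, merge their clusters (a "new" same-group query, at most $n-n/k$ of these since the final partition has $n/k$ parts each of size $k$, so exactly $n-n/k$ merges occur over the lifetime). Every other comparison — whether a repeat of a different-group pair or any same-cluster pair — is answered internally using the noisy bit already recorded (for cross-group pairs we actually must re-query the oracle to get a fresh noisy sample, which is fine: those are counted in $U^{\ne}$) or using the fixed within-group order. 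This yields the bound $\bE[U^{\ne}]+(n-n/k)$ on the expected number of queries, completing the proof.
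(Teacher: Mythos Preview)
Your overall simulation strategy and the union--find bookkeeping for bounding the number of same--group oracle calls by $n-n/k$ match the paper's proof essentially exactly. However, there is one genuine gap.

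You write that for a same--group pair the simulator ``supplies to $\cA$ the answer $\mathbbm{1}\{x<y\}$ without noise,'' and then conclude that ``the joint distribution of $\cA$'s view in the simulation is identical to its view on a real \NoisySorting{n} instance.'' This is false: in a real \NoisySorting{n} instance \emph{every} comparison is passed through $\BSC_p$, so answering some comparisons noiselessly changes the distribution of the transcript. Nothing in the hypothesis on $\cA$ rules out an algorithm that, say, uses the empirical noise rate as a source of internal randomness and behaves arbitrarily when that rate is atypical; such an $\cA$ could have $o(1)$ error in the true model but $\Omega(1)$ error under your simulation. Worse, even if $\cA$ happened still to succeed, the quantity $U^{\ne}$ in the lemma is defined as the number of cross--group queries $\cA$ makes \emph{in the genuine noisy model}; under a different transcript distribution the number of cross--group queries need not have expectation $\bE[U^{\ne}]$, so your query bound would not follow either.

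The fix is the one the paper uses: when the simulator answers a same--group comparison internally, it should return $\BSC_p(\mathbbm{1}\{x<_T y\})$ rather than the noiseless bit, using fresh private randomness to generate the $\BSC_p$ flip. With that change the transcript seen by $\cA$ is distributed exactly as in a true \NoisySorting{n} instance with the chosen total order, so both the $o(1)$ error guarantee and the expectation $\bE[U^{\ne}]$ transfer verbatim, and the rest of your argument (in particular the connectivity bound of $n-n/k$ on the number of $*$--returning queries) goes through.
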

\begin{proof}
Given $\cA$, we design an algorithm $\cA'$ for \GroupSorting{n}{k} as follows.

Given input $L$, Algorithm $\cA'$ picks an arbitrary strict total order $<_T$ on all the elements. 
$\cA'$ also maintains which elements are known to be in the same group, implied by queries that return $*$. 
Then, it simulates Algorithm $\cA$ on the same input $L$ as follows:
\begin{itemize}
\item When $\cA$ attempts to make a comparison between $x$ and $y$:
\begin{itemize}
    \item If $x, y$ are not known to be in the same group, call $\textsc{GroupQuery}(x,y)$. If the query returns $*$, which means $x$ and $y$ are in the same group, return $\BSC_p(\mathbbm{1} \{x <_T y\})$ to $\cA$. Otherwise, return the result of $\textsc{GroupQuery}(x,y)$ to $\cA$.
    \item If $x, y$ are known to be in the same group, return $\BSC_p(\mathbbm{1} \{x <_T y\})$ to $\cA$.
\end{itemize}

\item When $\cA$ outputs a sequence $x_1,\ldots, x_n$:
Let $A_i = \{x_{(i-1)k+1},\ldots, x_{ik}\}$ for $i\in [n/k]$ and output $A_1,\ldots, A_{n/k}$.

\end{itemize}
Let us analyze the error probability and number of queries made by algorithm $\cA'$.

\paragraph{Error probability.} Algorithm $\cA'$ simulates a \NoisySorting{n} instance on $L$ with the following total order:
\begin{itemize}
    \item If $x,y$ are in the same group, then $x<y$ iff $x<_T y$;
    \item If $x\in A_i$, $y\in A_j$ are in different groups, then $x<y$ iff $i<j$.
\end{itemize}
Therefore, with error probability $o(1)$, the sequence $x_1, \ldots, x_n$ that $\cA$ outputs is sorted with respect to the above total order, which gives the valid groups.
\paragraph{Number of queries.}
Algorithm $\cA'$ makes one query when $\cA$ makes a query for elements in different groups.
Algorithm $\cA'$ makes queries between elements in the same group only when they are not known to be in the same group. Imagine an initially empty graph on vertex set $A$ where for each $\textsc{GroupQuery}(x,y)$ that returns $*$, we add an edge between $x$ and $y$. 
For every such query, the number of connected components in this graph decreases by $1$, and at the end the graph has at least $n/k$ connected components. 
Thus, the total number of queries made by $\cA'$ to elements in the same group is at most $n-n/k$.

Overall, the total number of queries made by Algorithm $\cA'$ is at most $\bE[U^{\ne}] + (n-n/k)$ in expectation.
\end{proof}

Next we prove a lower bound for the Group Sorting Problem. By Lemma~\ref{lemma:noisysort-to-groupsort}, this implies a lower bound of the number of queries made between elements in different groups of any Noisy Sorting algorithm. 

\begin{lemma} \label{lemma:groupsort-lowerbound}
Let $k\mid n$ with $k = n^{o(1)}$.
Any (deterministic or randomized) algorithm for \GroupSorting{n}{k} with error probability $o(1)$ makes at least $(1-o(1)) \frac{n\log n}{I(p)}$ queries in expectation, even if the input is a uniformly random permutation. 
\end{lemma}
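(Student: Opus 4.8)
The plan is to run an information-theoretic argument à la Fano, but carefully accounting for the fact that queries returning $*$ carry no information about the order of the groups. Let $\Pi$ denote the (uniformly random) ordering of the groups $A_1,\ldots,A_{n/k}$; equivalently, after fixing a labelling of the $n/k$ blocks, $\Pi$ is a uniformly random permutation of $[n/k]$, so $H(\Pi) = \log((n/k)!) = (1-o(1)) n\log n$ using $k = n^{o(1)}$. Any correct algorithm with error probability $o(1)$ must, by Fano's inequality, have $I(\Pi; \text{transcript}) \ge (1-o(1)) H(\Pi) = (1-o(1)) n\log n$. So it suffices to show that each query contributes at most $I(p) + o(1)$ bits (amortized) to the mutual information between $\Pi$ and the transcript, i.e., that $T$ queries yield $I(\Pi;\text{transcript}) \le (I(p)+o(1)) \,\bE[T] + o(n\log n)$, whence $\bE[T] \ge (1-o(1)) \frac{n\log n}{I(p)}$.

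The key step is bounding the per-query information gain. Fix the history of past queries and answers, which determines the (deterministic, or conditioned) next query pair $(x,y)$. There are two cases. If $x,y$ are already known to be in the same group (from an earlier $*$), the answer is $*$ deterministically and contributes zero information. Otherwise, condition further on the history: the answer is $*$ with some probability $q$ (the event $x,y$ in the same group given the history), contributing no information in that case, and with probability $1-q$ the answer is $\BSC_p(\mathbbm 1\{i<j\})$ where $i,j$ are the blocks of $x,y$. Since the channel is a BSC with crossover $p$ applied to a single bit $\mathbbm 1\{i < j\}$ (whose conditional distribution given the history is some Bernoulli), the mutual information between $\Pi$ and the answer, conditioned on the history \emph{and} on the answer not being $*$, is at most the capacity $I(p)$; averaging over whether the answer is $*$ only shrinks this. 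Summing over all queries via the chain rule for mutual information, $I(\Pi;\text{transcript}) \le I(p)\cdot\bE[\#\{\text{informative queries}\}] \le I(p)\,\bE[T]$. (For a randomized algorithm, condition on its internal randomness throughout; the bound on $I(\Pi;\text{transcript} \mid \text{internal coins})$ is what enters Fano, and this only helps.)

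The main obstacle — and the place requiring genuine care — is the handling of the stopping time / variable query length: the number of queries $T$ is a random variable, possibly unbounded, and one must make the chain-rule-plus-capacity bound rigorous for a transcript of random length. The standard fix is to phrase it as $I(\Pi; \text{transcript}) = \sum_t \bE\big[\mathbbm 1\{T \ge t\}\, I(\Pi; \text{answer}_t \mid \text{history}_{<t})\big]$, bound each conditional term by $I(p)$ using the single-bit-BSC observation above, and recognize the sum as $I(p)\,\bE[T']$ where $T'$ is the number of informative (non-$*$-forced) queries, with $T' \le T$. One should also double-check that Fano applies with the $o(1)$ error and the large alphabet $(n/k)!$: the correction term is $h(o(1)) + o(1)\cdot\log((n/k)!) = o(1) + o(1)\cdot O(n\log n) = o(n\log n)$, which is absorbed. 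Finally, $H(\Pi) = \log((n/k)!) = (n/k)\log(n/k) - O(n/k) = (1-o(1)) n\log n$ uses precisely $\log(n/k) = \log n - \log k = (1-o(1))\log n$ since $k = n^{o(1)}$, and $n/k \cdot \log(n/k)$ versus $n\log n$: wait — this needs the extra observation that we are comparing to $n \log n$, and indeed $(n/k)\log(n/k)$ is \emph{not} $(1-o(1)) n \log n$ in general. The correct normalization is that the Group Sorting output space has $\log\frac{n!}{(k!)^{n/k}} = (1-o(1)) n\log n$ possible answers when $k = n^{o(1)}$, matching the technical overview; so I would let $\Pi$ range over this full set (orderings modulo within-group permutations) rather than just block-orderings, giving $H(\Pi) = (1-o(1))n\log n$ directly, and the per-query capacity bound is unchanged.
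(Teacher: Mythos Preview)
Your Fano setup is right (after the self-correction to $H(\Pi)=\log\frac{n!}{(k!)^{n/k}}=(1-o(1))n\log n$), but the per-query capacity bound has a real hole. You claim that, conditioned on the history, the information revealed by the answer is at most $I(p)$ because ``averaging over whether the answer is $*$ only shrinks this.'' That step is false: the indicator $\mathbbm 1\{Y_t=*\}$ is a \emph{deterministic} function of $\Pi$ given the query, so by the chain rule
\[
I(\Pi;Y_t\mid \text{history}) \;=\; I(\Pi;\mathbbm 1\{Y_t=*\}\mid \text{history}) \;+\; I(\Pi;Y_t\mid \text{history},\mathbbm 1\{Y_t=*\}) \;=\; h(q_t) + (1-q_t)\cdot(\text{at most }I(p)),
\]
where $q_t$ is the conditional probability of seeing $*$. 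The extra $h(q_t)$ term can be as large as $1$, so a single query can carry strictly more than $I(p)$ bits about $\Pi$. Your ``informative vs.\ $*$-forced'' dichotomy only handles the case $q_t\in\{0,1\}$; the problematic queries are exactly those with $q_t\in(0,1)$.

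The paper closes this gap with an additional counting-plus-concavity argument that your proposal is missing. First, one may assume WLOG that the algorithm never queries a pair already known to be in the same group; then the \emph{total} number of $*$ answers is at most $n-n/k$ (each $*$ merges two components in the ``known same-group'' graph), so $\sum_t \bP[m\ge t]\,q_t \le n-n/k$. A preliminary crude bound gives $\bE[m]=\Omega(n\log n)$, so the total $*$-mass is $o(\bE[m])$. Then one writes $I(\Pi;\text{transcript})\le \sum_t \bP[m\ge t]\big(H(Y_t)-H(Y_t\mid \Pi,Q_t)\big)$, bounds $H(Y_t)\le h(q_t,\tfrac{1-q_t}{2},\tfrac{1-q_t}{2})$, and applies Jensen to the concave function $g(x)=h(x,\tfrac{1-x}{2},\tfrac{1-x}{2})$ to show $\sum_t \bP[m\ge t]\,H(Y_t)\le(1+o(1))\bE[m]$; meanwhile $\sum_t \bP[m\ge t]\,H(Y_t\mid\Pi,Q_t)=h(p)\sum_t\bP[m\ge t](1-q_t)\ge(1-o(1))h(p)\bE[m]$. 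Subtracting gives $I(\Pi;\text{transcript})\le(1+o(1))I(p)\bE[m]$, and Fano finishes. So the missing ingredient is precisely: bound the total $*$-probability mass by $O(n)$ and use concavity to absorb the $h(q_t)$ terms into a $(1+o(1))$ factor.
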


\begin{proof}
Fix an algorithm $\cA$ for \GroupSorting{n}{k}.
WLOG assume that %
$\cA$ makes queries between elements in the same group only when they are not known to be in the same group.
Therefore, the total number of queries between elements in the same group is at most $n-n/k$.

Let $X = (A_1,\ldots,A_{n/k})$ be the true partition.
Let $m$ be the (random) number of queries made.
Let $Q^m$ be the queries made.
Let $Y^m$ be the returned values of the queries.
Let $\hat X$ be the most probable $X$ given all query results. Note that $\hat X$ is a function of $(Q^m, Y^m)$.

When $k=n^{o(1)}$, we have $H(X) = \log n! - \frac nk \log k! = (1-o(1)) n\log n$.

By Fano's inequality, $H(X\mid \hat X) \le 1 + o(n\log n)$,
so 
\begin{align}
\label{eq:groupsorting-fano}
I(X; Q^m, Y^m) \ge I(X; \hat X) = H(X) - H(X \mid  \hat X) \ge (1-o(1)) n\log n.
\end{align}

On the other hand,
\begin{align}
I(X; Q^m, Y^m) &= \sum_{i\ge 1} \bP[m \ge i] I(X; Q_i, Y_i \mid  Q^{i-1}, Y^{i-1}, m \ge i) \tag{chain rule of mutual information}\\
&= \sum_{i\ge 1} \bP[m \ge i] I(X;Y_i \mid  Q_i, Q^{i-1}, Y^{i-1}, m \ge i) \tag{$Q_i$ and $X$ are independent conditioned on $(Q^{i-1},Y^{i-1})$}\\
&\le \sum_{i\ge 1} \bP[m\ge i] I(X, Q_i, Q^{i-1}, Y^{i-1}; Y_i \mid  m \ge i) \tag{chain rule}\\
&= \sum_{i\ge 1} \bP[m\ge i] I(X, Q_i; Y_i \mid  m \ge i) \tag{$(Q^{i-1}, Y^{i-1})$ and $Y_i$ are independent conditioned on $(X, Q_i)$}\\
&= \sum_{i\ge 1} \bP[m\ge i] \left(H(Y_i \mid  m\ge i) - H(Y_i \mid  X, Q_i, m\ge i)\right)
\label{eq:groupsorting-eq2}
\end{align}

Let $q_i = \bP[Y_i=* \mid  m\ge i]$.
Because the algorithm makes at most $n-n/k$ queries between elements in the same group, we have
\begin{align*}
\sum_{i\ge 1} \bP[m\ge i] q_i \le n-n/k.
\end{align*}

We have
\begin{align*}
H(Y_i \mid  m\ge i) \le h\left(q_i, \frac{1-q_i}2, \frac{1-q_i}2\right)
\end{align*}
and
\begin{align*}
H(Y_i \mid  X, Q_i, m\ge i) = 0 \cdot q_i + h(p) (1-q_i).
\end{align*}

Note that $H(Y_i \mid  m\ge i) \le \log 3$ trivially, so by Equation~(\ref{eq:groupsorting-eq2}), $I(X; Q^m, Y^m) \le \log 3 \cdot \sum_{i \ge 1} \bP[m \ge i] = \log 3 \cdot \bE[m]$. Combining with Equation~(\ref{eq:groupsorting-fano}), we have $\bE[m] = \Omega(n\log n)$.

Note that $g(x) := h(x,\frac{1-x}2,\frac{1-x}2)$ is concave in $x$, so
\begin{align*}
\sum_{i\ge 1} \bP[m\ge i] H(Y_i \mid  m\ge i) &\le \left(\sum_{i\ge 1} \bP[m\ge i]\right) g\left(\frac{\sum_{i\ge 1} \bP[m\ge i] q_i}{\sum_{i\ge 1} \bP[m\ge i]}\right) \tag{Jensen's inequality}\\
& \le \bE[m] \cdot g\left(\min\left\{\frac 13, \frac{n-n/k}{\bE[m]}\right\}\right)  \tag{$g$ is increasing on $[0, \frac 13]$ and is maximized at $\frac 13$}\\
& = (1+o(1)) \bE[m].\tag{$n = o(\bE[m])$}
\end{align*}

On the other hand,
\begin{align*}
\sum_{i\ge 1} \bP[m\ge i] H(Y_i \mid  X, Q_i, m\ge i) &= \sum_{i\ge 1} \bP[m\ge i] h(p) (1-q_i) \\
&= h(p) \left(\sum_{i\ge 1} \bP[m\ge i] - \sum_{i\ge 1} \bP[m\ge i] q_i\right) \\
&\ge h(p) (\bE[m] - n) \\
&= (1-o(1)) h(p) \bE[m].
\end{align*}

Plugging the above two inequalities in Equation~(\ref{eq:groupsorting-eq2}), we get
\begin{align*}
I(X; Q^m, Y^m) &\le (1+o(1)) \bE[m] - (1-o(1)) h(p) \bE[m] \\
&\le (1+o(1)) I(p) \bE[m].
\end{align*}

Combining with Equation~(\ref{eq:groupsorting-fano}), we get $\bE[m] \ge (1-o(1))\frac{n\log n}{I(p)}$.
\end{proof}

Before we give lower bound for the number of queries made to elements in the same group, we first show
the following technical lemma that will be useful later.
\begin{lemma} \label{lemma:noisysort-samegroup-lowerbound-technical}
Let $(X_n)_{n\ge 0}$ be a Markov process defined as
\begin{align*}
    X_0=0, \quad X_{n+1} = X_n + \left\{ \begin{array}{ll}
    + 1 & \text{w.p.}~1-p, \\
    - 1 & \text{w.p.}~p.
    \end{array} \right.
\end{align*}
Let $\tau$ be a random variable supported on $\bZ_{\ge 0}$.
Suppose there exists $\delta>0$ such that
\begin{align*}
\bE[\tau] \le (1-\delta)m/(1-2p).
\end{align*}
Then
\begin{align*}
\bP[X_\tau \le m] \ge \delta/2 - o_m(1).
\end{align*}
\end{lemma}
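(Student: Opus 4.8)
The plan is to use a martingale / optional-stopping argument, analogous to the one in Lemma~\ref{lem:repeated_query}, together with a one-sided concentration (second-moment or exponential) bound to convert the bound on $\bE[\tau]$ into a lower bound on $\bP[X_\tau \le m]$. First note that $M_n := X_n - n(1-2p)$ is a martingale with bounded increments. If $\tau$ had bounded expectation and we could apply optional stopping, we would get $\bE[X_\tau] = (1-2p)\bE[\tau] \le (1-\delta)m$, i.e. $X_\tau$ is on average noticeably below $m$. The event $\{X_\tau \le m\}$ should then have probability bounded below by a constant times $\delta$, but to make this rigorous we need to control the upper tail of $X_\tau$, because a priori $X_\tau$ could be concentrated just above $m$ with a heavy tail pulling the mean down.

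The key steps I would carry out, in order: (1) Truncate $\tau$ at a large deterministic time $N$ and work with $\tau \wedge N$; the random variable $X_{\tau \wedge N}$ is then genuinely integrable and optional stopping gives $\bE[X_{\tau \wedge N}] = (1-2p)\bE[\tau \wedge N] \le (1-2p)\bE[\tau] \le (1-\delta)m$. (2) Establish a one-sided upper-tail estimate for $X_{\tau \wedge N}$ that does not depend on $N$: since $X_{\tau\wedge N} \le X_{\tau\wedge N}^+ $ and the walk has $+1/-1$ steps, $X_{\tau \wedge N} \le \tau\wedge N$ deterministically, so a bound on the upper tail of $\tau$ would suffice; alternatively, use that $\bE[(X_{\tau\wedge N} - (1-2p)(\tau\wedge N))^2]$ equals the expected quadratic variation, which is $\bE[\tau\wedge N] \le \bE[\tau]$, giving $\mathrm{Var}$-type control. (3) Combine: writing $Z := X_{\tau\wedge N}$, we have $\bE[Z] \le (1-\delta)m$, and we want $\bP[Z \le m]$. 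Use the elementary inequality that for any random variable $Z$ and threshold $m$, $\bP[Z \le m] \ge \frac{m - \bE[Z]}{m - \essinf Z}$ when $Z$ is bounded below — here $Z \ge -(\tau\wedge N) $ is not bounded below uniformly, so instead I would split on whether $Z$ is very negative and use the quadratic-variation bound to control that contribution, or apply the Paley–Zygmund-style / reverse-Markov argument on the shifted variable $m - Z$. This yields $\bP[Z \le m] \ge \delta/2 - (\text{error from truncation and the negative tail})$. (4) Send $N \to \infty$: by monotone convergence $\bE[\tau\wedge N] \to \bE[\tau]$ and $\tau \wedge N \to \tau$ a.s., and one checks $X_{\tau\wedge N} \to X_\tau$ a.s. on $\{\tau < \infty\}$ (and $\tau<\infty$ a.s. since $\bE[\tau]<\infty$), so the bound passes to the limit, leaving an $o_m(1)$ additive error coming purely from the ratio $n/\bE[m]$-type slack and lower-order terms in step (3).

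The main obstacle I anticipate is step (2)–(3): the walk $X_n$ is not bounded below along $\tau$, so a naive reverse-Markov inequality does not directly apply, and the heavy left tail of a biased random walk (it can wander far negative before $\tau$) must be shown not to inflate $\bP[X_\tau \le m]$ spuriously in the wrong direction — actually it only helps us, since large negative excursions make $X_\tau \le m$ more likely, so the real danger is the opposite: ensuring the mean is not dragged down by a tiny-probability large-negative event while the bulk sits above $m$. Controlling this requires either an $L^2$ bound on $X_{\tau\wedge N} - (1-2p)(\tau\wedge N)$ via the optional-stopping identity for the quadratic-variation martingale $M_n^2 - (\text{compensator})$, or a direct exponential supermartingale $e^{-\lambda(X_n - cn)}$ argument with $\lambda$ small; I would attempt the $L^2$ route first as it meshes cleanly with the already-available $\bE[\tau] \le (1-\delta)m/(1-2p)$ hypothesis and keeps all constants explicit enough to extract the clean $\delta/2 - o_m(1)$ conclusion.
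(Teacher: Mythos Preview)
Your approach has a genuine gap: you rely on optional stopping to obtain $\bE[X_{\tau\wedge N}] = (1-2p)\bE[\tau\wedge N]$, but the lemma does \emph{not} assume that $\tau$ is a stopping time for the walk (it is merely ``a random variable supported on $\bZ_{\ge 0}$'', with no constraint on its joint law with $(X_n)$). Without that hypothesis Wald's identity can fail outright --- e.g.\ take $\tau=1$ when $X_1=-1$ and $\tau=0$ otherwise --- so neither the mean identity nor your quadratic-variation bound $\bE[M_{\tau\wedge N}^2]=O(\bE[\tau])$ is available. All of your steps (1)--(4) sit on top of this, so the argument does not prove the stated lemma.

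The paper's proof avoids the issue entirely and is far shorter than what you sketch. It never touches $\bE[X_\tau]$; instead it works with two events whose probabilities can be bounded using only the marginals. Set $T:=(1-\delta/2)m/(1-2p)$, let $E_1=\{\tau\le T\}$ and $E_2=\{X_t\le m\text{ for all }t\le T\}$. Markov's inequality on $\tau$ gives $\bP[E_1]\ge 1-\frac{1-\delta}{1-\delta/2}\ge \delta/2$; a standard concentration/maximal inequality for the biased walk gives $\bP[E_2]\ge 1-o_m(1)$ (since $\bE[X_t]=(1-2p)t\le (1-\delta/2)m$ throughout $[0,T]$). On $E_1\cap E_2$ we have $X_\tau\le m$ deterministically, so the union bound finishes. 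Note this uses nothing about how $\tau$ and the walk are coupled --- exactly the generality the lemma claims.

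If you additionally assume $\tau$ is a stopping time, your $L^2$ route can be completed (and in fact yields $\delta-o_m(1)$ rather than $\delta/2-o_m(1)$, via $X_\tau\ge M_\tau$ and $\bE|M_\tau|\le \sqrt{\bE[M_\tau^2]}=O(\sqrt m)$), but that is a different and stronger statement than the one you were asked to prove.
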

\begin{proof}
    Let $E_1$ be the event that $\tau \le (1-\delta/2) m/(1-2p)$.
    Let $E_2$ be the event that $X_t \le m$ for all $t \le (1-\delta/2) m/(1-2p)$.
    By Markov's inequality, $\bP[E_1] \ge \delta/2$.
    By concentration inequalities, $\bP[E_2] \ge 1-o_m(1)$.
    Using union bound, we have
    \begin{align*}
    \bP[X_\tau \le m] \ge  \bP[E_1 \land E_2] \ge \delta/2 - o_m(1).
    \end{align*}
\end{proof}

\begin{lemma} \label{lemma:noisysort-samegroup-lowerbound}
Fix any (deterministic or randomized) algorithm for \NoisySorting{n} with error probability $o(1)$.
Let $k\mid n$.
Let $L$ be the input list of \NoisySorting{n}. Suppose we partition $L$ into $n/k$ groups $A_1,\ldots,A_{n/k}$ where for all $1\le i<j \le n/k$, for all $x\in A_i$, $y\in A_j$, we have $x<y$. Let $U^{=}$ denote the number of queries made to elements in the same group.

Then $\bE[U^{=}] \ge (1-o(1)) \frac{(n-n/k) \log(n-n/k)}{(1-2p) \log \frac {1-p}p}$, even if the input is a uniformly random permutation. 
\end{lemma}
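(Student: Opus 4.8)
The plan is to argue that even to correctly sort the $k$ elements \emph{within each group}, the algorithm must effectively decide, for each of the $n/k$ groups, a nearly-correct ordering of $k$ elements, and the error probability for most individual groups must therefore be $o(1/(n/k))$ on average; then I will localize to a single group and show that sorting $k$ elements correctly with such small error probability, using noisy comparisons \emph{restricted to that group}, requires $(1-o(1))\frac{k\log k}{(1-2p)\log\frac{1-p}{p}}$ queries in expectation, and finally sum over all $n/k$ groups. The key quantitative tool is Lemma~\ref{lemma:noisysort-samegroup-lowerbound-technical}: if the expected number of queries devoted to distinguishing two particular adjacent elements in a group is less than $(1-\delta)m/(1-2p)$, then with probability $\ge \delta/2 - o_m(1)$ the log-likelihood random walk after $\tau$ such queries fails to exceed $m$, which translates (via a Bayes/Neyman–Pearson argument, as in the proof of Lemma~\ref{lem:repeated_query}) into an $\Omega(\delta)$ chance of getting that adjacent pair's order wrong. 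Setting $m = (1-o(1))\log k$ (say $m \approx \log(n/\mathrm{polylog})$ so that $2^{-m}$ times the relevant counts is $o(1)$) forces $\delta = o(1)$, i.e. the per-pair expected query count is $(1-o(1))\frac{\log k}{(1-2p)\log\frac{1-p}{p}}$.

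Concretely, I would proceed as follows. First, reduce to the situation where the input restricted to each group is a uniformly random permutation of that group's $k$ elements (this is implied by the uniformly random permutation hypothesis, and queries between different groups give \emph{no} information about the within-group order — formally, conditioning on all cross-group query results leaves the within-group permutations independent and uniform). Second, fix a group $A_i$ and consider the $k-1$ ``adjacent pairs'' in its true sorted order; by a union-bound / averaging argument over the $n/k$ groups and the $k-1$ pairs, the algorithm's output, to have total error $o(1)$, must get all but an $o(1)$-fraction of these $n-n/k$ adjacent pairs correct with failure probability individually $o\big(k/n\big)$ — more carefully, I would lower bound $\bE[U^{=}]$ by summing, over all adjacent pairs, the expected number of queries that ``touch'' that pair and argue each such expectation is $\ge (1-o(1))\frac{\log(n-n/k)}{(1-2p)\log\frac{1-p}{p}}$. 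Third, for a single adjacent pair $(x,y)$ with $x<y$: condition on the true within-group order being uniform; the only way to learn the relative order of $x$ and $y$ is through queries \texttt{NoisyComparison}$(x,y)$ (queries involving other elements of the group, or other groups, are independent of whether $x<y$ or $y<x$ once we condition appropriately — this requires a short argument that the posterior on the event $\{x<y\}$ is a function only of the sequence of direct $(x,y)$ comparison outcomes). Let $\tau_{xy}$ be the number of such direct queries. The posterior log-likelihood ratio after $\tau_{xy}$ queries is exactly the walk $X_t$ of Lemma~\ref{lemma:noisysort-samegroup-lowerbound-technical} scaled by $\log\frac{1-p}{p}$, so if $\bE[\tau_{xy}] \le (1-\delta)m/(1-2p)$ then with probability $\ge \delta/2 - o_m(1)$ the magnitude of the log-likelihood ratio is $\le m\log\frac{1-p}{p}$, hence the conditional probability of outputting the wrong order of $x,y$ is $\ge \frac{1}{1+2^{m}} \cdot (\delta/2 - o_m(1))= \Omega(2^{-m}\delta)$. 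Fourth, take $m$ with $2^{m} = (n-n/k)^{1-o(1)}$ (valid since $k = n^{o(1)}$ is not needed here but consistent with it), so that total error $\ge \sum_{\text{pairs}} \Omega(2^{-m}\delta_{\text{pair}}) = \Omega((n-n/k)\cdot 2^{-m} \bar\delta)$; for this to be $o(1)$ we need $\bar\delta = o(1)$, forcing $\sum_{\text{pairs}}\bE[\tau_{xy}] \ge (1-o(1))\frac{(n-n/k)\,m}{1-2p} = (1-o(1))\frac{(n-n/k)\log(n-n/k)}{(1-2p)\log\frac{1-p}{p}}$. Finally, observe $\sum_{\text{pairs}}\tau_{xy} \le U^{=}$ since distinct adjacent pairs are distinct pairs of same-group elements, which gives the claimed bound on $\bE[U^{=}]$.

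The main obstacle I anticipate is the information-theoretic bookkeeping in the third step: making rigorous the claim that the posterior probability of $\{x<y\}$ depends \emph{only} on the direct $(x,y)$ comparison outcomes, and that we may therefore treat the $k-1$ adjacent pairs essentially independently when lower-bounding the total error. This is intuitively clear — comparisons not involving both $x$ and $y$ carry no signal about their relative order under a uniform prior — but one must be careful because the algorithm is adaptive and the \emph{choice} of which queries to make at each step depends on previous (uninformative-for-this-pair) outcomes; the clean way is to condition on everything except the within-group permutation, note the permutation is still uniform, and then for each pair run the single-pair argument, using linearity of expectation over pairs to add up the query lower bounds and a union bound (or a direct averaging over which pair is ``responsible'' for the error) to add up the error contributions. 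A secondary technical point is choosing the $o_m(1)$ and $o(1)$ terms consistently so that $m \to \infty$ while $2^{-m}(n-n/k) \to \infty$, which is possible precisely because $\log(n-n/k) \to \infty$; I would set $m = \log(n-n/k) - \omega(1)$ with the $\omega(1)$ growing slowly enough that $m = (1-o(1))\log(n-n/k)$ and all the $o_m(1)$ error terms from Lemma~\ref{lemma:noisysort-samegroup-lowerbound-technical} and the concentration inequality are negligible against the $\Omega(2^{-m}\bar\delta(n-n/k))$ error lower bound.
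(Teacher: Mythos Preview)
Your proposal has a real gap in the aggregation (your ``Fourth'' step), and a related mis-statement in the per-pair step. First, the claim that ``the posterior on the event $\{x<y\}$ is a function only of the direct $(x,y)$ comparison outcomes'' is false even for adjacent $x,y$: with three elements and true order $x<y<z$, learning $x<z$ from queries not involving $y$ moves the posterior of $\{x<y\}$ from $1/2$ to $2/3$ (of the three permutations consistent with $x<z$, two have $x<y$). What \emph{is} true---and is what the paper uses---is that for the two specific permutations $\sigma$ and $\tau_i=(\text{swap }x_i,x_{i+1})$ one has $q_{\tau_i}=q_\sigma\cdot((1-p)/p)^{-W_i}$, depending only on $W_i$.

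More seriously, even granting a per-pair bound $e_i:=\Pr[\text{pair $i$ output in wrong order}]\ge \Omega(2^{-m}\delta_i)$, you cannot deduce $\Pr[\text{error}]\ge \sum_i \Omega(2^{-m}\delta_i)$: a single wrong output may invert many adjacent pairs simultaneously, so all you know is $e_i\le \Pr[\text{error}]=o(1)$ for each $i$, which with $2^m\approx |S|$ yields only $\delta_i=o(|S|)$ and hence no lower bound on $\sum_i\bE[\tau_{x_ix_{i+1}}]$. The paper sidesteps per-pair error accounting altogether by exploiting a \emph{global} constraint: the $\tau_i$ are distinct permutations, so $\sum_{i\in S} q_{\tau_i}\le 1-q_\sigma$. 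Since $o(1)$ error forces $q_\sigma\ge 1-\epsilon$ with probability $1-o(1)$, this gives $q_\sigma\sum_i((1-p)/p)^{-W_i}\le \epsilon$, and AM--GM turns it into a bound on the \emph{aggregate} $W=\sum_i W_i$, namely $|S|((1-p)/p)^{-W/|S|}\le \epsilon/(1-\epsilon)$. Lemma~\ref{lemma:noisysort-samegroup-lowerbound-technical} is then applied \emph{once}, with $\tau=U^{=}$ and $X_\tau=W$: if $\bE[U^{=}]\le(1-\delta)\frac{|S|\log|S|}{(1-2p)\log\frac{1-p}{p}}$ then with probability $\Omega(\delta)$ one has $W\le |S|\log|S|/\log\frac{1-p}{p}$ and hence $|S|((1-p)/p)^{-W/|S|}\ge 1$, a contradiction. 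The convexity/AM--GM step is exactly what replaces your failed summation of per-pair errors.
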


\begin{proof}
Suppose the true order is $\sigma=(x_1,\ldots, x_n)$.
Let $S = \{i\in [n]: (n/k) \nmid i\}$. Note that $|S| = n-n/k$.
For $i\in S$, define $W_i$ to be the number of queries returning $x_i < x_{i+1}$, minus the number of queries returning $x_i > x_{i+1}$. Note that $W_i$ is a random variable depending on the transcript.
Define $W = \sum_{i\in S} W_i$.
Then $\bE[W] = (1-2p)\bE[U^{=}]$.

Let us consider the posterior probabilities $q$ of the permutations given a fixed transcript.
That is, define $$q_\tau := \bP[\sigma=\tau \mid  Q^m, Y^m].$$
By Bayes' rule, $q_\tau$ equals $\frac{\bP[Q^m, Y^m \mid \sigma = \tau]\bP[\sigma = \tau]}{\bP[Q^m, Y^m]}$. By symmetry, $\bP[\sigma = \tau]$ are equal for different $\sigma$ and $\bP[Q^m, Y^m]$ is constant as we fixed the transcript, so $q_\tau$ is proportional to $\bP[Q^m, Y^m \mid \sigma = \tau]$. If $\tau = (y_1, \ldots, y_n)$, then $\bP[Q^m, Y^m \mid \sigma = \tau]$ equals
\begin{equation}
\label{eq:post_proportion}
    \prod_{1 \le i < j \le n} (1-p)^{(\text{\#queries returning } y_i < y_{j})} \cdot \prod_{1 \le i < j \le n} p^{(\text{\#queries returning } y_i > y_{j})}.
\end{equation}
For $i\in S$, consider the permutation $\tau_i := (x_1,\ldots,x_{i-1}, x_{i+1},x_i,x_{i+2},\ldots,x_n)$.
Then by comparing $q_{\tau_i}$ and $q_\sigma$ using (\ref{eq:post_proportion}), we have 
\begin{align*}
\frac{q_\sigma}{q_{\tau_i}} &= \frac{(1-p)^{(\text{\#queries returning } x_i < x_{i+1})}p^{(\text{\#queries returning } x_i > x_{i+1})}}{(1-p)^{(\text{\#queries returning } x_i > x_{i+1})}p^{(\text{\#queries returning } x_i < x_{i+1})}}\\
&= \left(\frac{1-p}{p}\right)^{(\text{\#queries returning } x_i < x_{i+1})} \left(\frac{1-p}{p}\right)^{-(\text{\#queries returning } x_i > x_{i+1})}\\
&=  \left(\frac{1-p}{p}\right)^{W_i}.
\end{align*}
Thus, 
\begin{align*}
q_{\tau_i} = q_{\sigma} \cdot \left(\frac{1-p}p\right)^{-W_i}.
\end{align*}
Summing over $S$, we have
\begin{align*}
\sum_{i\in S} q_{\tau_i} &= q_{\sigma} \sum_{i\in S} \left(\frac{1-p}p\right)^{-W_i} \\
& \ge q_{\sigma} |S| \left(\frac{1-p}p\right)^{-W/|S|}.
\end{align*}
Because the error probability is $o(1)$, for any $\epsilon>0$, with probability $1-o(1)$, $q_{\sigma} \ge 1-\epsilon$.
So
\begin{align*}
|S| \left(\frac{1-p}p\right)^{-W/|S|} \le \epsilon/(1-\epsilon)
\end{align*}
with probability $1-o(1)$.

On the other hand, for any $\delta > 0$, if
\begin{align*}
\bE[U^{=}] \le (1-\delta)\frac{|S| \log |S|}{(1-2p) \log\frac{1-p}p},
\end{align*}
then by regarding $U^=$ as $\tau$ and $W$ as $X_\tau$ in Lemma~\ref{lemma:noisysort-samegroup-lowerbound-technical}, with probability $\Omega(\delta)-o_{|S|}(1)$, we have
\begin{align*}
W \le \frac{|S| \log |S|}{\log\frac{1-p}p}
\end{align*}
and then
\begin{align*}
|S| \left(\frac{1-p}p\right)^{-W/|S|} \ge 1,
\end{align*}
which is a contradiction.

Thus, 
\begin{align*}
\bE[U^{=}] \ge (1-o(1))\frac{|S| \log |S|}{(1-2p) \log\frac{1-p}p},
\end{align*}
as desired.
\end{proof}

Now we are ready to prove Theorem~\ref{thm:sort-lowerbound}.
\begin{proof}[Proof of Theorem~\ref{thm:sort-lowerbound}]
Fix an algorithm for \NoisySorting{n} with error probability $o(1)$.

Take $k =\log n$. Let $n' =  k \lfloor n / k\rfloor$. Clearly, an algorithm for \NoisySorting{n} can be used to solve $\mathsf{NoisySorting}(n')$ with less or equal number of queries in expectation. Let $U^{\ne}$, $U^{=}$ be as defined in Lemma~\ref{lemma:noisysort-to-groupsort}, Lemma~\ref{lemma:noisysort-samegroup-lowerbound} for $n'$ and $k$. 

By Lemma~\ref{lemma:noisysort-to-groupsort} and Lemma~\ref{lemma:groupsort-lowerbound}, we have $$\bE[U^{\ne}] \ge (1-o(1)) \frac{n'\log n'}{I(p)} - (n'-n'/k) = (1-o(1)) \frac{n'\log n'}{I(p)}.$$
By Lemma~\ref{lemma:noisysort-samegroup-lowerbound}, we have
$$\bE[U^{=}] \ge (1-o(1)) \frac{(n'-n'/k)\log(n'-n'/k)}{(1-2p)\log \frac{1-p}p} = (1-o(1)) \frac{n'\log n'}{(1-2p)\log \frac{1-p}p}.$$

Therefore expected number of queries made by the algorithm is at least
\begin{align*}
\bE[U^{\ne}] + \bE[U^{=}] & \ge (1-o(1)) \left(\frac 1{I(p)} + \frac 1{(1-2p) \log \frac{1-p}p}\right) n'\log n'\\
& = (1-o(1)) \left(\frac 1{I(p)} + \frac 1{(1-2p) \log \frac{1-p}p}\right) n\log n.
\end{align*}
\end{proof}

\section{Noisy Binary Search}

In this section, we prove our lower and upper bounds for Noisy Binary Search. 

\subsection{Lower Bound}
In this section, we prove our lower bound for Noisy Binary Search using techniques similar to our lower bound for Noisy Sorting.
\SearchLowerbound*

We define the following intermediate problem:
\begin{defn}[\OracleNoisyBinarySearch{n}]
Given a sorted list $L$ of $n$ elements and an element $x$ that is unequal to any element in the list, a solver needs to output the predecessor of $x$ in $L$ by asking the following type of query $\textsc{OracleQuery}(x,y)$ for $y \in L$:
\begin{itemize}
    \item If $y$ is the predecessor of $x$, output symbol $P$;
    \item If $y$ is the successor of $x$, output symbol $S$;
    \item Otherwise, output \NoisyComparison{x}{y}. 
\end{itemize}
\end{defn}

\begin{lemma} \label{lemma:binary-search-to-oracle-binary-search}
Fix an algorithm $\cA$ for \NoisyBinarySearch{n} with error probability $\delta$. Let $L$ be the input list and $x$ be the input element of \NoisyBinarySearch{n}. 
Let $U^{\not \approx}$ denote the number of noisy comparisons $\cA$ makes  between $x$ and elements that are not the predecessor or successor of $x$. 

Then there exists an algorithm for \OracleNoisyBinarySearch{n} with error probability at most $\delta$ which makes at most $\bE[U^{\not \approx}] + 1$ queries in expectation.
\end{lemma}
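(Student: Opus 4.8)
The plan is to mimic the reduction in Lemma~\ref{lemma:noisysort-to-groupsort}: given an algorithm $\cA$ for \NoisyBinarySearch{n}, build an algorithm $\cA'$ for \OracleNoisyBinarySearch{n} that simulates $\cA$ step by step, intercepting each noisy comparison query. The oracle queries $\textsc{OracleQuery}(x,y)$ behave exactly like $\textsc{NoisyComparison}(x,y)$ except when $y$ is the predecessor or successor of $x$, in which case they reveal (noiselessly) which side $y$ is on. So when $\cA$ asks to compare $x$ with $y$, $\cA'$ calls $\textsc{OracleQuery}(x,y)$; if the answer is in $\{0,1\}$ it forwards it verbatim to $\cA$; if the answer is $P$ or $S$, then $\cA'$ has learned that $y$ is the predecessor (resp. successor) of $x$, so from that point on it records this fact. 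The key observation is that once $\cA'$ knows the predecessor $y_0$ of $x$, it knows the correct answer to the original problem, so it can simply suppress any further queries and output $y_0$ immediately (or continue the simulation while answering any query involving $y_0$ or its successor deterministically and consistently, and any other query between $x$ and some $y$ via a fresh $\textsc{OracleQuery}$; either way, no additional oracle query between $x$ and a predecessor/successor is ever needed, since those are exactly the ones that already returned $P$ or $S$).

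The two things to check are correctness and the query bound. For correctness: as long as $\cA'$ has not yet seen a $P$ or $S$ answer, every query it forwards to $\cA$ is distributed exactly as a genuine noisy comparison of $x$ against $y$ in a \NoisyBinarySearch{n} instance with the same list and the same $x$; hence the transcript handed to $\cA$ has exactly the same distribution as a real execution of $\cA$, so $\cA$ outputs the correct predecessor with probability $\ge 1-\delta$, and $\cA'$ outputs the same thing (either directly, upon seeing a $P$, or via $\cA$'s output). So $\cA'$ has error probability $\le\delta$. For the query count: $\cA'$ makes exactly one oracle query per noisy comparison that $\cA$ makes against a non-predecessor, non-successor element — this is $U^{\not\approx}$ queries — plus at most one more oracle query, namely the single query whose answer is the first $P$ or $S$ it receives (after which it stops querying entirely). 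Hence the total number of oracle queries is at most $U^{\not\approx}+1$ always, and in particular $\bE[U^{\not\approx}]+1$ in expectation.

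The main subtlety — and where care is needed — is the bookkeeping that ensures $\cA'$ never needs a second query against a predecessor or successor element. The clean way is: $\cA'$ runs the simulation of $\cA$ normally, but maintains the set of "identified" elements (those known to be the predecessor or successor of $x$ via a returned $P$/$S$); whenever $\cA$ asks to compare $x$ with an element in this identified set, $\cA'$ answers it from memory with the appropriate fresh $\BSC_p$ bit (the noise is independent each time, matching the true query distribution) rather than calling the oracle; whenever $\cA$ asks about an unidentified element, $\cA'$ calls $\textsc{OracleQuery}$. The first time the oracle returns $P$ (which happens at most once for the predecessor), $\cA'$ can short-circuit and output that element; the successor case is handled the same way via the predecessor being the element immediately below it in $L$. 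Thus the oracle is invoked only on unidentified elements, i.e. $U^{\not\approx}$ times, plus the at-most-one invocation that produces the first $P$ or $S$. This gives the bound $\bE[U^{\not\approx}]+1$, completing the proof.
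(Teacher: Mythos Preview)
Your proposal is correct and follows essentially the same construction as the paper: simulate $\cA$, replace each $\textsc{NoisyComparison}(x,y)$ by $\textsc{OracleQuery}(x,y)$, forward $\{0,1\}$ answers verbatim, and as soon as the oracle returns $P$ or $S$ output the (now known) predecessor and halt. The paper's proof is terser (``the error probability bound and number of queries easily follow''), while you spell out the coupling and the query count explicitly; your ``subtlety'' paragraph about caching identified elements and feeding fresh $\BSC_p$ bits is unnecessary, since once the first $P$ or $S$ appears $\cA'$ already knows the answer and halts, so no second oracle call to a predecessor/successor is ever attempted.
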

\begin{proof}
The algorithm for \OracleNoisyBinarySearch{n} simulates $\cA$ as follows:
\begin{itemize}
    \item Every time $\cA$ attempts to make a \NoisyComparison{x}{y} between some $x$ and some element $y$ in $L$, the algorithm makes $\textsc{OracleQuery}(x,y)$ instead. If $\textsc{OracleQuery}(x,y)$ returns $P$ or $S$, we have found the predecessor of $x$, so we can return the predecessor and finish the execution; otherwise, we pass the result of $\textsc{OracleQuery}(x,y)$ (which will be \NoisyComparison{x}{y}) to $\cA$. 
    \item If $\cA$ returns a predecessor or $x$, then we also return the same predecessor, finishing the execution.
\end{itemize}
The error probability bound and number of queries easily follow. 
\end{proof}

\begin{lemma} \label{lemma:oracle-binary-search-lowerbound}
Any (deterministic or randomized) algorithm for \OracleNoisyBinarySearch{n} with error probability $\le \delta$ makes at least $(1-\delta-o(1)) \frac{\log n}{I(p)}$ queries in expectation, even if the position of the element to search for is uniformly random. 
\end{lemma}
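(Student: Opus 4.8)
The plan is to use Fano's inequality together with a per-query information bound, mirroring the argument for \GroupSorting{n}{k} (Lemma~\ref{lemma:groupsort-lowerbound}). Let $X$ be the (uniformly random) index of the predecessor of $x$, so $X$ is uniform on $\{0,1,\ldots,n-1\}$ (or $[n]$ depending on convention), hence $H(X) = \log n$. Let $m$ be the random number of queries made, $Q^m$ the sequence of queries, $Y^m$ the sequence of answers, and let $\hat X$ be the output, a function of $(Q^m, Y^m)$. Since the algorithm errs with probability $\le \delta$, Fano's inequality gives $H(X \mid \hat X) \le h(\delta) + \delta \log n$, so
\begin{align*}
I(X; Q^m, Y^m) \ge I(X; \hat X) = H(X) - H(X\mid \hat X) \ge (1-\delta) \log n - h(\delta) = (1-\delta-o(1))\log n.
\end{align*}

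Next I would upper bound $I(X; Q^m, Y^m)$ in terms of $\bE[m]$ by the same chain-rule manipulation as in Lemma~\ref{lemma:groupsort-lowerbound}: writing $q_i = \bP[Y_i \in \{P,S\} \mid m \ge i]$ for the probability that the $i$-th query hits the predecessor or successor,
\begin{align*}
I(X; Q^m, Y^m) \le \sum_{i\ge 1} \bP[m\ge i]\bigl(H(Y_i \mid m\ge i) - H(Y_i \mid X, Q_i, m\ge i)\bigr),
\end{align*}
where $H(Y_i \mid X, Q_i, m\ge i) = (1-q_i) h(p)$ since conditioned on $X$ and the query, the answer is deterministic if the query hits $P$ or $S$ and is $\BSC_p$ otherwise. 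For the positive term, $Y_i$ takes values in a ternary alphabet $\{P\text{ or }S, 0, 1\}$ — more precisely at most one of $P,S$ can actually occur for a fixed query given the realized $X$, but marginally I will just bound $H(Y_i\mid m\ge i) \le h(q_i, \tfrac{1-q_i}{2}, \tfrac{1-q_i}{2})$ using concavity. The crucial observation, analogous to the ``at most $n - n/k$ same-group queries'' bound, is that $\sum_{i\ge 1} \bP[m\ge i] q_i \le 2$: for each fixed run, at most one query can return $P$ and at most one can return $S$ (once you learn the predecessor you stop, and likewise a $P$ or $S$ answer pins down the predecessor), so the expected number of queries returning $P$ or $S$ is at most $2$. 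Therefore, by Jensen's inequality and $g(x) := h(x, \tfrac{1-x}2, \tfrac{1-x}2)$ being increasing on $[0,\tfrac13]$,
\begin{align*}
\sum_{i\ge 1}\bP[m\ge i] H(Y_i\mid m\ge i) \le \bE[m]\, g\Bigl(\min\Bigl\{\tfrac13, \tfrac{2}{\bE[m]}\Bigr\}\Bigr) = (1+o(1))\bE[m],
\end{align*}
where the last step uses that $\bE[m] = \Omega(\log n)$ (which itself follows from the trivial bound $I \le \log 3 \cdot \bE[m]$ combined with the Fano lower bound, exactly as in Lemma~\ref{lemma:groupsort-lowerbound}), while $\sum_i \bP[m\ge i](1-q_i) h(p) \ge h(p)(\bE[m]-2) = (1-o(1)) h(p)\bE[m]$. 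Plugging in yields $I(X; Q^m, Y^m) \le (1+o(1))(1-h(p))\bE[m] = (1+o(1)) I(p)\bE[m]$, and comparing with the Fano bound gives $\bE[m] \ge (1-\delta-o(1))\frac{\log n}{I(p)}$.

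The main subtlety I expect is handling the ternary alphabet and the $P/S$ bookkeeping cleanly: one must argue carefully that $\sum_i \bP[m\ge i]q_i = O(1)$ (not growing with $n$), which relies on the fact that for any fixed true position, among all elements of $L$ exactly one is the predecessor and one the successor, so at most two distinct queries can ever return a special symbol in a single execution, and the algorithm gains no advantage from repeating such a query since the symbol is deterministic. Everything else is a routine transcription of the \GroupSorting{n}{k} argument with $n\log n$ replaced by $\log n$ and $n - n/k$ replaced by the constant $2$.
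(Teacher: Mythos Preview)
Your proposal follows essentially the same route as the paper's proof: Fano's inequality for the lower bound on $I(X;Q^m,Y^m)$, the same chain-rule decomposition, the observation that the expected number of queries returning a special symbol is $O(1)$ (the paper makes this a WLOG and gets $\le 1$), and then Jensen on the concave entropy function to conclude $(1+o(1))I(p)\bE[m]$.

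There is one technical slip. You write $H(Y_i\mid m\ge i)\le h\bigl(q_i,\tfrac{1-q_i}{2},\tfrac{1-q_i}{2}\bigr)$, treating $Y_i$ as living in a ternary alphabet $\{P\text{ or }S,0,1\}$. But marginally (over the random $X$) both $P$ and $S$ can occur for the same query, so $Y_i$ genuinely ranges over four symbols, and its entropy can exceed your ternary bound by as much as $q_i$ (take $\bP[P]=\bP[S]=q_i/2$). Collapsing $P$ and $S$ via data processing would lower $H(Y_i)$, which goes the wrong way for an upper bound. The paper instead keeps $q_i=\bP[Y_i=P]$ and $r_i=\bP[Y_i=S]$ separate and bounds
\[
H(Y_i\mid m\ge i)\le h\Bigl(\tfrac{q_i+r_i}{2},\tfrac{q_i+r_i}{2},\tfrac{1-q_i-r_i}{2},\tfrac{1-q_i-r_i}{2}\Bigr),
\]
then applies Jensen with $g(x)=h(x/2,x/2,(1-x)/2,(1-x)/2)$. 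Since $g(x)\to 1$ as $x\to 0$, the conclusion $(1+o(1))\bE[m]$ is unchanged, so this is a local fix with no effect on the rest of your argument.

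One more small point: your step ``$\bE[m]=\Omega(\log n)$ follows from the trivial bound $I\le \log 3\cdot \bE[m]$'' needs $\delta$ bounded away from $1$; the paper notes this explicitly and handles $\delta\ge 1-\Omega(1)$ as the trivial case.
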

\begin{proof}
Fix an algorithm $\cA$ for \OracleNoisyBinarySearch{n}. WLOG, we can assume that $\cA$ exits as soon as it sees a query that returns $P$ or $S$, i.e., it sees at most one such query result. Also, we assume $\delta \le 1-\Omega(1)$, as otherwise the lower bound is trivially $0$. 

Let $X \in \{0, \ldots, n\}$ be the index of the true predecessor of $x$. Let $m$ be the (random) number of queries made.
Let $Q^m$ be the queries made.
Let $Y^m$ be the returned values of the queries.
Let $\hat X$ be the most probable $X$ given all query results. Note that $\hat X$ is a function of $(Q^m, Y^m)$.

Clearly, $H(X) = \log(n + 1)$. By Fano's inequality, 
$H(X \mid \hat X) \le 1 + \delta \log(n)$.
Thus, 
\begin{align*}
I(X; Q^m, Y^m) \ge I(X; \hat X) = H(X) - H(X \mid  \hat X) \ge (1-\delta-o(1)) \log n.
\end{align*}
On the other hand, we also have 
$$I(X; Q^m, Y^m)  \le \sum_{i\ge 1} \bP[m\ge i] \left(H(Y_i \mid  m\ge i) - H(Y_i \mid  X, Q_i, m\ge i)\right)$$
using the same inequalities in the proof of Lemma~\ref{lemma:groupsort-lowerbound}.

Let $q_i = \bP[Y_i=P \mid  m\ge i]$ and $r_i = \bP[Y_i=S \mid  m\ge i]$. Because $\cA$ exits as soon as it sees $P$ or $S$ we have
\begin{align*}
\sum_{i\ge 1} \bP[m\ge i] (q_i+r_i) \le 1.
\end{align*}

Then
\begin{align*}
H(Y_i \mid  m\ge i) \le h\left(q_i,r_i, \frac{1-q_i-r_i}2, \frac{1-q_i-r_i}2\right)
\le 
h\left(\frac{q_i+r_i}{2},\frac{q_i+r_i}{2}, \frac{1-q_i-r_i}2, \frac{1-q_i-r_i}2\right)
\end{align*}
and
\begin{align*}
H(Y_i \mid  X, Q_i, m\ge i) = 0 \cdot (q_i+r_i) + h(p) (1-q_i-r_i).
\end{align*}

Note that $H(Y_i \mid  m\ge i) \le 2$ trivially, so $I(X; Q^m, Y^m) \le 2  \sum_{i \ge 1} \bP[m \ge i] = 2 \bE[m]$. Combining with $I(X; Q^m, Y^m) \ge (1-\delta-o(1)) \log n$, we have $\bE[m] = \Omega(\log n)$ (as we can assume $\delta \le 1-\Omega(1)$).

Note that $g(x) := h(\frac{x}{2},\frac{x}{2},\frac{1-x}2,\frac{1-x}2)$ is concave in $x$, so
\begin{align*}
\sum_{i\ge 1} \bP[m\ge i] H(Y_i \mid  m\ge i) &\le \left(\sum_{i\ge 1} \bP[m\ge i]\right) g\left(\frac{\sum_{i\ge 1} \bP[m\ge i] (q_i+r_i)/2}{\sum_{i\ge 1} \bP[m\ge i]}\right) \tag{Jensen's inequality}\\
& \le \bE[m] \cdot g\left(\min\left\{\frac 12, \frac{1/2}{\bE[m]}\right\}\right)  \tag{$g$ is increasing on $[0, \frac 12]$ and is maximum at $\frac 12$}\\
& = (1+o(1))  \cdot \bE[m].\tag{$\frac{1/2}{\bE[m]} = o(1)$}
\end{align*}

On the other hand,
\begin{align*}
\sum_{i\ge 1} \bP[m\ge i] H(Y_i \mid  X, Q_i, m\ge i) &= \sum_{i\ge 1} \bP[m\ge i] h(p) (1-q_i-r_i) \\
&= h(p) \left(\sum_{i\ge 1} \bP[m\ge i] - \sum_{i\ge 1} \bP[m\ge i] (q_i+r_i)\right) \\
&\ge h(p) (\bE[m] - 1) \\
&= (1-o(1)) h(p) \bE[m].
\end{align*}

Therefore,
\begin{align*}
I(X; Q^m, Y^m) &\le (1+o(1)) \bE[m] - (1-o(1)) h(p) \bE[m] \\
&\le (1+o(1)) I(p) \bE[m].
\end{align*}

Combining with $I(X; Q^m, Y^m) \ge (1-\delta-o(1)) \log n$, we get $\bE[m] \ge (1-\delta-o(1))\frac{\log n}{I(p)}$.
\end{proof}

\begin{lemma} \label{lemma:binarysearch-neighbor-lowerbound}

Fix any (deterministic or randomized) algorithm for \NoisyBinarySearch{n} with error probability $\delta\le 1/\log n$.
Let $L$ be the input list and $x$ be the input element of \NoisyBinarySearch{n}.  Let $U^{\approx}$ denote the number of noisy comparisons made between $x$ and it the predecessor and successor. 
Then $\bE[U^{\approx}] \ge (2-o(1)) \frac{\log \frac{1}{\delta}}{(1-2p) \log \frac {1-p}p}$, even if the position of the element to search for is uniformly random. 
\end{lemma}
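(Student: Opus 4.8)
The plan is to mirror the structure of the proof of Lemma~\ref{lemma:noisysort-samegroup-lowerbound}, but now the ``adjacent swap'' ambiguity concerns the true predecessor index rather than a permutation. Fix an algorithm $\cA$ with error probability $\delta \le 1/\log n$, and let the input position be uniformly random, so $X \in \{0, 1, \ldots, n\}$ is uniform and $H(X) = \log(n+1)$. Given a transcript $(Q^m, Y^m)$, let $q_j := \bP[X = j \mid Q^m, Y^m]$ be the posteriors. The key local observation is: if $j$ is the true predecessor index (so $a_j < x < a_{j+1}$) and $j'$ is the index obtained by ``moving $x$ to the other side of $a_j$'' or ``to the other side of $a_{j+1}$'', then the only queries whose likelihood ratio changes are those between $x$ and $a_j$ (resp.\ $x$ and $a_{j+1}$). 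Letting $W_j^{-}$ be (number of queries returning $a_j < x$) minus (number returning $a_j > x$) among queries comparing $x$ with $a_j$, and $W_j^{+}$ the analogous quantity for $a_{j+1}$, one gets, exactly as in Lemma~\ref{lemma:noisysort-samegroup-lowerbound} via Equation~\eqref{eq:post_proportion},
\begin{align*}
q_{j-1}/q_j = \left(\tfrac{1-p}{p}\right)^{-W_j^{-}}, \qquad
q_{j+1}/q_j = \left(\tfrac{1-p}{p}\right)^{-W_j^{+}},
\end{align*}
so that $q_{j-1} + q_j + q_{j+1} \ge q_j\big(1 + (\tfrac{1-p}{p})^{-W_j^{-}} + (\tfrac{1-p}{p})^{-W_j^{+}}\big)$. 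Note $U^{\approx} = \sum$ (queries comparing $x$ with $a_{X}$ or $a_{X+1}$), and $\bE[U^\approx] \cdot (1-2p) = \bE[W^{-}_X + W^{+}_X]$ where $W^{\pm}_X$ are the counts for the \emph{true} $X$.

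The argument then proceeds in two steps, just as before. First, a correctness step: since $\cA$ errs with probability $\delta$, Markov's inequality (applied to the indicator that $\hat X \ne X$, or more directly an averaging argument over transcripts) shows that with probability $\ge 1 - o(1)$ we have $q_X \ge 1 - \sqrt{\delta}$ or so, hence the posteriors on the two immediate neighbors of $X$ must be small, forcing $(\tfrac{1-p}{p})^{-W^{-}_X} + (\tfrac{1-p}{p})^{-W^{+}_X} \le O(\sqrt{\delta}/(1-\sqrt\delta))$, i.e.\ both $W^{-}_X$ and $W^{+}_X$ must be at least roughly $\tfrac{\log(1/\delta)}{\log((1-p)/p)}$ (up to lower-order corrections). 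The cleanest route is to observe each of the two terms is at most the sum, so $(\tfrac{1-p}{p})^{-W^{-}_X} \le O(\sqrt\delta)$ and likewise for $W^{+}_X$, giving $W^{-}_X + W^{+}_X \ge (2-o(1))\tfrac{\log(1/\delta)}{\log((1-p)/p)}$ with probability $1-o(1)$. Second, a converse step via Lemma~\ref{lemma:noisysort-samegroup-lowerbound-technical}: if $\bE[U^\approx] \le (1-\eta)\tfrac{2\log(1/\delta)}{(1-2p)\log((1-p)/p)}$ for some constant $\eta>0$, then applying the technical lemma to the random walk tracking $W^{-}_X + W^{+}_X$ as a function of the number of relevant queries (with $m = 2\log(1/\delta)/\log((1-p)/p)$, noting $\log(1/\delta) \to \infty$ since $\delta \le 1/\log n$), we find that with probability $\Omega(\eta) - o(1)$ the walk fails to reach $2\log(1/\delta)/\log((1-p)/p)$, contradicting the first step. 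Hence $\bE[U^\approx] \ge (2-o(1))\tfrac{\log(1/\delta)}{(1-2p)\log((1-p)/p)}$.

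The main obstacle I expect is making the first (correctness) step fully rigorous: one must pass from ``error probability $\delta$'' to a statement that the true posterior $q_X$ is close to $1$ with probability $1-o(1)$, and crucially obtain a \emph{separate} lower bound on each of $W^{-}_X$ and $W^{+}_X$ (not just their sum) so that the factor of $2$ comes out. Bounding $W^{-}_X + W^{+}_X$ from the sum of two exponentials is easy in one direction but I need the walk used in Lemma~\ref{lemma:noisysort-samegroup-lowerbound-technical} to genuinely be the sum of two independent pieces, so that its increments are still $\pm 1$ with probabilities $1-p, p$ at each \emph{relevant} query (a query comparing $x$ with either neighbor increments exactly one of $W^{-}_X, W^{+}_X$); this is why the factor is $2\log(1/\delta)$ rather than $\log(1/\delta)$. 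A secondary subtlety is handling the conditioning ``$q_\sigma \ge 1-\epsilon$'' uniformly: as in Lemma~\ref{lemma:noisysort-samegroup-lowerbound}, I would take $\epsilon \to 0$ slowly and absorb everything into the $o(1)$, using $\delta \le 1/\log n$ to ensure $\log(1/\delta)$ diverges so the $o_{|S|}(1)$-type error terms from the technical lemma vanish.
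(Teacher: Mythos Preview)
Your overall architecture---posterior ratios for the two neighboring indices, then a contradiction via Lemma~\ref{lemma:noisysort-samegroup-lowerbound-technical}---is exactly what the paper does. But your ``cleanest route'' contains a factor-of-two arithmetic slip that breaks the argument as written. From $(\tfrac{1-p}{p})^{-W_X^{-}}\le O(\sqrt\delta)$ you get only
\[
W_X^{-}\ \ge\ \frac{\log(1/\sqrt\delta)-O(1)}{\log\tfrac{1-p}{p}}\ =\ \Bigl(\tfrac12-o(1)\Bigr)\frac{\log(1/\delta)}{\log\tfrac{1-p}{p}},
\]
and likewise for $W_X^{+}$; summing gives $(1-o(1))\tfrac{\log(1/\delta)}{\log((1-p)/p)}$, not $(2-o(1))\tfrac{\log(1/\delta)}{\log((1-p)/p)}$. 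With that bound, the random-walk step cannot produce a contradiction, and you would end up proving only half of the desired lower bound on $\bE[U^\approx]$.

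There are two easy repairs. The first is to replace the Markov threshold $\sqrt\delta$ by $\delta^{1-\epsilon}$ for an arbitrary fixed $\epsilon>0$: since $\bE[1-q_X]\le \delta$, one has $\bP[q_X\ge 1-\delta^{1-\epsilon}]\ge 1-\delta^{\epsilon}=1-o(1)$, and now each exponent bound gives $W_X^{\pm}\ge (1-\epsilon-o(1))\tfrac{\log(1/\delta)}{\log((1-p)/p)}$, summing to $(2-2\epsilon-o(1))$; then take $\epsilon\to 0$. The second repair, which is what the paper actually does, avoids separating the two terms at all: by AM--GM (equivalently Jensen),
\[
q_{k-1}+q_{k+1}\ \ge\ 2\,q_k\Bigl(\tfrac{1-p}{p}\Bigr)^{-W/2},
\]
so a single bound $q_{k-1}+q_{k+1}\le O(\delta)$ (from Markov with threshold $\Theta(\delta)$) directly forces $W\ge (2-o(1))\tfrac{\log(1/\delta)}{\log((1-p)/p)}$ with the full factor $2$. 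In either case you must also choose the parameter $m$ in Lemma~\ref{lemma:noisysort-samegroup-lowerbound-technical} to be strictly below $2\log(1/\delta)/\log\tfrac{1-p}{p}$ (the paper takes $m=(2-\sigma/4)\log(1/\delta)/\log\tfrac{1-p}{p}$) so that ``$W\le m$'' genuinely contradicts the correctness step rather than merely touching it; your proposal with $m$ equal to the target would not yield a contradiction.
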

\begin{proof}

Let the input list contain elements $y_1, \ldots, y_n$ in this order. Let $k$ be the index of the predecessor of $x$.
Define $W_i$ to be the number of queries returning $x < y_i$, minus the number of queries returning $x > y_i$. Note that $W_i$ is a random variable depending on the transcript. We will first show that $\bE[U^{\approx} \mid k \not \in\{0, n\}] \ge (2-o(1)) \frac{\log \frac{1}{\delta}}{(1-2p) \log \frac {1-p}p}$. Notice that the error probability of the algorithm conditioned on $k \not \in\{0, n\}$ is at most $\frac{\delta}{\Pr[k \not \in\{0, n\}]} \le 1.1\delta$ for sufficiently large $n$.

Define $W = W_k + W_{k+1}$. 
Then $\bE[W] = (1-2p)\bE[U^{\approx}]$.
Let us consider the posterior probabilities $q$ of the distributions of $k$ given a fixed transcript.
That is, $$q_i := \bP[k=i \mid  Q^m, Y^m].$$
Then if $k \not \in \{0, n\}$, similarly to the proof of Lemma~\ref{lemma:noisysort-samegroup-lowerbound}, we have
$$q_{k-1} = q_k \cdot \left(\frac{1-p}{p} \right)^{-W_k} \text{\quad and\quad } q_{k+1} = q_k \cdot \left(\frac{1-p}{p} \right)^{-W_{k+1}}$$

Thus, by Jensen's inequality, 
\begin{align*}
q_{k-1}+q_{k+1} \ge q_k \left(\left(\frac{1-p}{p} \right)^{-W_k} + \left(\frac{1-p}{p} \right)^{-W_{k+1}}\right) \ge 2q_k \left(\frac{1-p}p\right)^{-W/2}.
\end{align*}

Suppose for the sake of contradiction that 
\begin{align*}
\bE[U^{\approx} \mid k \not \in \{0, 1\}] \le (2-\sigma)\frac{\log \frac{1}{\delta}}{(1-2p) \log\frac{1-p}p}
\end{align*}
for some constant $\sigma > 0$. 
Then by Lemma~\ref{lemma:noisysort-samegroup-lowerbound-technical} (taking $U^{\approx}$ as $\tau$, $W$ as $X_\tau$, $m$ as $\frac{(2-\sigma/4) \log \frac{1}{\delta}}{\log\frac{1-p}p}$, $\delta$ as $1-\frac{2-\sigma}{2-\sigma/4} \ge \sigma/4$), with probability $\ge \sigma/8-o_{1/\delta}(1)$, we have
\begin{align*}
W \le \frac{(2-\sigma/4) \log \frac{1}{\delta}}{\log\frac{1-p}p},
\end{align*}
and then
\begin{align*}
2\left(\frac{1-p}p\right)^{-W/2} \ge 2\delta^{1-\sigma/8}.
\end{align*}

Because the error probability is $\le 1.1\delta$, with probability at least $1-\frac{1.1\delta}{10\delta/\sigma} \ge 1-\frac{\sigma}{9}$, we have $1-q_k \le 10\delta/\sigma$, which leads to $q_{k-1}+q_{k+1} \le 10\delta/\sigma$. Therefore, $$2\left(\frac{1-p}p\right)^{-W/2} \le \frac{q_{k-1}+q_{k+1}}{q_k} \le \frac{10\delta/\sigma}{1-10\delta/\sigma} < 20\delta/\sigma.$$
(In the last step we use $\delta \le 1/\log n$ and thus $1-10\delta/\sigma > \frac{1}{2}$ for sufficiently large $n$.)
This is a contradiction because $2\delta^{1-\sigma/8} \ge 20\delta/\sigma$ and $\frac{\sigma}{8} + \left(1-\frac{\sigma}{9}\right) - o_{1/\delta}(1) > 1$ when $\delta \le 1/\log n$ and $n$ is large enough.

Thus, 
\begin{align*}
\bE[U^{\approx} \mid k \not \in \{0, n\}] \ge (2-o(1))\frac{\log \frac{1}{\delta}}{(1-2p) \log\frac{1-p}p}.
\end{align*}
Consequently, 
\begin{align*}
\bE[U^{\approx}] \ge \bE[U^{\approx} \mid k \not \in \{0, n\}] \cdot \Pr[k \not \in \{0, n\}] \ge (2-o(1))\frac{\log \frac{1}{\delta}}{(1-2p) \log\frac{1-p}p}.
\end{align*}
\end{proof}

Now we are ready to prove Theorem~\ref{thm:binarysearch-lowerbound}.
\begin{proof}[Proof of Theorem~\ref{thm:binarysearch-lowerbound}]
If $\delta \ge 1/\log n$, then by Lemma \ref{lemma:binary-search-to-oracle-binary-search},  \ref{lemma:oracle-binary-search-lowerbound},
\begin{align*}
    \bE[U] &\ge \bE[U^{\approx}] \\
    &\ge (1-\delta-o(1)) \frac{\log n}{I(p)}\\
    &\ge (1-o(1)) \left((1-\delta)\frac{\log n}{I(p)} + \frac{2 \log \frac 1\delta}{(1-2p)\log\frac {1-p}p}\right).
\end{align*}

If $\delta \le 1/\log n$, then by combining Lemma \ref{lemma:binary-search-to-oracle-binary-search},  \ref{lemma:oracle-binary-search-lowerbound}, \ref{lemma:binarysearch-neighbor-lowerbound}, for any algorithm, the number of queries $U$ satisfies
\begin{align*}
    \bE[U] &\ge \bE[U^{\approx}] + \bE[U^{\not \approx}]\\
    &\ge (1-\delta-o(1)) \frac{\log n}{I(p)} + (2-o(1)) \frac{\log \frac 1\delta}{(1-2p) \log \frac{1-p}p} \\
    &\ge (1-o(1)) \left((1-\delta)\frac{\log n}{I(p)} + \frac{2 \log \frac 1\delta}{(1-2p)\log\frac {1-p}p}\right).
\end{align*}
\end{proof}

\subsection{Upper Bound}

In this section we prove our upper bound for Noisy Binary Search.
\SearchUpperBound*

\begin{proof}

First, if $\delta > \frac{1}{\log n}$, we output an arbitrary answer and halt with probability $\delta - \frac{1}{\log n}$.
Otherwise, we call Theorem~\ref{thm:BHbinary} with error probability  $\frac{1}{\log n}$. By union bound, the overall error probability is at most $\delta$. The expected number of noisy comparisons is $$\left(1-\delta +  \frac{1}{\log n} \right) \cdot (1+o(1)) \left( \frac{\log n + O(\log \log n)}{I(p)}\right)=(1+o(1))(1-\delta) \frac{\log n}{I(p)}$$
as required. In the following, we assume $\delta \le \frac{1}{\log n}$. We also assume $\log n \ge 4$ for simplicity, as we could pad dummy elements.

Let  $x$ be the element for which we need to find the predecessor. 
First, we use Theorem~\ref{thm:BHbinary} with error probability $\frac{1}{\log n}$ to find a candidate predecessor, which takes $(1+o(1))\left(\frac{\log n + O(\log \log n)}{I(p)} \right) = (1+o(1))\cdot \frac{\log n}{I(p)}$ time. Then we use Lemma~\ref{lem:repeated_query} to compare $x$ with this candidate predecessor $l$ and with the next element of this candidate predecessor $r$, with error probability $\delta/4$. This takes $(2+o_{4/\delta}(1))\cdot \frac{ \log (4/\delta)}{(1-2p) \log((1-p)/p)} = (2+o(1))\cdot \frac{ \log (1/\delta)}{(1-2p) \log((1-p)/p)}$ time.
If the comparison results are $l < x$ and $x < r$, we return $l$ as the predecessor of $x$, otherwise, we restart. See Algorithm~\ref{algo:noisy_binary_search}. 

\begin{algorithm}[h]
\caption{} \label{algo:noisy_binary_search}
\begin{algorithmic}[1]
\Procedure{NoisyBinarySearch}{$n, A, x, \delta$}
\While{true}
\State Use Theorem~\ref{thm:BHbinary} with error probability $\frac{1}{\log n}$ to search the predecessor $l$ of $x$ in $A \cup \{-\infty\}$.
\State $r \gets $ next element of $l$ in $A \cup \{\infty\}$.
\If{\textsc{LessThan}($l, x,  \delta/4$) and \textsc{LessThan}($x, r,  \delta/4$)}
\Return $l$.
\EndIf
\EndWhile
\EndProcedure
\end{algorithmic}
\end{algorithm}

Let $\bP[\text{restart}]$ be the probability that we restart a new iteration of the while loop. Clearly, if the call to Theorem~\ref{thm:BHbinary} and both calls to $\textsc{LessThan}$ are correct, then we will not restart. Thus, $\bP[\text{restart}] \le \frac{1}{\log n} + \frac{\delta}{2} \le \frac{2}{\log n}$. Note that as $\log n \ge 4$, $\bP[\text{restart}] \le \frac{1}{2}$. 

Let $P_e$ be the probability that this algorithm returns the wrong predecessor. In each iteration of the while loop, the algorithm returns the wrong answer only if at least one of the two calls to $\textsc{LessThan}$ are incorrect. Therefore, 
$P_e \le \frac{\delta}{2} + \bP[\text{restart}] P_e$, so $P_e \le \frac{\delta / 2}{1 - \bP[\text{restart}]} \le \delta$.

Let $\bE[Q]$ be the expected number of queries that this algorithm makes. Since the call to Theorem~\ref{thm:BHbinary} uses $(1+o(1)) \left(\frac{\log n + O(\log \log n)}{I(p)}\right) \le (1+o(1)) \frac{\log n}{I(p)}$ noisy comparisons in expectation, and each call to $\textsc{LessThan}$ uses $(1+o_{1/\delta}(1)) \frac{\log (4/\delta)}{(1-2p) \log \frac{1-p}{p}} \le (1+o(1)) \frac{\log(1/\delta)}{(1-2p) \log \frac{1-p}{p}}$ noisy comparisons in expectation (as $\delta \le \frac{1}{\log n}$), we get that 
$$\bE[Q] \le (1+o(1))\left( \frac{\log n}{I(p)} + \frac{ 2\log (1/\delta)}{(1-2p) \log \frac{1-p}{p}}\right) + \bE[Q] \cdot \bP[\text{restart}].$$ 
Therefore, 
\begin{align*}
\bE[Q] &\le (1+o(1)) \cdot \frac{1}{1-\frac{2}{\log n}} \cdot \left( \frac{\log n}{I(p)} + \frac{ 2\log (1/\delta)}{(1-2p) \log \frac{1-p}{p}}\right)\\
&\le (1+o(1)) \left( \frac{\log n}{I(p)} + \frac{2 \log (1/\delta)}{(1-2p) \log \frac{1-p}{p}}\right),
\end{align*}
as desired.
\end{proof}

\section{Discussions}
In this section we discuss a few possible extensions of our results.

\paragraph{Varying $p$.}
In our main results for Noisy Sorting, we have assumed that $p$ remains constant as $n\to \infty$. This assumption is necessary in several places of our proofs as discussed below and we leave it as an open problem to drop this assumption. 

For $p\to \frac 12$ as $n\to \infty$, Theorem~\ref{thm:sort-upperbound} still holds. For the proof, one needs to suitably modify Corollary~\ref{cor:simple_sort}, Lemma~\ref{lem:weak_sort}, Lemma~\ref{lem:variance_bound}, Corollary~\ref{coro:safe_algo}, Lemma~\ref{lem:safe_simple_sort} to handle $p\to \frac 12$ correctly. For example, in our current statement of Corollary~\ref{coro:safe_algo} we ignore constants related to $p$ in the expressions for $\Var[\tau]$. By choosing $m$ in Lemma~\ref{lem:variance_bound} carefully, we can let $\Var[\tau] = (\bE[\tau])^2 \polylog(n)$ even for growing $p$, which suffices for the concentration results.
For the lower bound (Theorem~\ref{thm:sort-lowerbound}), Lemma~\ref{lemma:groupsort-lowerbound} may become problematic. 
It seems possible to modify the proof of Lemma~\ref{lemma:groupsort-lowerbound} to handle the case $I(p) = n^{-o(1)}$.
However, for $I(p) = n^{-\Omega(1)}$ a stronger method is needed.

For $p\to 0$ as $n\to \infty$, the lower bounds hold without any change.
For the algorithms, our derandomization method would result in too many queries when $p=O(1/\log n)$. 

We similarly leave the task of dropping the assumption that $p$ remains a constant in our Noisy Binary Search results as an open problem. 

\paragraph{Non-uniform $p$ and semi-random adversary.}
In our main results we have assumed that the probability of error in the comparisons remains constant.
One possible extension is that in each noisy comparison, the result is flipped with some probability at most $p$, where this probability can be adversarially chosen for each query.
(The algorithm knows $p$ but does not know the flip probability for individual comparisons.)
This model is sometimes called semi-random adversary in literature (e.g., \cite{mao2018minimax}). It is an interesting open problem to generalize our results to this model.

\paragraph{Dependency on $\delta$.}
In our bounds for Noisy Sorting, we simply used $o(1)$ as our error probability. We leave it as an open problem to generalize our bounds to the case where the error probability $\delta$ can also be specified. 

\paragraph{BMS observation.}
One natural extension is to replace $\BSC$ observation noise with a Binary Memoryless Symmetric channel (BMS).
That is, for every noisy comparison, the probability of error is chosen (independently) from a distribution $W$ on $[0,\frac 12]$, and this probability is returned together with the comparison result.
In this case our guess is that
\begin{align*}
(1\pm o(1)) \left( \frac{1}{\bE_{p\sim W}[I(p)]} + \frac 1{\bE_{p\sim W} [(1-2p)\log \frac{1-p}p]}\right) n\log n
\end{align*}
noisy comparisons should be necessary and sufficient.
When $W$ is supported on $[\epsilon,1/2-\epsilon]$ for some constant $\epsilon > 0$ independent of $n$, we expect it to be easier to adapt our proofs to BMS, whereas in the general case, issues described in the Varying $p$ discussion above can occur.

\bibliographystyle{alpha}
\bibliography{ref}
\end{document}